\newif\ifTR
\newif\ifPosting
\newif\ifOldAppendix
  \newcommand{\refappendix}[1]{in \appref{#1}}
  \newcommand{\refappendix}[1]{in Appendix~\ref{TR-app:#1} of the supplementary material}
\newcommand{\IfTR}[2]{\ifTR{#1}\else{#2}\fi}
\newcommand{\SQUEEZE}[1]{}
\newlength{\saveboxsep}
\long\def\optpair@[#1][#2]{\tuple{#1, #2}}
\newcommand{\ToolName}{Divva}
\newcommand{\Tool}{\textsc{\ToolName}}
\renewcommand{\vec}[1]{\overline{#1}}
\newcommand{\unfold}{\stackrel{\rightsquigarrow}{\textrm{\tiny\vphantom{u}\smash{unfold}}}}
\newcommand{\fold}{\stackrel{\leftsquigarrow}{\textrm{\tiny\vphantom{u}\smash{fold}}}}
\newcommand{\inconc}[3][]{#2 \in \AIconcfn_{#1}(#3)}
\newcommand{\jinclusion}[4][]{#2 \vdash #3 \sqsubseteq_{#1} #4}
\newcommand{\jjoin}[5]{#1, #2 \vdash #3 \AIjoin #4 \hookrightarrow #5}
\newcommand{\jpvaluacmd}[2]{#1 \vdash #2}
\newcommand{\bind}{\operatorname{bind}}
\newcommand{\absof}[1]{\ensuremath{\widehat{\smash{#1}}}}
\newcommand{\progof}[1]{\ensuremath{\check{\smash{#1}}} }
\newcommand{\absofset}[1]{\ensuremath{\hat{#1}}}
\newcommand{\disj}{\operatorname{disj}}
\newcommand{\disji}{i}
\newcommand{\fmtkw}[1]{\mathbf{#1}}
\newcommand{\var}{x}
\newcommand{\varx}{x}
\newcommand{\vary}{y}
\newcommand{\expr}{e}
\newcommand{\exprset}{\SetName{Expr}}
\newcommand{\fld}{f}
\newcommand{\cmd}{c}
\newcommand{\pcmd}{\progof{\cmd}}
\newcommand{\ploc}{\ell}
\newcommand{\plocset}{\SetName{ProgLoc}}
\newcommand{\ploccmd@}{}
\long\def\ploccmd@[#1][#2]{#1 \colon #2}
\newcommand{\ploccmd}{\optparams{\ploccmd@}{[\ploc][\pcmd]}}
\newcommand{\pcassumekw}{\fmtkw{assume}}
\newcommand{\pcunfoldkw}{\fmtkw{unfold}}
\newcommand{\pcjoinkw}{\fmtkw{join}}
\newcommand{\pcassume}[2]{\pcassumekw\;#2\;\{#1\}}
\newcommand{\pcunfold}[2]{\pcunfoldkw\;#2\;\{#1\}}
\newcommand{\pcjoin}[3]{\pcjoinkw(#2; #3)\;\{#1\}}
\newcommand{\pejoin}[2]{\phi(#1,#2)}
\newcommand{\cenv}{\eta}
\newcommand{\cmem}{\sigma}
\newcommand{\cmemset}{\SetName{Mem}}
\newcommand{\caddr}{a}
\newcommand{\caddrset}{\SetName{Addr}}
\newcommand{\aaddr}{\absof{\caddr}}
\newcommand{\cval}{v}
\newcommand{\cvalset}{\SetName{Val}}
\newcommand{\aval}{\absof{\cval}}
\newcommand{\avalset}{\absofset{\cvalset}}
\newcommand{\cvalua}{\pi}
\newcommand{\pvalua}{\progof{\cvalua}}
\newcommand{\cvaluaset}{\SetName{Valua}}
\newcommand{\cmemvalua}{\optparams{\optpair@}{[\cmem][\cvalua]}}
\newcommand{\aenv}{\absof{\cenv}}
\newcommand{\amem}{\absof{\cmem}}
\newcommand{\amemset}{\absofset{\cmemset}}
\newcommand{\astore}{\absof{\rho}}
\newcommand{\astoreset}{\absofset{\SetName{Store}}}
\newcommand{\afinstoreset}{\absofset{\SetName{S}}}
\newcommand{\apure}{\absof{\cvalua}}
\newcommand{\apuredom}{\absofset{\SetName{\cvaluaset}}}
\long\def\astorepure@[#1][#2]{#1 \land #2}
\newcommand{\astorepure}{\optparams{\astorepure@}{[\astore][\apure]}}
\newcommand{\aview}{\absof{\omega}}
\newcommand{\aviewset}{\absofset{\SetName{View}}}
\newcommand{\astoreview@}{}
\long\def\astoreview@[#1][#2]{#1 \landtrue #2}
\newcommand{\astoreview}{\optparams{\astoreview@}{[\astore][\aview]}}
\newcommand{\avaluatr}{\Psi}
\newcommand{\chker}{k}
\newcommand{\achkercall}[1][]{\icall{\aaddr_{#1}}{\chker}{\vec{\aval_{#1}}}}
\newcommand{\achkreg}{\kappa}
\newcommand{\amultichkreg}{\varepsilon}
\long\def\astoremulti@[#1][#2]{#1 \dashv #2}
\newcommand{\astoremulti}{\optparams{\astoremulti@}{[\astore][\amultichkreg]}}
\newcommand{\ichk}[1]{#1\textsf{sc}}
\newcommand{\iseg}[2]{#1#2\textsf{segsc}}
\newcommand{\fmtvar}[1]{\mathtt{#1}}
\newcommand{\fmtfld}[1]{\mathtt{#1}}
\newcommand{\fmtidef}[1]{\mathsf{#1}}
\newcommand{\icall}[3]{#1 \cdot #2(#3)}
\newcommand{\pt}{\ptsto}
\newcommand{\fldoff}[2]{#1 \cdot #2}
\newcommand{\fldpt}[3]{\fldoff{#1}{#2} \pt #3}
\newcommand{\nullval}{\mathsf{null}}
\newcommand{\bst}{\text{\colordullx{0}$\fmtidef{bst}$}}
\newcommand{\bstsc}{\text{\colordullx{1}$\fmtidef{bstsc}$}}
\newcommand{\hashtrie}{\fmtidef{hashtrie}}
\newcommand{\bstcall}[3]{\icall{#1}{\bst}{#2, #3}}
\newcommand{\bstsccall}[5]{\icall{#1}{\bstsc}{#2, #3, #4, #5}}
\newcommand{\sym}[1]{\absof{\mathit{#1}}}
\newcommand{\symt}{\sym{t}}
\newcommand{\symn}{\sym{n}}
\newcommand{\fmthypp}[1]{\text{\colordullx{3}$#1$}}
\newcommand{\fmtchkp}[1]{#1}
\newcommand{\symmin}{\fmtchkp{\sym{min}}}
\newcommand{\symmax}{\fmtchkp{\sym{max}}}
\newcommand{\symhmin}{\fmthypp{\sym{hmin}}}
\newcommand{\symhmax}{\fmthypp{\sym{hmax}}}
\newcommand{\symu}{\sym{u}}
\newcommand{\symumin}[1][]{\MVarNamed{\sym{u}}{min$#1$}}
\newcommand{\symumax}[1][]{\MVarNamed{\sym{u}}{max$#1$}}
\newcommand{\syma}{\aaddr}
\newcommand{\symv}{\aval}
\newcommand{\symvninf}{\symv_{\operatorname{-}\infty}}
\newcommand{\symvpinf}{\symv_{\infty}}
\newcommand{\syml}{\sym{l}}
\newcommand{\symr}{\sym{r}}
\newcommand{\fldv}{\fmtfld{v}}
\newcommand{\fldl}{\fmtfld{l}}
\newcommand{\fldr}{\fmtfld{r}}
\newcommand{\varc}{\fmtvar{c}}
\newcommand{\vart}{\fmtvar{t}}
\newcommand{\varp}{\fmtvar{p}}
\newcommand{\varoldv}{\fmtvar{oldv}}
\newcommand{\varhmin}{\fmtvar{hmin}}
\newcommand{\varhmax}{\fmtvar{hmax}}
\newcommand{\varmin}{\fmtvar{min}}
\newcommand{\varmax}{\fmtvar{max}}
\newcommand{\fldderef}[2]{\fmtvar{#1}\mathord{\texttt{->}}#2}
\tikzstyle{nd}=[circle, draw, inner sep=1pt, minimum size=5mm, font=\footnotesize, text centered]
\tikzstyle{chk}=[-stealth, line width=2pt]
\tikzstyle{fld}=[->, line width=0.4pt]
\tikzstyle{ebelowleft}=[pos=0,below right, font=\footnotesize]
\tikzstyle{ebelowright}=[pos=1,below left, font=\footnotesize]
\tikzstyle{ebelowmid}=[pos=0.5,below, font=\footnotesize]
\tikzstyle{eabovemid}=[pos=0.5,above, font=\footnotesize]
\tikzstyle{nabove}=[above=2.5mm, font=\scriptsize]
\tikzstyle{nbelow}=[below=2.5mm, font=\small]
\tikzstyle{levelseg}=[level distance=42mm]
\tikzstyle{levelchk}=[level distance=18mm]
\tikzstyle{levelfld}=[level distance=15mm]
\tikzstyle{leveldata}=[level distance=15mm]
\tikzstyle{edge from parent}=[draw,fld]
\tikzstyle{rect}=[rectangle,thick,minimum size=0.7cm,draw=blue!80]
\tikzstyle{redrect}=[rectangle,thick,minimum size=0.7cm,draw=red!80]
\begin{document}


\title{Synthesizing Short-Circuiting Validation of Data Structure Invariants}

\authorinfo{
\begin{tabular}[t]{c}
{\large Yi-Fan Tsai\and Devin Coughlin\and Bor-Yuh Evan Chang} \\[3pt]
{\normalsize University of Colorado Boulder} \\
{\normalsize\textsf{\{yifan.tsai,devin.coughlin,evan.chang\}@colorado.edu}}
\end{tabular}
\and
\begin{tabular}[t]{c}
{\large Xavier Rival} \\[3pt]
{\normalsize INRIA/ENS/CNRS Paris} \\
{\normalsize\textsf{xavier.rival@ens.fr}}
\end{tabular}
}{}{}

\maketitle

\begin{abstract}

This paper presents \emph{incremental verification-validation}, a novel approach for checking rich data structure invariants expressed as separation logic assertions. Incremental verification-validation combines static verification of separation properties with efficient, \emph{short-circuiting} dynamic validation of arbitrarily rich data constraints. A data structure invariant checker is an inductive predicate in separation logic with an executable interpretation; a short-circuiting checker is an invariant checker that stops checking whenever it detects at \emph{run time} that an assertion for some sub-structure has been fully proven \emph{statically}. At a high level, our approach does two things: it statically proves the separation properties of data structure invariants using a static shape analysis in a standard way but then leverages this proof in a novel manner to synthesize short-circuiting dynamic validation of the data properties. As a consequence, we enable dynamic validation to make up for imprecision in sound static analysis while simultaneously leveraging the static verification to make the remaining dynamic validation efficient. We show empirically that short-circuiting can yield asymptotic improvements in dynamic validation, with low overhead over no validation, even in cases where static verification is incomplete.

\end{abstract}

\category{D.2.4}{Software Engineering}{Software/Program Verification}
\category{F.3.1}{Logics and Meanings of Programs}{Specifying and Verifying and Reasoning about Programs}

\keywords
short-circuiting validation;
incremental verification-validation;
subtraction-directed synthesis;
data structure invariants;
separation logic;
shape analysis

\section{Introduction}
\label{sec:introduction}


We consider the problem of efficient and safe dynamic validation of deep, program-specific data structure invariants.  Such invariants combine
pointer-shape 
properties (e.g., ``\code{l} is an acyclic, singly-linked list'')
with
data-value properties (e.g., ``the integer data elements of \code{l}
are sorted in non-decreasing order'').
Combined
shape-data properties are both difficult for the programmer to maintain and
challenging for fully-automatic static analysis to verify.
While modern shape analyzers can reason effectively about pointer-shape properties---especially for structures with limited sharing, such as lists and
trees---shape-\emph{data}
analyzers are limited by their dependency on and interaction with base
data-value abstract domains or
solvers\citeverbelse{chang+2008:relational-inductive,
  magill+2010:automatic-numeric, DBLP:conf/sas/McCloskeyRS10,
  DBLP:conf/pldi/BouajjaniDES11, gopan+2005:framework-numeric,
  gulwani+2008:lifting-abstract}{chang+2008:relational-inductive,
  magill+2010:automatic-numeric, DBLP:conf/sas/McCloskeyRS10,
  DBLP:conf/pldi/BouajjaniDES11}.

\newcommand{\Unsafe}{noV}
\newcommand{\Noninc}{dynV}
\newcommand{\Inc}{iVV}

\newcommand{\liststruct}{\texttt{olist}}
\newcommand{\listdrop}{\liststruct{} $m$ \texttt{drop}s}
\newcommand{\listconcat}{\liststruct{} $m$ \texttt{concat}s}
\newcommand{\listinsert}{\liststruct{} $m$ \texttt{insert}s}
\newcommand{\listdelete}{\liststruct{} $m$ \texttt{delete}s}
\newcommand{\listinsdel}{\liststruct{} 50-50}
\newcommand{\bststruct}{\texttt{bst}}
\newcommand{\bstdeletemin}{\bststruct{} $m$ \texttt{deletemin}s}
\newcommand{\bstexciseroot}{\bststruct{} $m$ \texttt{exciseroot}s}
\newcommand{\bstinsert}{\bststruct{} $m$ \texttt{insert}s}
\newcommand{\bstdelete}{\bststruct{} $m$ \texttt{delete}s}
\newcommand{\bstinsdel}{\bststruct{} 50-50}
\newcommand{\treapstruct}{\texttt{treap}}
\newcommand{\treapdelete}{\treapstruct{} $m$ \texttt{delete}s}
\newcommand{\treapinsert}{\treapstruct{} $m$ \texttt{insert}s}
\newcommand{\treapinsdel}{\treapstruct{} 50-50}
\newcommand{\hashtriestruct}{\texttt{hashtrie}}
\newcommand{\hashtriewrite}{\hashtriestruct{} $m$ \texttt{write}s}
\newcommand{\hashtrieinsert}{\hashtriestruct{} $m$ \texttt{insert}s}
\newcommand{\hashtriedelete}{\hashtriestruct{} $m$ \texttt{delete}s}
\newcommand{\hashtrieinsdel}{\hashtriestruct{} 50-50}

\begin{figure}\centering\small
\includegraphics[width=0.85\linewidth]{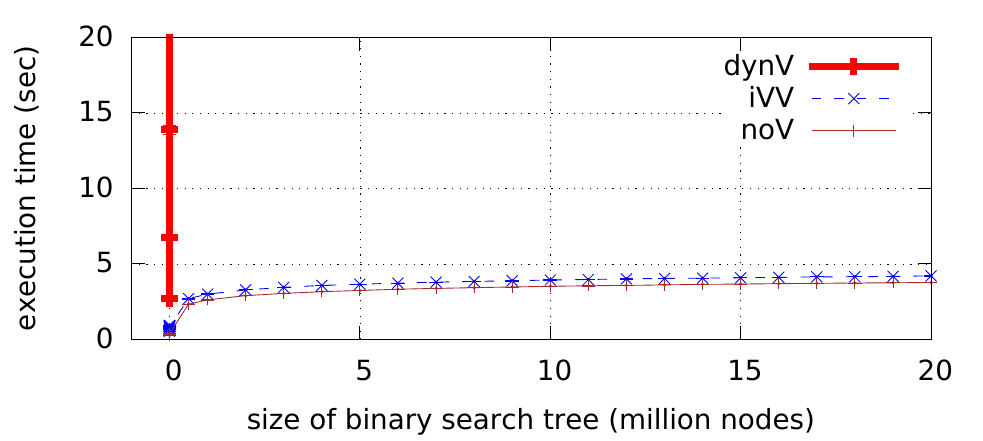}
\caption{Frequent, dynamic validation does not scale. This plot shows execution times for a binary search tree
  workload that performs a million random deletions, each followed by inserting
  the deleted value again over increasingly-sized trees.
  The \textbf{\Noninc{}} line applies dynamic validation after each
  delete-insert pair; \textbf{\Unsafe{}} has no validation; and
  \textbf{\Inc{}} validates after each pair using short-circuiting---this paper.
  The smallest size
  in all configurations is 100 nodes.  The size on the \Noninc{} line
  goes up by 100 at each step.
   }
\label{fig:unscalable}
\end{figure}

Static verification has always held the promise to improve software by
definitively ruling out entire classes of bugs before execution. But this promise
holds for a particular
program of interest only if the developer is able to obtain a full verification
for it. 
Dynamic validation compensates for the
shortcomings of static verification by trapping unsafe operations at
run time (e.g., by halting execution or raising an exception).
Several
systems\citeverbelse{DBLP:conf/sac/FahndrichBL10,barnett+2011:specification-verification:,burdy+2005:overview-tools,DBLP:conf/oopsla/ReichenbachISAG10} {DBLP:conf/sac/FahndrichBL10,barnett+2011:specification-verification:,DBLP:conf/oopsla/ReichenbachISAG10}
have made run-time checks easier to apply; however, run-time
enforcement of data structure invariants, in particular, have remained problematic
for two main reasons:

\begin{asparadesc}
\item[Checking Overhead.]
The overhead of run-time checking can be
  significant 
  because a data structure invariant is a universal
  property over a set of objects---and thus checking
  the invariant is usually linear-time in the size of the structure.
  Since adding dynamic validation can 
  change the
  asymptotic complexity of data structure operations, validation can cause
  dramatic slowdowns.
  \Figref{unscalable} shows such slowdowns for a 
  workload of binary
  search tree operations on increasing data structure sizes. The
  dynamic validation variant (\Noninc{}) can only reach a binary
  search tree with 300 nodes in 15 seconds, while the no validation
  variant (\Unsafe{}) can reach 20,000,000 nodes in 5 seconds.
  Because of the linear-time complexity of validation, the
  \Noninc{} variant is $O(n)$, while the \Unsafe{} variant is $O(\lg n)$
  where $n$ is the size of the tree.
  Thus, frequent checking of data
  structure invariants can be infeasible even in debugging
  scenarios.
\item[Additional Risk of Memory Faults.] Run-time checking may
  introduce new, unexpected pointer bugs.  Traversing pointers in a
  malformed data structure
  could lead to unanticipated bad dereferences (e.g., on null or
  freed pointers) or even non-termination (e.g., on unexpected cyclic structures).
  Thus, such checking may introduce unacceptable risk in
  deployment scenarios.
\end{asparadesc}
%

\PUNCH{Need: Dynamic validation may be fine for div-by-zero,
  null-pointer dereference, array-bounds, etc. but not complex data
  structure invariants. We would like programmers to sprinkle
  assertions throughout for safety but ...}

In this paper, we address these concerns by applying
static shape analysis to%
\begin{inparaenum}[(1)]
\item synthesize short-circuiting dynamic data structure
  validation
  to reduce run-time checking overhead
  and
\item statically prove separation properties
to eliminate the risk of memory faults.
\end{inparaenum}

\PUNCH{The run-time checking overhead has been
  addressed by trading-off by storage overhead.  Separation properties
  has been checked dynamically with some storage overhead and modified
  run-time. We have no heap storage overhead.}

\paragraph{Contributions.}

This paper is a response to the inevitable
imprecision of static shape-data analysis.  While a static analysis can
always be made precise enough to verify any given set of benchmarks
(with an appropriate amount of research, development, and specification effort),
no
analysis can ever be precise enough for an ever growing set.  Our
approach is a step towards providing the benefit of \emph{incremental verification-validation} even in the face
of imprecise invariant inference or insufficient specification.
To clearly
distinguish when some part of a data
structure invariant is checked, we reserve \emph{verification} for
static checking and \emph{validation} for dynamic checking.

In \secref{shortcircuit}, we discuss the notion of \emph{short-circuiting invariant checkers}:
  validation code that leverages a hypothetical fact to short-circuit
  dynamic validation of a data structure invariant.
  Short-circuit validation is the result of a simple but key observation: when asked to prove $P(x)$, we can decompose this proof into proving, for example, $P(y)$ for some $y$ and separately that $x = y$.
  This observation enables us to
  connect facts that we verify statically (e.g., $P(y)$)
  and facts that we need to validate dynamically (e.g., $x = y$).
  In particular, we get the benefit of short-circuiting validation when we carefully choose the facts to validate dynamically to be fast and cheap to check yet helpful in making the static verification more feasible. A significant challenge in this work is that $P$ is inductive and effective short-circuiting can only be obtained when static verification and dynamic validation are interleaved in the recursion.

To do so, we make the following technical contributions:
\begin{itemize}\itemsep 0pt
\item We describe a static shape-data analysis needed to synthesize short-circuiting validation (\secref{static-analysis}) driven by the aforementioned proof decomposition. In particular, we introduce the \emph{validated view} abstract domain, which augments the static analysis with the ability to track arbitrarily complex, uninterpreted data-value facts (e.g., to statically prove $P(y)$).  Then, we describe a code instrumentation that reifies, as programs variables, the existential logic variables in the statically-inferred shape-data invariants. Conceptually, this \emph{logic-variable reification} provides hooks at run time to the proof constructed by the static shape-data analysis (e.g., to dynamically validate $x = y$).

%
%

\item We introduce a \emph{subtraction-directed synthesis} approach to generate both
  short-circuiting checkers
  and the calls to them, using invariants inferred by the static shape-data analysis
  and reified logic variables
  (\secref{dynamic-validation}).  The result is
  incrementalized dynamic validation of inductive data
  structure invariants with no run-time memoization.
  
\item We empirically evaluate our approach by
  applying a prototype tool (called \Tool{}) to
  verification and validation of operations with
  complex shape-data invariants (\secref{empirical-evaluation}).  We
  see evidence of asymptotic improvements in dynamic validation
  cost---for example, the \Inc{} line in
  \figref{unscalable}.
  The improvements hold even for invariants with data
  properties beyond the reach of most static verifiers (e.g.,
  involving hashing and bit masking).
\end{itemize}

Our mixed static-dynamic technique differs from
prior fully dynamic, run-time approaches to reducing checking overhead.
Ditto~\cite{shankar+2007:ditto:-automatic} incrementalizes data structure validation checks by
memoizing validation calls and then only executing
subsequent checks for the objects for which the memoized checks have
been invalidated (e.g., by writes). This fully dynamic approach essentially
trades off the execution overhead of validation checks for the memory
overhead of maintaining a memo table in a shadow heap.
In contrast, our approach leverages a static proof of
separation properties to synthesize 
incremental dynamic data structure validation without the memory
overhead of memoization or the execution overhead of write barriers.

\section{Goal: Short-Circuiting Validation}
\label{sec:shortcircuit}


In this section, we consider a simple but illustrative example of a 
data structure operation with an assertion of its invariant.  We
illustrate that this assertion can be incrementally verified and validated in a sound manner.

We consider binary search trees here, as the data structure requires
a well-known 
intertwined shape-data invariant. 
Our technique also applies to (and is even more important for) more complex data invariants.  In \secref{empirical-evaluation}, we evaluate our implementation on a hash trie, which is a tree-structured hash map 
that uses bit-blocks from hash keys to form a trie~\cite{bagwell2001:ideal-hash}.  This structure  
employs bit-masking, hash codes, and rehashing on collision.  Even for state-of-the-art static verifiers, such as Thor~\cite{magill+2010:automatic-numeric}, this data structure invariant is challenging and likely out of reach for current techniques~\cite{Magill-Personal-Communication-2014}.
Another important feature of our approach is the ability to obtain incremental dynamic validation even when the pre-condition specification is insufficient (not strong enough) to statically prove the post-condition, which we demonstrate here with the \code{setroot} example in \figref{validation}.

\SQUEEZE{
\paragraph{A Validation Checker for Binary Search Trees.}
}

\newsavebox{\SBoxBtCDef}
\begin{lrbox}{\SBoxBtCDef}\small
\begin{lstlisting}[language=C]
typedef#\,#struct#\,#node#\,#{#\,#val v;#\,#struct#\,#node* l;#\,#struct#\,#node* r;}*#\,#bt;
\end{lstlisting}
\end{lrbox}

We consider a C type declaration for a binary tree \code{bt}:
\[
\scalebox{0.95}{\usebox{\SBoxBtCDef}}
\]
Each node contains three fields: \code{v}, which stores a data
value, and fields \code{l} and \code{r} that contain pointers to left and right
sub-trees. We leave unspecified the type of data values (\code{val}).
In \figref{bst-sepchecker}, we give an inductive definition for a binary search tree in a standard separation logic notation.
As a convention, we use hatted letters like $\symt$ for
symbolic logic variables used in static analysis and write inductive
definitions like $\bst$ with a distinguished recursion parameter as in
$\icall{\symt}{\bst}{\ldots}$.
A binary search tree is a binary tree where the data
values obtained from the in-order traversal are ordered---and thus the
search tree invariant specification $\bst$ includes the allowable range for the data
values in that sub-tree.
The $\icall{\symt}{\bst}{\symmin, \symmax}$ predicate states that $\symt$ points to a binary search tree with allowable range $[\symmin, \symmax)$; we emphasize with shading the parts of the $\bst$ definition capturing the search \emph{data property}.
As is well-understood, using separating conjunction $\lsep$ between the root node pointed to by $\symt$ and the sub-trees 
pointed to by $\syml$ and $\symr$ specifies that $\symt$ is structurally 
a tree (i.e., does not have sharing or cycles); in other words, the non-shaded parts of the definition correspond the binary tree \emph{shape property}.

In \figref{bst-sepsimple}, we rewrite the inductive definition $\bst$ in a shorthand that suggests an executable interpretation. Disjunctions are expressed as if-then-else conditionals and existential variables correspond only to values obtained from dereferencing the recursion parameter $\symt$. Note that this definition is simply a shorthand and has the same meaning as the definition in \figref{bst-sepchecker} expressed in a more standard notation.

\newcommand{\CodeSpacer}{\rule{0pt}{2ex}}
\newcommand{\InvBox}[1]{\hfill\fbox{#1}}

\setlength{\saveboxsep}{\fboxsep}
\fboxsep 1pt

\newsavebox{\SBoxBst}
\begin{lrbox}{\SBoxBst}\scriptsize
\begin{lstlisting}[language=C,alsolanguage=Spec,alsolanguage=highlighting,alsolanguage=SqueezeOps]
typedef#\,#struct btnode {#\,#val v; struct#\,#btnode* l; struct#\,#btnode* r;#\,#}* bt;
#\rule{0pt}{3ex}#bool |bst|(bt n, val min, val max) { val v;
  return n == NULL ? true : (v = n->v,  min <= v && v < max &&
    |bst|(n->l, min, v) && |bst|(n->r, v, max));
}
\end{lstlisting}
\end{lrbox}

\newsavebox{\SBoxSepCheckerExample}
\begin{lrbox}{\SBoxSepCheckerExample}\small
\( \begin{array}{@{}l@{}}
 \icall{\symt}{\bst}{\hibox{\symmin, \symmax}}
    \quad\defeq\quad \lemp \spland \symt = \nullval \qquad\lor\qquad\exists \symv,\syml,\symr.\\[0.5ex]
      \quad\fldpt{ \symt }{ \fldv }{ \symv }
      \splsep
      \fldpt{ \symt }{ \fldl }{ \syml }
      \splsep
      \fldpt{ \symt }{ \fldr }{ \symr }
      \splsep
      \icall{\syml}{\bst}{\hibox{\symmin,\,\symv}}
      \splsep
      \icall{\symr}{\bst}{\hibox{\symv,\,\symmax}}
      \\
      \qquad\qquad\qquad\spland
      \vphantom{\hibox{\symmin}}
    \symt \neq \nullval
      \spland \hibox{ \symmin \leq \symv \spland \symv < \symmax }
\end{array}\)
\end{lrbox}

\newsavebox{\SBoxBstSimple}
\begin{lrbox}{\SBoxBstSimple}\small
\begin{lstlisting}[language=Spec]
#$\icall{\symt}{\bst}{\hibox{\symmin, \symmax}} \quad\defeq\quad$# if #$\symt = \nullval$# then #$\lemp$# else
#$\quad\icall{\fldderef{\symt}{\fldl}}{\bst}{\hibox{\symmin, \fldderef{\symt}{\fldv}}} \splsep \icall{\fldderef{\symt}{\fldr}}{\bst}{\hibox{\fldderef{\symt}{\fldv}, \symmax}}$#                                           ##
#$\qquad\qquad\qquad\spland \hibox{\symmin \leq \fldderef{\symt}{\fldv} \spland \fldderef{\symt}{\fldv}< \symmax}$#
\end{lstlisting}
\end{lrbox}

\newsavebox{\SBoxSetRoot}
\begin{lrbox}{\SBoxSetRoot}\small\lstnonum
\begin{lstlisting}[language=C,alsolanguage=Spec,alsolanguage=highlighting,alsolanguage=SqueezeOps,style=number,name=SetRoot]
void setroot(bt t, val v) {#\lstbeginn#
 assume(#$\bstcall{\vart}{-\infty}{\infty}$#);#\label{line:setroot-assume}#
 assume(#$\vart \neq \nullval$#);#\label{line:setroot-assume-nonnull}#
 t->v = v;#\label{line:setroot-set}#
 assert(#$\bstcall{\vart}{-\infty}{\infty}$#);#\label{line:setroot-assert}#
#\label{line:inc-invocation-left}#
#\label{line:inc-invocation-right}\label{line:setroot-end}\lststopn#
}
\end{lstlisting}
\end{lrbox}

\newsavebox{\SBoxSetRootIncr}
\begin{lrbox}{\SBoxSetRootIncr}\small
\begin{lstlisting}[language=C,alsolanguage=Spec,alsolanguage=highlighting,alsolanguage=SqueezeOps,name=SetRootIncr]
void setroot(bt t, val v) {
 assume(#$\bstcall{\vart}{-\infty}{\infty}$#);
 assume(#$\vart \neq \nullval$#);
 val oldv = t->v; t->v = v;      ##
 assert(#$-\infty$# <= t->v && t->v < #$\infty$# &&
  |1:bstsc|(t->l, #$-\infty$#, t->v, #$-\infty$#, oldv)
  #\,#&& |1:bstsc|(t->r, t->v, #$\infty$#, oldv, #$\infty$#));
}
\end{lstlisting}
\end{lrbox}

\tikzstyle{vnd}=[rectangle, draw, inner sep=1pt, minimum size=5mm, font=\footnotesize, text centered, color=dullredrgb, text=black]
\tikzstyle{tnd}=[regular polygon, regular polygon sides=3, shape border rotate=90, draw, inner sep=1pt, minimum size=8mm, font=\footnotesize\color{black}, text centered, color=dullgreenrgb]

\newsavebox{\SBoxSetRootIncrExecution}
\begin{lrbox}{\SBoxSetRootIncrExecution}\small
\begin{tikzpicture}[grow=right]
\tikzstyle{level 1}=[level distance=15mm, sibling distance=22mm]
\tikzstyle{level 2}=[level distance=20mm, sibling distance=16mm]
\tikzstyle{level 3}=[level distance=20mm, sibling distance=12mm]
\tikzstyle{level 3}=[level distance=20mm, sibling distance=10mm]
\node[vnd,label=above:{\relsize{-1}$\texttt{oldv} = 8$},label=below:{$\texttt{t}$}] {$9$}
    child {
        node[vnd,label=below:{$(-\infty, 9, \hibox{-\infty, 8})$}] {$2$}
        child[fld] {
            node[tnd,label=right:{$(-\infty, 2, \hibox{-\infty, 2})$}] {}
        }
        child {
            node[vnd,label=below:{$(2, 9, \hibox{2, 8})$}] {$4$}
            child[] {
                node[tnd,label=below:{$(2, 4, \hibox{2, 4})$}] {}
            }
            child {
                node[vnd,label=above:{$(4, 9, \hibox{4, 8})$}] {$6$}
                child[] {
                    node[tnd,label=below:{$(4, 6, \hibox{4, 6})$}] {}
                }
                child[dashed,thick] {
                    node[] {...}
                }
            }
        }
        edge from parent node[ebelowmid] {$\fldl$}
    }
    child[] {
        node[vnd,label=above:{$(9, \infty, \hibox{8, \infty})$}] {$10$}
        child[dashed] {
            node[] {...}
        }
        child[fld] {
            node[tnd,label=below:{$(10, \infty, \hibox{10, \infty})$}] {}
        }
        edge from parent node[eabovemid] {$\fldr$}
    }
;
\end{tikzpicture}
\end{lrbox}

\newsavebox{\SBoxBstScCode}
\begin{lrbox}{\SBoxBstScCode}\small\lstnonum
\begin{lstlisting}[language=C,alsolanguage=Spec,alsolanguage=highlighting,style=number]
bool |1:bstsc|(bt n, val min, val max, val hmin, val hmax) {#\lstbeginn#
  assume( #$\bstcall{\varn}{\varhmin}{\varhmax}$# );
  if (min == hmin && max == hmax$\vphantom{\hibox{\texttt{h}}}$) return true;
  return n == NULL ? true : min <= n->v && n->v < max &&
    |1:bstsc|(n->l, min, n->v, #$\hibox{\texttt{hmin,\;n->v}}$#)
    #\,#&& |1:bstsc|(n->r, n->v, max, #$\hibox{\texttt{n->v,\;hmax}}$#);#\lststopn#
}
\end{lstlisting}
\end{lrbox}

\setlength{\fboxsep}{\saveboxsep}

\begin{figure}\centering
\subfloat[An inductive definition $\bst$ for binary search trees in separation logic.]{\label{fig:bst-sepchecker}\rule{1em}{0pt}\scalebox{0.9}{\usebox{\SBoxSepCheckerExample}}\rule{1em}{0pt}} \\
\subfloat[The inductive definition $\bst$ from (a) written in a shorthand for checkers. A dereference expression $\fldderef{\aaddr}{\fld}$ corresponds to the value of an implicitly separated points-to predicate (e.g., the referencing expression $\fldderef{\symt}{\fldv}$ corresponds to $\symv$ in the points-to predicate $\fldpt{\symt}{\fldv}{\symv}$ from (a)).]{\label{fig:bst-sepsimple}%
\rule{4em}{0pt}\scalebox{0.9}{\usebox{\SBoxBstSimple}}\rule{4em}{0pt}%
} \\
\subfloat[With full verification.]{\label{fig:validation:bst}\rule{0.7em}{0pt}\ovalbox{\scalebox{0.9}{\usebox{\SBoxSetRoot}}}}\rule{2pt}{0pt}\subfloat[With short-circuiting validation.]{\label{fig:validation:bstsc}\ovalbox{\scalebox{0.9}{\usebox{\SBoxSetRootIncr}}}} \\
\subfloat[Compilation of the short-circuiting invariant checker $\bstsc$ from \figref{bstsc:final} to C code (once separation has been proven statically).]{\label{fig:bstsc:code}%
\rule{1em}{0pt}\scalebox{0.9}{\usebox{\SBoxBstScCode}}%
}
\\
\subfloat[Illustrating short-circuiting validation, executing lines 4--6 of (d), on a binary search tree where the data value at the root node was changed from 8 to 9 by \texttt{setroot}.  The four-tuple labels correspond to the last four arguments of calls to \codex{language=highlighting}{|1:bstsc|}.]{\label{fig:validation:increxecution}\rule{0.6em}{0pt}\scalebox{0.9}{\usebox{\SBoxSetRootIncrExecution}}\rule{0.6em}{0pt}}
\caption{Validating the binary search tree invariant in \texttt{setroot} after the data
  value of the root node is set to \code{v}.
}
\label{fig:validation}
\end{figure}


\newsavebox{\SBoxBstScShallow}
\setlength{\saveboxsep}{\fboxsep}
\fboxsep 1pt
%
%
%
%

\begin{lrbox}{\SBoxBstScShallow}\small\lstnonum
\begin{lstlisting}[language=Spec,style=number]
#$\icall{\symn}{\fmtidef{bstsc\_a}}{\symmin, \symmax, \symhmin, \symhmax} \quad\defeq$##\lstbeginn#
  #$\bstcall{\symn}{\symhmin}{\symhmax} \limp$##\label{line:bstsca-assume}\label{line:bstsc-assume}#
#$\vphantom{\hibox{\bst(\symhmin)}}$#
#$\vphantom{\hibox{\bst(\symhmin)}}$#
#$\vphantom{\symhmin}$#
    if #$\hibox{\symmin = \symhmin \spland \symmax = \symhmax}$# then #$\ltrue$#                              ###\label{line:bstsca-short}#
    else #$\bstcall{\symn}{\symmin}{\symmax}$#
#$\vphantom{\hibox{\symhmin}}$##\label{line:bstscb-left}\label{line:bstsc-left}#
#$\vphantom{\hibox{\symhmin}}$##\label{line:bstscb-right}\label{line:bstsc-right}#
##
\end{lstlisting}
\end{lrbox}


\newsavebox{\SBoxBstScUnfold}
\begin{lrbox}{\SBoxBstScUnfold}\small\lstnonum
\begin{lstlisting}[language=Spec]
#$\icall{\symn}{\fmtidef{bstsc\_b}}{\symmin, \symmax, \symhmin, \symhmax} \quad\defeq$#
 (if #$\symn = \nullval$# then #$\lemp\vphantom{\symhmin}$#
  else #$\hibox{\bstcall{ \fldderef{\symn}{\fldl} }{\symhmin}{ \fldderef{\symn}{\fldv} }}$#
  #$\splsep \hibox{\bstcall{ \fldderef{\symn}{\fldr} }{ \fldderef{\symn}{\fldv} }{\symhmax}}$#
  #$\spland \symhmin \leq \fldderef{\symn}{\fldv} \land \fldderef{\symn}{\fldv} < \symhmax$#) #$\limp$#
    if #$\symmin = \symhmin \spland \symmax = \symhmax$##$\vphantom{\hibox{\symhmin}}$# then #$\ltrue$#                             ##
    else if #$\symn = \nullval$# then #$\lemp$#
    else #$\vphantom{\hibox{\symhmin}}\bstcall{ \fldderef{\symn}{\fldl} }{\symmin}{ \fldderef{\symn}{\fldv} }$#
    #$\splsep \vphantom{\hibox{\symhmax}}\bstcall{ \fldderef{\symn}{\fldr} }{ \fldderef{\symn}{\fldv} }{\symmax}$#
    #$\spland \symmin \leq \fldderef{\symn}{\fldv} \land \fldderef{\symn}{\fldv} < \symmax$#
\end{lstlisting}
\end{lrbox}

%
%
%

\newsavebox{\SBoxBstSc}
\begin{lrbox}{\SBoxBstSc}\small\lstnonum
\begin{lstlisting}[language=Spec]
#$\icall{\symn}{\bstsc}{\symmin, \symmax, \symhmin, \symhmax} \quad\defeq$#
 (if #$\symn = \nullval$# then #$\lemp\vphantom{\symhmin}$#
  else #$\vphantom{\hibox{\bst(\symhmin)}}\bstcall{ \fldderef{\symn}{\fldl} }{\symhmin}{ \fldderef{\symn}{\fldv} }$#
  #$\splsep \vphantom{\hibox{\bst(\symhmin)}}\bstcall{ \fldderef{\symn}{\fldr} }{ \fldderef{\symn}{\fldv} }{\symhmax}$#
  #$\spland \symhmin \leq \fldderef{\symn}{\fldv} \land \fldderef{\symn}{\fldv} < \symhmax$#) #$\limp$#
    if #$\symmin = \symhmin \spland \symmax = \symhmax$##$\vphantom{\hibox{\symhmin}}$# then #$\ltrue$#                             ##
    else if #$\symn = \nullval$# then #$\lemp$#
    else #$\icall{ \fldderef{\symn}{\fldl} }{\bstsc}{ \symmin, \fldderef{\symn}{\fldv}, \hibox{\symhmin, \fldderef{\symn}{\fldv}} }$#
    #$\splsep \icall{ \fldderef{\symn}{\fldr} }{\bstsc}{ \fldderef{\symn}{\fldv}, \symmax, \hibox{\fldderef{\symn}{\fldv}, \symhmax} }$#
    #$\spland \symmin \leq \fldderef{\symn}{\fldv} \land \fldderef{\symn}{\fldv} < \symmax$#
\end{lstlisting}
\end{lrbox}
\setlength{\fboxsep}{\saveboxsep}

\SQUEEZE{
\paragraph{Validating an Operation.}
}

Now consider the \code{setroot} procedure shown in \figref{validation:bst}.
The programmer states the assumption (i.e., pre-condition) that $\vart$ on entry points to a binary search tree satisfying $\bstcall{\vart}{-\infty}{\infty}$ (\reftxt{line}{line:setroot-assume}) and is non-null (\reftxt{line}{line:setroot-assume-nonnull}). We write $-\infty$ and $\infty$ for the compile-time constants corresponding to the minimum and maximum values in the \code{val} type (e.g., \code{INT_MIN} and \code{INT_MAX}), respectively. In this procedure, she then sets the data value at the root node to the passed-in parameter \code{v} (\reftxt{line}{line:setroot-set}) and then wants to assert that the modified tree is still a binary search tree with the \code[Spec]{assert} at \reftxt{line}{line:setroot-assert}.
First, note that the \code[Spec]{assert} at \reftxt{line}{line:setroot-assert} is not provable statically because the search property may be violated after the assignment on \reftxt{line}{line:setroot-set}, as there are no constraints on the value of \code{v}.
Thus, it would be unsound for any static verifier to eliminate this \code[Spec]{assert}.
Second, observe that a dynamic validation of $\bstcall{\vart}{-\infty}{\infty}$ is a linear-time operation in the number of nodes of the binary tree \code{t} and furthermore has the non-trivial obligation to dynamically validate the separation constraints---even though the binary tree shape is unchanged and the only data value that was modified was at the root node.
\paragraph{Incremental Verification-Validation.}

Even if full static verification of a data structure check is not possible (as with the \code[Spec]{assert} on \reftxt{line}{line:setroot-assert} in the \code{setroot} procedure), surprisingly we can still use the incomplete static reasoning to incrementalize the check.
Specifically, our incremental verification-validation approach on an \code[Spec]{assert} consists of two phases. First, it tries to statically verify the assertion using invariants derived by a static shape analysis. If it is successful, no dynamic validation is needed and the \code[Spec]{assert} can be fully eliminated.
If it is partially successful in that the separation properties are statically proven, then it proceeds to synthesize a short-circuiting, incremental assertion to replace the original, full-structure assertion. This short-circuiting, incremental assertion is easy to express in C code because it does not need to check separation properties that were verified statically.
If the static verifier is not able to prove separation, our tool simply raises an alarm and does not proceed to synthesis.

Returning to the \code[Spec]{assert} at \reftxt{line}{line:setroot-assert} in the \code{setroot} procedure, we can reason about the binary search tree invariant to see that a change to the data value at the root node could only possibly invalidate the invariant along
the right spine of the left sub-tree \code{t->l} or the left spine of the right sub-tree \code{t->r} (i.e., these are the only nodes whose allowable range $[\symmin,\symmax)$ could have changed).
An \emph{incremental validation}---one that takes advantage of the knowledge
that \code{t} was a binary search tree before the update---needs to traverse at most these two paths
and thus can re-validate the invariant while traversing only a logarithmic number of nodes.
We diagram these two paths in \figref{validation:increxecution} with \textcolordull{squares} for the traversed nodes and with \textcolordull[1]{triangles} for the skipped sub-trees (ignore the caption and the number labels for the moment).
As we will see, lines~\ref{line:setroot-assert}--\ref{line:setroot-end} of \figref{validation:bstsc} indeed traverse at most the two paths discussed above by replacing the assertion to the $\bst$ predicate with calls to the synthesized short-circuiting invariant checker \code[highlighting]{|1:bstsc|} shown in \figref{bstsc:code}. The rest of this section will focus on describing how one would define \code[highlighting]{|1:bstsc|} to obtain this short-circuiting validation.

As alluded in \secref{introduction},
the basic idea behind a short-circuiting checker like \code[highlighting]{|1:bstsc|} is to make explicit a hypothesis of which it can take advantage.  This idea can expressed in the following correctness condition for \code[highlighting]{|1:bstsc|} that we call its safe replacement criterion:
\begin{criteria}[Safe Replacement with a Short-Circuiting Invariant Checker] 
For
$\bstsc$, it is the case that
\par\smallskip\centering\fbox{\(\begin{array}{@{}l@{\;}l@{}}
\text{if} & \bstcall{\symn}{\symhmin}{\symhmax} \\
\text{then} &
  \bstsccall{\symn}{\symmin}{\symmax}{\symhmin}{\symhmax} 
  \limp
  \bstcall{\symn}{\symmin}{\symmax}
\end{array}\)}
\end{criteria}
For clarity, we first consider deriving the short-circuiting checker $\bstsc$ in separation logic and then subsequently consider the compilation to C code.
Observe that the short-circuiting checker $\bstsc$
takes two additional parameters
$\symhmin$ and $\symhmax$
(standing for \fmthypp{h}ypothesis min and max, respectively) compared to the original checker $\bst$.
The \emph{validation hypothesis} is explicitly $\bstcall{\symn}{\symhmin}{\symhmax}$. This validation hypothesis may be used to return early (i.e., short-circuit checking) when validating the desired property $\bstcall{\symn}{\symmin}{\symmax}$.
In other words, a short-circuiting checker does not need to validate the desired property directly but can do so under the assumption
that the validation hypothesis holds.

A trivial implementation of $\bstsc$ that satisfies the safe replacement criterion is 
to ignore the validation hypothesis and be defined as $\bstcall{\symn}{\symmin}{\symmax}$---of course this offers no benefit over $\bst$ itself.
%
A slightly smarter variant, $\fmtidef{bstsc\_a}$ in \figref{bstsc:shallow}, explicitly assumes the
validation hypothesis
 on \reftxt{line}{line:bstsc-assume} and then
checks a \emph{short-circuiting condition} on \reftxt{line}{line:bstsca-short} (shown shaded): if the goal is exactly the hypothesis, then it can soundly short circuit and return $\ltrue$ early without further checking.  This short circuiting is, however, shallow---if the short-circuiting condition is not satisfied, the entire tree rooted at \code{n} is traversed by $\bst$.

\begin{figure*}[t]
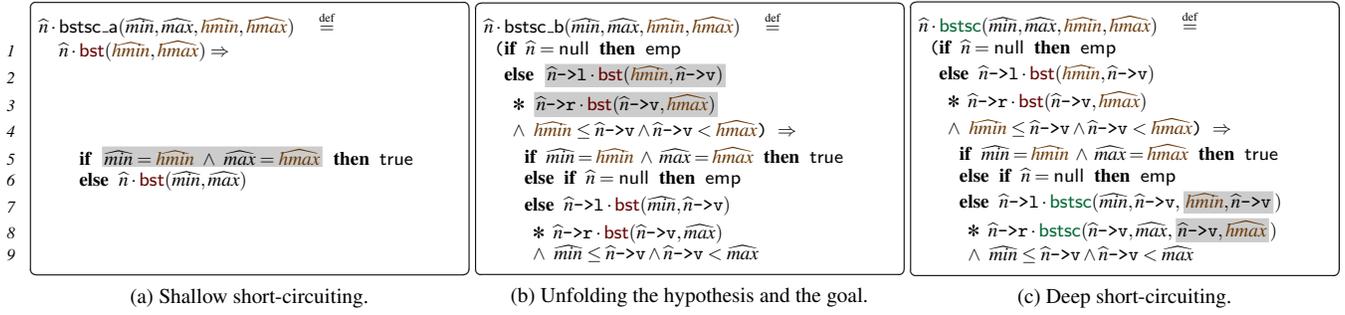
\centering
\rule{0.7em}{0pt}
\subfloat[Shallow short-circuiting.]{\label{fig:bstsc:shallow}\ovalbox{\scalebox{0.9}{\usebox{\SBoxBstScShallow}}}}
\hfill
\subfloat[Unfolding the hypothesis and the goal.]{\label{fig:bstsc:unfold}\ovalbox{\scalebox{0.9}{\usebox{\SBoxBstScUnfold}}}}
\hfill
\subfloat[Deep short-circuiting.]{\label{fig:bstsc:final}\ovalbox{\scalebox{0.9}{\usebox{\SBoxBstSc}}}}

\caption{The short-circuiting invariant checker
  $\bstsc$. It is synthesized
  by the following a proof of inclusion in separation logic.}
\label{fig:derivingbstsc}
\end{figure*}


To arrive at a deep short-circuiting, let us consider $\fmtidef{bstsc\_b}$ in \figref{bstsc:unfold}, which is semantically equivalent to $\fmtidef{bstsc\_a}$.  In $\fmtidef{bstsc\_b}$, we have inlined the calls to $\bst$.  In other words, logically, we have unfolded the inductive predicate given by $\bst$ in both the hypothesis and the goal.  By unfolding the hypothesis, we see that in the case that $\symn \neq \nullval$, we also have that
$\bstcall{\fldderef{\symn}{\fldl}}{\symhmin}{\fldderef{\symn}{\fldv}}$
and
$\bstcall{\fldderef{\symn}{\fldr}}{\fldderef{\symn}{\fldv}}{\symhmax}$.
We use these hypotheses (shown shaded) to derive the
deep short-circuiting checker $\bstsc$ by rewriting the recursive
$\bst$ checks on lines~\ref{line:bstscb-left}--\ref{line:bstscb-right} on the left and right sub-trees with recursive calls to the short-circuiting checker. The validation hypotheses of these recursive checks, $\icall{ \fldderef{\symn}{\fldl} }{\bstsc}{ \symmin, \fldderef{\symn}{\fldv}, \symhmin, \fldderef{\symn}{\fldv} }$ and $\icall{ \fldderef{\symn}{\fldr} }{\bstsc}{ \fldderef{\symn}{\fldv}, \symmax,\fldderef{\symn}{\fldv}, \symhmax }$, are satisfied by the hypotheses about $\fldderef{\symn}{\fldl}$ and $\fldderef{\symn}{\fldr}$, respectively.
This deep short-circuiting checker \code[highlighting]{|1:bstsc|} allows many
more opportunities to apply a short-circuiting hypothesis and return early (because the short-circuiting is checked on each recursive call), thus avoiding a full traversal of the tree.

Intuitively, the previous paragraph describes synthesizing the deep short-circuiting checker $\bstsc$ by following the proof of the safe replacement criterion:
\[\begin{array}{@{}l@{}}
\bstcall{\symn}{\symhmin}{\symhmax} \limp \\
\qquad \left(\bstsccall{\symn}{\symmin}{\symmax}{\symhmin}{\symhmax} \limp \bstcall{\symn}{\symmin}{\symmax}\right)
\end{array}\]
and abducting a definition for $\bstsc$.
In \secref{dynamic-validation}, we see that $\bstsc$ is synthesized by attempting to prove
\begin{equation}\label{eqn:sc-implies-imp}
\bstcall{\symn}{\symhmin}{\symhmax} \limp \bstcall{\symn}{\symmin}{\symmax}
\end{equation}
while permitting abduction of pure, data constraints.
We call this process \emph{subtraction-directed synthesis} because the proof strategy follows the standard approach of unfolding and subtraction to prove separation. When $\bstsc$ is successfully synthesized, we have that the spatial part of the above implication \eqref{eqn:sc-implies-imp} holds.
Thus, it is straightforward to compile the $\bstsc$ definition from \figref{bstsc:final} to C code using the \code{&&}-conjunction operator as shown in \figref{bstsc:code} because we have guaranteed that the separation properties hold statically (and thus do not need them to be validated dynamically).

\paragraph{Short-Circuiting Validation of \texttt{setroot}.}

We now return to our claim that the check on lines~\ref{line:setroot-assert}--\ref{line:setroot-end}
in the short-circuiting validation version of \code{setroot} shown in \figref{validation:bstsc} traverses at most two paths in tree \code{t}. This is illustrated in \figref{validation:increxecution}, which considers
an example where a call to \code{setroot} changes the data value at the root node from the value $8$ to the value $9$. 
The four-tuple labels on each node indicate the $(\varmin, \varmax, \varhmin, \varhmax)$ arguments passed to \code[highlighting]{|1:bstsc|} at that node from the initiating calls on lines~\ref{line:inc-invocation-left} and \ref{line:inc-invocation-right}.
This short-circuiting version
skips sub-trees that are not on the aforementioned two paths because their roots satisfy
the short-circuiting guard condition (i.e., that \code{min == hmin && max == hmax}).

\SQUEEZE{

\paragraph{Benefits of Synthesizing Short-Circuiting Validation.}
At this point, we can more concretely contrast our short-circuiting synthesis proposal
with the ``fully static'' and the ``fully dynamic'' approaches alluded to in \secref{introduction}.

\subparagraph{Why not just improve static verification?}
%
Certainly, we are not arguing against developing stronger verification techniques, as it makes sense to
prove and eliminate as many \code[Spec]{assert}s as possible
before synthesizing short-circuiting dynamic validation.
But our key point is that our approach is helpful in two scenarios when static verification is ineffective.
First, we note that short-circuiting validation synthesis provides a benefit when faced with the
inevitable imprecision of any given static technique: the results of even a partial verification can
be used to improve dynamic validation. But second, there are cases where \emph{no static verifier} can
prove the property of interest. This is the case
for the \code[Spec]{assert} in the \code{setroot} example because \emph{the pre-condition is simply insufficient} to prove the post-condition: an arbitrary modification of the data value in the root node of a binary search tree may in fact break the search tree invariant.  Thus, the ``prove as much as possible'' approach on its own
must leave in the linear-time \code[Spec]{assert} on \reftxt{line}{line:setroot-assert} in the \code{setroot} of \figref{validation:bst}.  In contrast, even when unable to prove the \code[Spec]{assert}, our approach
can still synthesize a short-circuiting dynamic validation that on execution, will be at most logarithmic as opposed to linear in the size of the data structure!

\subparagraph{Why not a wholly dynamic validation?}
Dynamic techniques, such as Ditto~\cite{shankar+2007:ditto:-automatic}, use a run-time shadow heap whose space complexity is linear in the size of the data structure to, in effect, memoize when the hypothesis we show as the \code[Spec]{assume} on \reftxt{line}{line:bstsca-assume} in \code{bstsc_a} of \figref{bstsc:shallow} has been checked previously and has not be violated since to trigger the short-circuiting condition.  These techniques, in essence, trade off space and the overhead of maintaining the additional run-time structures for generic incrementalization.  Our approach can be seen as an optimization that eliminates this overhead whenever possible by prove such \code[Spec]{assume}s statically and including just the minimal instrumentation be able to check for the short-circuiting condition.

}

Automatic synthesis of short-circuiting checkers like \code[highlighting]{|1:bstsc|}
is vastly preferable, for the developer, to manual
checker transformations. While it is perhaps possible for a clever developer to rewrite the validation routines themselves by hand to be short-circuiting, rewriting \code[Spec]{assert}s to make appropriate calls to these routines is challenging and error-prone because of the need to modify code to expose particular data values. 
An automated technique is needed to statically prove the validation hypothesis of the short-circuiting invariant checker (including the separation properties) and determine the appropriate arguments to it.
The goal of this paper is to automate this process of incrementalization---eliminating the tediousness and guaranteeing the safety of the transformation.

\paragraph{Soundness: Assumptions and Guarantees.}

The expected input for our approach is a program with \code[Spec]{assert}s of data structure invariants expressed in separation logic like $\bst$.
If modular verification-validation is desired, pre-conditions can be expressed additionally with \code[Spec]{assume}s.
Like with any modular verification technique, we assume that user-specified \code[Spec]{assume}s are sound pre-conditions (which could be guaranteed at each call site with corresponding \code[Spec]{assert}s).
The verification-validation problem that we address is to try to prove all \code[Spec]{assert}s.
If successful, the output of our technique is a static verification of the separation properties
(e.g., a proof that \code{t} is a binary tree)
and a rewriting of the \code[Spec]{assert}s to use short-circuiting dynamic validation
(e.g., an incremental run-time check that the binary tree \code{t} has the search property).
If unsuccessful, the output is an alarm for each assert where it was unable to statically verify the separation properties.
That is, like any other static verifier for separation logic, it soundly rejects programs where it cannot prove the separation properties.
As an alternative to rejecting such programs, one could consider dynamically validating such unproven separation properties using, for example, global heap coloring techniques~\cite{nguyen+2008:runtime-checking}.

\paragraph{Challenges: Synthesizing Short-Circuiting Validation.}

As we have seen, short-circuiting validation checkers can potentially reduce the asymptotic complexity of validation
while providing the same benefits as whole-structure checkers. Automatic generation of short-circuiting checkers
requires addressing two key challenges:
\begin{asparaenum}\itemsep 0pt
\item[(\textbf{Challenge 1})] \label{synthesizesccalls} Synthesis to transform whole-structure assertions (like $\bst$) into calls
to short-circuiting checkers (like \code[highlighting]{|1:bstsc|}). This process must generate both the additional
hypothesis arguments for such calls and
add the necessary scalar assertions. For example, the \code[Spec]{assert} on lines~\ref{line:setroot-assert}--\ref{line:setroot-end} in \figref{validation:bstsc} correspond to unfolding the original assertion $\bstcall{\vart}{-\infty}{\infty}$ once to account for the change at the root node.
\item[(\textbf{Challenge 2})] \label{synthesizesccode} Synthesis of deep short-circuiting invariant checkers (like \code[highlighting]{|1:bstsc|}) for general inductive data structure invariants (like $\bst$).
\item[(\textbf{Challenge 3})]
A technical challenge that arises to support the above is determining how to connect statically verified invariants with dynamic validation. For example, the new variable $\varoldv$ must be instrumented into the program on \reftxt{line}{line:setroot-set} so that it can be used in the \code[Spec]{assert} on lines~\ref{line:setroot-assert}--\ref{line:setroot-end} in \figref{validation:bstsc}.
\end{asparaenum}
\par The rest of this paper describes how we address these challenges with a synthesis approach based on inductive shape-data analysis.

\begin{figure*}%
\begin{tikzpicture}[grow=right,level distance=2.5cm,scale=1,every node/.style={scale=0.85},node distance=1.7cm]
\tikzstyle{level 1}=[level distance=2.1cm]
\tikzstyle{level 2}=[level distance=4.1cm]
\tikzstyle{level 3}=[level distance=3.8cm]
\tikzstyle{level 4}=[level distance=4cm]
\tikzstyle{level 5}=[level distance=3.5cm]
\node {} child {
  node [rect] {\parbox{4.5em}{\centering Shape-data analysis}} child {
    node [rect] {\parbox{6em}{\centering Logic-variable reification}} child {
      node [rect] (synnode) {\parbox{6.5em}{\centering Short-circuiting synthesis}} child {
        node [redrect] (exenode) {\parbox{4em}{\centering Execution}} child {
          node (endnode) {}
          edge from parent node[eabovemid] {
\begin{tabular}{@{}c@{}} 
      \mbox{\normalsize\em short-circuiting}
      \\
      \mbox{\normalsize\em condition}
      \\[1ex]
\code[C]{if (n==$\sym{n}$ && $\cdots$) return}
\end{tabular}
}
        }
        edge from parent node[eabovemid] {\code[Spec,alsolanguage=highlighting]{|1:bstsc|(t,n,m,$\sym{n}$,$\sym{m}$)}}
      }
      edge from parent node[eabovemid] {
      \begin{tabular}{@{}c@{}}
      \mbox{\normalsize\em programmatic}
      \\
      \mbox{\normalsize\em valuation}
      \\[1ex]
\code{$\aaddr_{\varp}$ = $\;\;\phi(\aaddr_\vart, \aaddr_1)$}
\end{tabular}
}
    }
    edge from parent node[eabovemid] {$\overbrace{
    \fldpt{ \aaddr_{\varp} }{ \fldl }{ \syml }
    \lsep\cdots
  }^{\makebox[0pt]{
    \(\begin{array}{@{}c@{}}
      \mbox{\normalsize\em validated view}
      \\[1ex]
      \mbox{\relsize{-1}$\icall{\aaddr_{\varp}}{\bst}{\ldots} \lreviwand
      (\icall{\syml}{\bst}{\ldots} 
      \lsep
      \icall{\symr}{\bst}{\ldots})$}
    \end{array}\)
  }}
  \lsep
  \cdots$}
  }
  edge from parent node[eabovemid] {$\bstcall{\vart}{\fmtvar{n}}{\fmtvar{m}}$}
};
\node (synalarm) [above right of=synnode] {};
\draw[->] (synnode) -- (synalarm) node[pos=0.3,above] {\parbox{1.5cm}{\colordull unproven separation alarm}};
\node (exealarm) [above right of=exenode] {};
\draw[->] (exenode) -- (exealarm) node[pos=0.3,above] {\parbox{1.2cm}{\colordull data assertion failure}};
\end{tikzpicture}
\caption{An overview of our approach to obtain short-circuiting data structure validation. The first three phases (blue boxes) are compile-time phases to synthesize code that runs in a normal execution environment (red box).}
\label{fig:overview-steps}
\end{figure*}
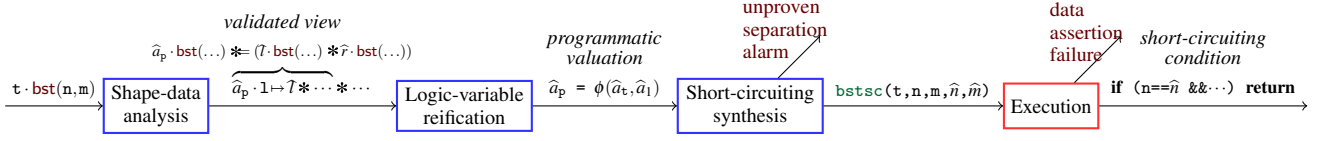

\paragraph{Overview.}

Our approach consists of three processes that we show schematically in \figref{overview-steps}:
\begin{asparaenum}
\item \textbf{Static shape-data analysis} infers loop invariants in separation logic.
It determines when previously established
data-value facts have been preserved---and can thus be
relied upon, at run time, during validation. A \emph{validated view} is a constraint we introduce to track arbitrary data constraints in materialized memory regions (\secref{validated-views}).

\item \textbf{Logic-variable reification}
transforms the program to
explicitly connect logic variables in the static analysis invariants
to concrete values at run time
by constructing a \emph{programmatic valuation} (\secref{logic-variable-instrumentation}).
This code instrumentation enables run-time assertion checking to rely upon
invariants discovered during static analysis.

\item \textbf{Subtraction-directed synthesis of short-circuiting validation} automatically
transforms expensive whole-data-structure validation \code[Spec]{assert}s into
cheaper, short-circuiting \code[Spec]{assert}s that can soundly skip portions of the data structure at run time (\secref{dynamic-validation}).
To synthesize the code for short-circuiting checkers,
we fix a template (i.e., a sketch~\cite{DBLP:conf/asplos/Solar-LezamaTBSS06}) similar to the shallow short-circuiting checker $\fmtidef{bstsc\_a}$ in \figref{bstsc:shallow} and then reuse our synthesizer
on the template code to fill it in
with deep short-circuiting, as in $\bstsc$ (\figref{bstsc:final}).
Short-circuiting synthesis can raise an alarm if the spatial part of the safe replacement criterion cannot be proven statically to soundly synthesize short-circuiting checkers like $\bstsc$.
\item \textbf{Assertion execution} with the rewritten \code[Spec]{assert}s in a normal execution environment realizes incremental invariant validation whenever the short-circuiting condition is triggered. If a data property is violated on execution, an assertion failure is raised.
\end{asparaenum}

\section{Reifying Static Analysis Invariants}
\label{sec:static-analysis}

The static shape-data analysis needed to synthesize short-circuiting validation code from statically-inferred shape-data invariants must 
address two problems.
First, it must have the ability to prove existential weakenings of shape-data invariants even when it cannot prove the original (e.g., prove $\exists \vec{\symu}.\;\icall{\aaddr}{\chker}{\vec{\symu}}$ when $\achkercall$ cannot be proved). While this is challenging in general, we identify a particularly important scenario for short-circuiting validation: tracking arbitrarily complicated data-value constraints in unmodified portions of the data structure. This precise tracking of unmodified portions can be seen as a static analogue of memoization in incremental computation.
We realize this precise tracking of unmodified portions by augmenting a shape-data analysis with what we call a \emph{validated view} abstraction (\secref{validated-views}). 

Second, it must provide access at run time to concrete values corresponding to existentially-quantified symbolic variables in statically-proven invariants. We provide this access by reading off, from the proof, a program expression derived from the pre-condition to witness each symbolic variable (\secref{logic-variable-instrumentation}).




\newcommand{\viewhi}[1]{\text{\relsize{1}\colorbox{black}{\color{white}#1:}}\;\;}
\newcommand{\symuhi}[1]{\text{\colordullx{6}\underline{$#1$}}}

\newsavebox{\SBoxBstExcisePreDM}
\begin{lrbox}{\SBoxBstExcisePreDM}
\begin{tikzpicture}[grow=right]
  \tikzstyle{level 1}=[levelchk]
  \tikzstyle{level 2}=[level distance=11mm, sibling distance=14mm]
  \tikzstyle{level 3}=[level distance=11mm, sibling distance=8mm]
  \tikzstyle{level 4}=[level distance=15mm]
      \node[nd] (n4) {$\aaddr_\vart$}
      child[levelseg] {
        node[nd] (p) {$\aaddr_{\varp}$} edge from parent[chk]
        child[fld] {
          node[nd] (c) {$\aaddr_1$}
          child[fld] {
            node[nd,label=right:{$= \nullval$}] {$\aaddr_2$}
          }
          child[fld] {
            node[nd] {$\aaddr_3$}
            child {
              coordinate edge from parent[chk]
              node[ebelowleft]{$\bst(\aval_1,\aval_{\varp})$}
            }
          }
        }
        child[fld] {
          node[nd] {$\aaddr_4$}
          child[level distance=15mm] {
            coordinate edge from parent[chk]
            node[ebelowleft]{$\bst(\aval_{\varp},\symu_2)$}
          }
        }
        node[ebelowleft]{$\bst(\symvninf,\symvpinf)$}
        node[ebelowright]{$\bst(\symu_1,\symu_2)$}
      }
    ;

    \node[below=6.5mm,left=6.5mm] at (p) {$\spadesuit$};
    \node[nbelow] at (n4) {$\vart$};
    \node[nabove,above=7mm] at (p) {$\viewhi{view at $\aaddr_{\varp}$} \bstcall{\aaddr_{\varp}}{\symu_1}{\symu_2} \lreviwand (\bstcall{\aaddr_1}{\symu_1}{\aval_{\varp}} \lsep \bstcall{\aaddr_4}{\aval_{\varp}}{\symu_2})$};
    \node[nbelow] at (p) {$\varp$};
    \node[nabove,below=9mm] at (c) {$\viewhi{view at $\aaddr_1$} \bstcall{\aaddr_1}{\symu_1}{\aval_{\varp}} \lreviwand (\bstcall{\aaddr_2}{\symu_1}{\aval_1} \lsep \bstcall{\aaddr_3}{\aval_1}{\aval_{\varp}})$};
\end{tikzpicture}
\end{lrbox}

\newsavebox{\SBoxBstExcisePostDM}
\begin{lrbox}{\SBoxBstExcisePostDM}
\begin{tikzpicture}[grow=right]
  \tikzstyle{level 1}=[levelchk]
  \tikzstyle{level 2}=[level distance=11mm, sibling distance=14mm]
  \tikzstyle{level 3}=[level distance=11mm, sibling distance=8mm]
  \tikzstyle{level 4}=[level distance=15mm]
      \node[nd] (nt) {$\aaddr_{\vart}$}
      child[levelseg] {
        node[nd] (p) {$\aaddr_{\varp}$} edge from parent[chk]
        child[fld] {
            node[nd] {$\aaddr_3$}
            child[level distance=15mm] {
              coordinate edge from parent[chk]
              node[ebelowleft]{$\bst(\aval_1,\aval_{\varp})$}
            }
        }
        child[fld] {
          node[nd] {$\aaddr_4$}
          child[level distance=15mm] {
            coordinate edge from parent[chk]
            node[ebelowleft]{$\bst(\aval_{\varp},\symu_2)$}
          }
        }
        node[ebelowleft]{$\bst(\symvninf,\symvpinf)$}
        node[ebelowright]{$\bst(\symu_1,\symu_2)$}
      }
    ;

    \node[nbelow] at (nt) {$\vart$};
    \node[nbelow] at (p) {$\varp$};
    
    \node[below=7.5mm,right=2.5mm] (v1) at (p) {$\aval_1$};
    \node[below=1mm] at (v1) {$\texttt{vmin}$};
\end{tikzpicture}
\end{lrbox}


\setlength{\saveboxsep}{\fboxsep}
\fboxsep 1pt
\newsavebox{\SBoxCodeExampleDMIS}
\begin{lrbox}{\SBoxCodeExampleDMIS}\small\lstnonum
\begin{lstlisting}[language=C,alsolanguage=Spec,alsolanguage=highlighting,alsolanguage=SqueezeOps,style=number]
val excisemin(bt t) {#\lstbeginn#
 assume(#$\bstcall{\vart}{-\infty}{\infty}$#);#\label{line-assumebst}##\hibox{\text{\ttfamily$\aaddr_{\vart}$ = t,\,$\symvninf$ = $-\infty$,\,$\symvpinf$ = $\infty$;}}#
 assume(#$\vart\mathord{\neq}\nullval \land \fldderef{\vart}{\fldl}\mathord{\neq}\nullval$#);#\label{line-assumetough}##\hibox{\text{\ttfamily$\aaddr_{\vart\fldl}$ = $\aaddr_{\vart}$->l,\,$\aaddr_{\vart\fldr}$ = $\aaddr_{\vart}$->r,\,$\aval_{\vart}$ = $\aaddr_{\vart}$->v;}}#
 bt p; for (p = t; true; p = p->l) {                         ###\label{line-excisemin-loop}#
  #\hibox{\text{\ttfamily$\aaddr_{\varp}$ = $\phi(\aaddr_{\vart}, \aaddr_1)$, $\aval_{\varp}$ = $\phi(\aval_{\vart}, \aval_{1})$, $\aaddr_{1}$ = $\phi(\aaddr_{\vart\fldl}, \aaddr_{2})$, $\aaddr_{4}$ = $\phi(\aaddr_{\vart\fldr}, \aaddr_{3})$,}}\label{line-excisemin-join}#
  #\hibox{\text{\ttfamily$\symu_{1}$ = $\phi(\symvninf, \symu_{1})$, $\symu_{2}$ = $\phi(\symvpinf, \aval_{\varp})$;}}\label{line:inst-mapping}#                  
  if (p->l->l == NULL) break;#\label{line-excisemin-loopguard}#
  #\hibox{\text{\ttfamily$\aaddr_2$ = $\aaddr_1$->l, $\aaddr_3$ = $\aaddr_1$->r, $\aval_{1}$ = $\aaddr_1$->v;}}\label{line:instunfoldguard}#
 }
#\InvBox{\scalebox{0.85}{\usebox{\SBoxBstExcisePreDM}}}\label{line:post-findmin}#
 val vmin = p->l->v; p->l = p->l->r;#\label{line-excisemin-excise}#
#\InvBox{\scalebox{0.85}{\usebox{\SBoxBstExcisePostDM}}}\label{line:post-swap}#
 assert(#$\bstcall{\vart}{-\infty}{\infty}$#);#\label{line:origassert}# return vmin;#\lststopn#
}
\end{lstlisting}
\end{lrbox}
\setlength{\fboxsep}{\saveboxsep}

%


In this section, we extend our running example to both illustrate the challenges introduced in \secref{shortcircuit} and provide intuitions for how our analysis and synthesis techniques address them.
We will focus on an example-driven discussion here, and we separately formalize the algorithms for validated views and logic-variable reification in \IfTR{in \appref{validated-view-formal} and \appref{instrumentation-formal}}{Appendix~\ref{TR-app:validated-view-formal} and~\ref{TR-app:instrumentation-formal} of the supplementary material}, respectively.


\subsection{Validated Views}
\label{sec:validated-views}

To describe how our validated view abstraction enriches a shape-data abstraction, we first discuss an example shape-data abstraction.

\PUNCH{We focus on how the abstraction enables
  mitigating inductive imprecision.  The domain itself is
  straightforward.}
  
\paragraph{Preliminaries: Inductive Shape-Data Analysis.}

\SQUEEZE{
In this paper, we assume the
invariant for any particular data structure is specified in a single
inductive definition (though multiple definitions for different kinds
of data structures are permitted); supporting multiple definitions
that specify different constraints over the same memory region is a
matter of enriching the underlying shape analysis
algorithm~\cite{DBLP:conf/vmcai/ToubhansCR13, DBLP:conf/cav/LeeYP11}.
}



Inductive shape
analysis\citeverbelse{distefano+2006:local-shape,magill+2006:inferring-invariants,chang+2007:shape-analysis}{distefano+2006:local-shape,chang+2008:relational-inductive}
uses inductive predicates like $\bst$ to statically summarize
unbounded memory regions. A memory region that satisfies $\bst$
includes the pointer-shape property (i.e., is a binary tree) but also
the data-value property (i.e., that it satisfies the search
invariant). Thus, such a summarized $\bst$ memory region
is statically known to satisfy the binary search tree invariant.
On the left side of \figref{bst-unfoldfold}, we
express a shape-data invariant as a separating shape
graph.
This invariant says a
root $\symt$ satisfies $\bst$ with minimum value $\symmin$ and maximum
value $\symmax$. Here the nodes represent
pointers (i.e., memory addresses), and the edges represent abstract memory
regions.  The thick arrow can be read as a static fact indicating
that the memory region rooted at $\symt$ satisfies the inductive predicate $\bstcall{\symt}{\symmin}{\symmax}$.


In inductive shape analysis, the key operations are (1) materializing
abstractions of single memory cells from
summarized regions by unfolding inductive predicates (left-to-right in
\figref{bst-unfoldfold}) and (2) summarizing memory regions by folding
into inductive predicates, using any number of unary abstraction or
binary widening operators (right-to-left).
\SQUEEZE{%
To visually distinguish
non-pointer values in our graphs, we draw them as nodes without the
circle (e.g., the contents of $\fldv$ field from pointer $\symt$ is
value $\symv$ drawn without a circle).
}%
Consider statically analyzing an iteration traversing into the middle
of a binary search tree with a cursor pointer $\varp$: in
\figref{bst-traverse}, we show a precisely inferred loop invariant that is crucial to be able to synthesize short-circuiting validation.
\SQUEEZE{%
We show the loop invariant at the program location where the fields of the node pointed-to by
$\varp$ have been materialized but  $\varp$ has yet to be advanced.
}%
Here, we indicate that the program variable $\varp$ contains the
address $\aaddr_{\varp}$ by annotating the program variable below the
node representing $\aaddr_{\varp}$.  We adopt the naming convention
that $\aaddr_{\var}$ is the symbolic address of the binary search tree
node pointed-to by program variable $\var$, and $\aval_{\var}$ is the
symbolic data value of that node (i.e., $\var$\texttt{->v}).  The
memory region between addresses $\symt$ and $\aaddr_{\varp}$ is
described by a $\bst$ segment~\cite{chang+2008:relational-inductive},
a form of separating implication, corresponding to a
validation tree with a hole~\cite{DBLP:conf/popl/Minamide98} at $\aaddr_{\varp}$.
The thick arrow between nodes $\symt$ and $\aaddr_{\varp}$ is read
as knowing statically that the memory region from $\symt$ satisfies
the validation check $\bstcall{\symt}{\symmin}{\symmax}$ up to
checking $\bstcall{\aaddr_{\varp}}{\symumin}{\symumax}$. The pair of
symbolic values $\symumin$ and $\symumax$ correspond to the lower and
upper bounds needed for a binary search tree rooted at
$\aaddr_{\varp}$ in order for the tree rooted at $\symt$ to satisfy
the validation check $\bstcall{\symt}{\symmin}{\symmax}$.
This $\bst$
segment summarizes the path that pointer $\varp$ has already
traversed and is derived by folding cells materialized
on the previous iteration.
Shape analysis tools vary in how
they represent segments---but some mechanism to do so is fundamental.

%

%
%
\newsavebox{\SBoxBstChecker}
\begin{lrbox}{\SBoxBstChecker}
\begin{tikzpicture}[grow=right]
  \node[nd] {$\symt$}
    child[levelchk] {
      coordinate edge from parent[chk]
        node[ebelowleft]{$\bst(\symmin,\symmax)$}
    };
\end{tikzpicture}
\end{lrbox}
\newsavebox{\SBoxBstTraverse}
\begin{lrbox}{\SBoxBstTraverse}
\begin{tikzpicture}[grow=right]
  \tikzstyle{level 2}=[levelfld, sibling distance=6mm]
  \tikzstyle{level 3}=[levelchk]
  \node[nd] {$\symt$}
    child[level distance=47mm] {
      node[nd] (c) {$\aaddr_{\varp}$} edge from parent[chk]
      child[fld] {
        node[nd] {}
        child {
          coordinate edge from parent[chk]
          node[ebelowleft]{$\bst(\symumin',\aval_{\varp})$}
        }
        edge from parent node[ebelowmid] {$\fldl$}
      }
      child[leveldata] {
        node {$\aval_{\varp}$}
        edge from parent node[fill=white] {$\fldv$}
      }
      child[fld] {
        node[nd] {}
        child {
          coordinate edge from parent[chk]
          node[ebelowleft]{$\bst(\aval_{\varp},\symumax')$}
        }
        edge from parent node[eabovemid] {$\fldr$}
      }
      node[ebelowleft]{$\bst(\symmin,\symmax)$}
      node[ebelowright]{$\bst(\symumin,\symumax)$}
    };

    \node[nbelow] at (c) {$\varp$};
\end{tikzpicture}
\end{lrbox}
\newsavebox{\SBoxBstUnfold}
\begin{lrbox}{\SBoxBstUnfold}
\begin{tikzpicture}[grow=right]
  \tikzstyle{level 1}=[levelfld, sibling distance=5mm]
  \tikzstyle{level 2}=[levelchk]
  \node[nd] {$\symt$}
    child {
      node[nd] {$\syml$}
      child {
        coordinate edge from parent[chk]
          node[ebelowleft]{$\bst(\symmin,\symv)$}
      }
      edge from parent node[ebelowmid] {$\fldl$}
    }
    child[leveldata] {
      node {$\symv$}
      edge from parent node[fill=white] {$\fldv$}
    }
    child {
      node[nd] {$\symr$}
      child {
        coordinate edge from parent[chk]
          node[ebelowleft]{$\bst(\symv,\symmax)$}
      }
      edge from parent node[eabovemid] {$\fldr$}
    };
\end{tikzpicture}
\end{lrbox}
\begin{figure}\centering\small
%

\subfloat[Unfolding and folding rules for binary search tree summaries
derived from the inductive definition $\bst$.]{\label{fig:bst-unfoldfold}
\scalebox{0.83}{
\(\begin{array}{ccc}
    \begin{array}[m]{@{}c@{}}\fbox{\usebox{\SBoxBstChecker}}\end{array}
    &
    \begin{array}[m]{@{}l@{}}
      \unfold
      \\[0.5ex]
      \fold
    \end{array} 
	&
    \fbox{\begin{tikzpicture}
      \node[nd] {$\symt$};
    \end{tikzpicture}
    \quad
    \raisebox{0.85ex}{$\symt = \nullval$}}
    \\
    & &
    \raisebox{3ex}{$\bigvee$} \qquad
    \fbox{$\begin{array}{@{}c@{}}
      \usebox{\SBoxBstUnfold}
      \\
      \symt \neq \nullval
      \spland \hibox{ \symmin \leq \symv \spland \symv < \symmax }
    \end{array}$}
\end{array}\)
}
}

\subfloat[A precise loop invariant for a traversal of a binary search tree using a cursor pointer \texttt{p} initially summarized by $\bstcall{\symt}{\symmin}{\symmax}$.]{\label{fig:bst-traverse}
\scalebox{0.83}{
  \(\begin{array}{@{}c@{}}
    \usebox{\SBoxBstTraverse} \\
    \aaddr_{\varp} \neq \nullval
    \spland \symmin \leq \hibox{
      \colordullul[3]{\vphantom{\leq_g}\symumin = \symumin'}
      \leq \symv_{\varp} <
      \colordullul[3]{\vphantom{\leq_g} \symumax' = \symumax } } <
      \symmax
\end{array}\)
}
}
\caption{The intertwined shape-data binary search tree invariant.}
\label{fig:example-checker}
\end{figure}
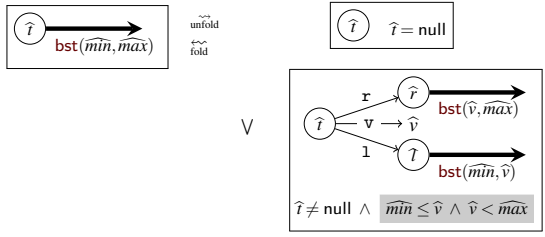
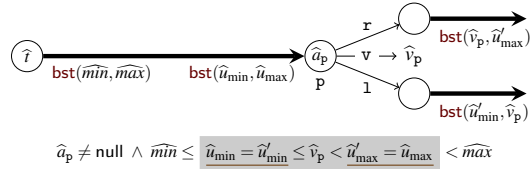

The key observation to make in the rules shown
in \figref{bst-unfoldfold} is that folding (i.e., going
right-to-left) requires verifying a pure, non-memory, data constraint
(shaded) just to summarize a memory region into
a summary predicate like $\bst$ or $\bst$-segment.
In the idealized loop invariant shown in \figref{bst-traverse},
the total ordering constraints on symbolic data values (shaded) must
be tracked by the base data-value domain in order to derive that the
whole memory region from $\symt$ still satisfies
$\bstcall{\symt}{\symmin}{\symmax}$. The two middle constraints
$\symumin' \leq \aval_{\varp} < \symumax'$ are not difficult to derive (come from unfolding
at the current binary search tree node $\aaddr_{\varp}$), but the two outer
equality constraints (underlined) are crucially important to connect
the regions before and after $\aaddr_{\varp}$.
If any of these data-value constraints are lost across joins or widens from interactions with the shape domain, the inferred loop invariant will be too imprecise to witness all symbolic variables and thus to synthesize short-circuiting validation.  Imagine the likelihood that some such imprecision creeps in with arbitrarily complex data-value constraints!

\paragraph{Uninterpreted Shape-Data Predicates.}

The goal of the validated view abstract domain is to compensate for such lost precision by tracking
uninterpreted shape-data predicates.  While it will be uninteresting by itself, when it is combined in a reduced product~\cite{cousot+1979:systematic-design} with a shape-data abstraction, our validated view abstract domain
enables two key operations:
\begin{inparaenum}[(1)]
  \item folding of non-invalidated memory regions regardless of any
    imprecision (in the base data-value domain) when tracking data-value constraints and
    \label{mainpt-views-folding}
  \item precise tracking of equalities that connect data-value
    parameters across memory regions.
    \label{mainpt-views-equalities}
\end{inparaenum}
To illustrate what a validated view provides, consider in more detail the \code{excisemin} function shown in \figref{excisemin}.

The \code{excisemin} binary search tree operation excises the node in the tree 
with the minimum data value.
%
Ignore the shaded lines of code for the moment. 
The first \code[Spec]{assume} (on \reftxt{line}{line-assumebst}) says that the input tree \code{t} is a binary search tree, while the second \code[Spec]{assume} (on \reftxt{line}{line-assumetough})
requires that we are in the case where the root \code{t} and the immediate left child \code{t->l} are non-null. For presentation, we focus on this case, as it is corresponds to when \code{excisemin} loops and thus requires loop invariant inference.  The \code[C]{for} loop on \reftxt{line}{line-excisemin-loop} walks down the left spine of the binary search tree until \code{p} points to the parent node of the minimum data value node (i.e., the leftmost leaf) as terminated by the guard on \reftxt{line}{line-excisemin-loopguard}.  We write the loop in this somewhat non-idiomatic way for presentation purposes to explicitly expose the join point of the loop on \reftxt{line}{line-excisemin-join}; a more standard \code[C]{while$\;$(p->l->l != NULL)} loop is equivalent and does not affect the program analysis.
 On \reftxt{line}{line-excisemin-excise}, the programmer saves the data value of the minimum node in \code{vmin} and excises the node. Finally, the code checks that tree \code{t} is still a binary search tree with the \code[Spec]{assert} on \reftxt{line}{line:origassert}.
Our overall goal is to 
replace this whole-data-structure assertion at the end of the \code{excisemin} operation with a synthesized short-circuiting validation check.
Together with \code{setroot} from \secref{shortcircuit}, we have the helper operations for deleting a node from a binary search tree.

\begin{figure}
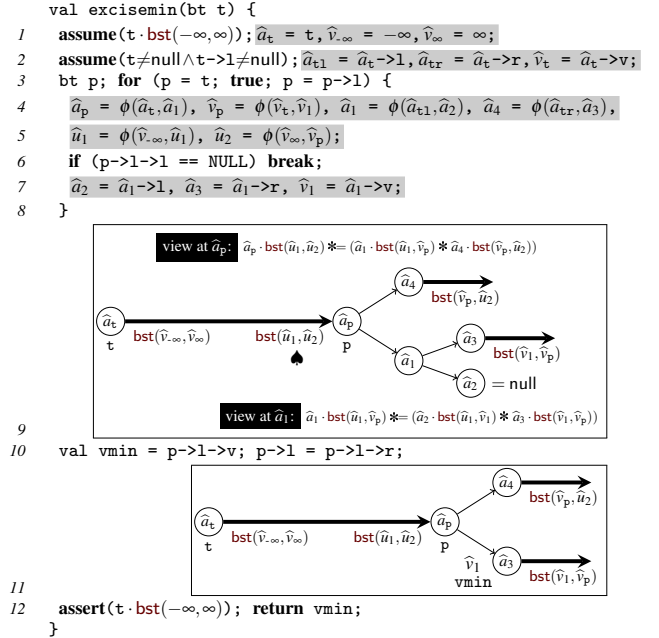

\rule{1.5em}{0pt}\scalebox{0.88}{\usebox{\SBoxCodeExampleDMIS}}
\caption{Inductive shape analysis with validated views and an instrumented programmatic valuation on \texttt{excisemin}, which excises the node with the minimum data value.}
\label{fig:excisemin}
\end{figure}

Now consider the static invariant shown on \reftxt{line}{line:post-findmin} right after the minimum-finding loop. To obtain this invariant, the shape analysis must infer a loop invariant at the head of the \code[C]{for} loop on \reftxt{line}{line-excisemin-loop} similar to the invariant discussed previously in \figref{bst-traverse}. The shape graph on \reftxt{line}{line:post-findmin} is the same, except that the exit condition \code{p->l->l == NULL} is reflected.

The additional annotations in the invariant on \reftxt{line}{line:post-findmin} correspond to validated view predicates that have form $\achkreg \lreviwand \amultichkreg$ where $\achkreg \bnfdef \achkercall$ is an instance of an inductive predicate $\chker$ and
$\amultichkreg  \bnfdef 
    \amultichkreg_1 \lsep \amultichkreg_2
    \bnfalt \lemp
    \bnfalt \achkreg$
is a memory region composed of disjoint instances of inductive predicates.  The semantics of this predicate $\achkreg \lreviwand \amultichkreg$ is a memory region that satisfies $\achkreg$ up to endpoints given by $\amultichkreg$. The predicate $\achkreg \lreviwand \amultichkreg$ implies the separating implication $\achkreg \lrevwand \amultichkreg$ meaning that $((\achkreg \lreviwand \amultichkreg) \lsep \amultichkreg \limp \achkreg)$.
An important point is that a validated view concretizes to a concrete memory, just like the shape graph. The interpretation of the product of a shape graph and a validated view is that the validated view constrains a \emph{sub-store} described by the shape graph. These details are described further in our formalization (\IfTR{\appref{validated-view-formal}}{\refappendix{validated-view-formal}}).

The reason to use validated views is that on an unfolding, the data constraints specified in the inductive predicate are always satisfied on the materialized cells corresponding to the unfolded node---at least until a modification to these materialized cells.  We can see a validated view as an uninterpreted shape-data predicate that is applied to a materialized region until the analysis interprets an update that could invalidate the predicate.  The view at $\aaddr_{\varp}$ shown on \reftxt{line}{line:post-findmin} in \figref{excisemin}
$$\bstcall{\aaddr_{\varp}}{\symu_1}{\symu_2} \lreviwand (\bstcall{\aaddr_1}{\symu_1}{\aval_{\varp}} \lsep \bstcall{\aaddr_4}{\aval_{\varp}}{\symu_2})$$
governs the materialized points-to edges from $\aaddr_{\varp}$ up to $\aaddr_1$ and $\aaddr_4$ (that is, the fields corresponding to \code{p->l}, \code{p->r}, and \code{p->v}).
This view predicate says that the materialized region consisting of these three cells satisfies whatever data constraint is specified in the ``inductive step'' of the inductive predicate $\bst$: in this case, that $\symu_1 \leq \aval_{\varp} < \symu_2$.
By tracking the data-value constraints of a materialized region in this uninterpreted manner, we are able to infer when the shape-data invariants are preserved regardless of the complexity of the data-value constraints.


Hypothetically, consider the worst case for data reasoning where we have no ability to capture any data-value constraints (e.g., there is no part of our abstraction that captures the inequality constraints of $\bst$) or more realistically, that important data-value constraints were lost in widening.
Even in this case, the validated view domain enables our analysis to capture the key facts needed for folding:
(1) that the $\bst$ allowable range parameters for $\aaddr_{\varp}$ are indeed $\symu_1, \symu_2$ (and thus they correspond to the allowable range at the segment endpoint $\aaddr_{\varp}$); and
(2) that the data value at $\aaddr_{\varp}$ (that is, $\aval_{\varp}$) is in this allowable range $(\symu_1, \symu_2]$, as is required to fold into the $\bst$ predicate.  Without fact~(1) maintained by validated views, the allowable range at the segment hole (marked with $\spadesuit$ in \figref{excisemin}) could be inferred as some other allowable range $\symu_1', \symu_2'$.  While this invariant is sound, it is too weak to imply that the entire memory region reachable from \code{t} is a binary search tree (i.e., $\bstcall{\aaddr_{\vart}}{\symvninf}{\symvpinf}$).

Validated views can be seen as a combination of ideas drawing from
separation logic~\cite{reynolds2002:separation-logic:} and predicate
abstraction~\cite{DBLP:conf/cav/GrafS97}, where a set of
inductive-predicate labels further constrains the heap.
The validated view domain itself can be seen as a predicate abstraction domain and thus the
abstract domain operators for validated views are straightforward; for completeness, we give a formal description \refappendix{validated-view-formal}.



\SQUEEZE{
The validated view predicate is in fact a multi-endpoint generalization of an inductive segment.
Essentially, the validated view abstract domain
provides arbitrary views of the same memory in terms of the set of
inductive checkers.
}

\subsection{Logic-Variable Reification}
\label{sec:logic-variable-instrumentation}

At run time, a short-circuiting validation checker must have access to the concrete values corresponding to the symbolic logic variables in the statically proven invariants.  We provide access to
these values by using the structure of the static analysis proof to
automatically construct program expressions that correspond to logic variables.
The shaded lines in \figref{excisemin} are the instrumentation for such expressions.
%
%
%
We add a new program variable for each symbolic variable
in the analysis state and initialize these variables with expressions
from what we call the \emph{programmatic valuation}. 
For presentation purposes, we conflate symbolic variable
names with program variables---so, for example,
the assignment {\ttfamily$\aaddr_{\vart}$ = t} at \reftxt{line}{line-assumebst}
is really assigning to a fresh program
variable whose purpose
is to hold the symbolic value $\aaddr_{\vart}$. 

The programmatic valuation is generated by following transfer functions of the static analysis, specifically for \code[Spec]{assume}, assignment, and join.  Here, we describe how the programmatic valuation is computed by following the \code{excisemin} example.  We give details in a more formal manner \refappendix{instrumentation-formal}.

The key challenge we address is that the static analysis must be precise enough to derive a witness to any newly introduced symbolic variable as a program expression from the existing ones.  This requirement is more demanding than that for a related instrumentation in \citet{magill+2010:automatic-numeric}, where the symbolic variables can be ``witnessed'' by a non-deterministic value (because the output is consumed by another static analysis as opposed to dynamic validation in our case).
In \code{excisemin} on \reftxt{line}{line-assumebst},
the assignments to $\aaddr_{\vart}$, $\symvninf$, and $\symvpinf$ come from the interpretation of the
\code[Spec]{assume} to create the static analysis state. 
The subsequent assignments to $\aaddr_{\vart\fldl}$, $\aaddr_{\vart\fldr}$, and $\aval_{\vart}$ on
\reftxt{line}{line-assumetough} come from unfolding a $\bst$ inductive predicate at $\aaddr_{\vart}$
(stored in \code{t}). 
\SQUEEZE{%
Similarly, the assignments on \reftxt{line}{line:instunfoldguard} come from unfolding at
$\aaddr_1$ (\code{p->l}) in the previous line.
}%
Following the proof to find witnesses to newly introduced symbolic variables for these commands is relatively straightforward.

The complicated case for obtaining program expression witnesses is at join points, such as at \reftxt{line}{line-excisemin-join}.
\SQUEEZE{This join point is at the loop head of the \code[C]{for} loop.}%
We write the assignments here using standard SSA
$\phi$ notation, where the first component is the value from the loop entry edge, and the second
component is the value from the loop back edge.  On \reftxt{line}{line-excisemin-join}, the $\phi$-assignment to $\aaddr_{\varp}$ shows that $\aaddr_{\varp}$ (the value of the
pointer \code{p}) is $\aaddr_{\vart}$ on entry and $\aaddr_1$ on the loop back edge.  This $\phi$-assignment to $\aaddr_{\varp}$ is because
\code{p} is advanced via \code{p = p->l} and can be witnessed, along with the others on this line, by following the join of the shape graphs.
\SQUEEZE{%
The assignments on \reftxt{line}{line-excisemin-join} follow
from similar reasoning.
%
}%

The $\phi$-assignments \code{$\symu_{1}$ = $\;\phi(\symvninf, \symu_1)$, $\symu_2$ = $\;\phi(\symvpinf, \aval_{\varp})$} on \reftxt{line}{line:inst-mapping} (also at the same join point) crucially come from following the join with the validated view constraint.
%
The first component of the two assignments $\symvninf$ and 
$\symvpinf$ are derived from the meaning of an empty segment between $\aaddr_{\vart}$ and
$\aaddr_{\varp}$, but the second comes from the validated view constraint
at $\aaddr_1$ (stored in \code{p->l}).  This view constraint 
$\bstcall{\aaddr_1}{\symu_1}{\aval_{\varp}} \lreviwand \cdots$ (shown on \reftxt{line}{line:post-findmin}) 
allows the analysis to derive that $\symu_1$ and $\symu_2$ (the allowable range at $\aaddr_{\varp}$) are 
equal to $\symu_1$ and $\aval_{\varp}$ (the allowable range at $\aaddr_1$ in the previous loop iteration), respectively.
Without this view constraint at $\aaddr_1$, the inferred static invariant---while sound---is not precise enough to find program expressions that witness $\symu_1$ and $\symu_2$ with concrete values at run-time.

Intuitively, these $\phi$-assignments come from computing the join of shape invariants. As symbolic variables correspond to existential variables, the join computes mappings from the symbolic variables of the resulting shape invariant to the variables of the input invariants. These mappings witness the existential variables in the resulting invariant in terms of variables of the input invariants and thus are reified as these $\phi$-assignments.



Generating programmatic valuations enables an instrumentation
that creates a ``run-time shadow'' of the static analysis state.
Crucially, even though this shadow refers to the heap, it
tracks all symbolic variables in stack locations---the instrumentation
does not add any storage on the heap. This approach allows
our short-circuiting validation to effectively incrementalize
checking of data structure invariants with a constant-space overhead rather than the linear-space overhead of a shadow heap with dynamic memoization.


\section{Synthesizing Short-Circuiting Validation}
\label{sec:dynamic-validation}

\newcommand{\shortify}{\ensuremath{\text{\normalfont\bfseries shortify}}}

The final step of our approach is to synthesize short-circuiting checks that
incrementally validate data structure invariants.  In this section, we describe a proof-directed synthesis algorithm that generates short-circuiting validation checks. 
There are two parts to
this process:
\begin{inparaenum}[(1)]
\item \label{syncallpart} synthesizing calls to the short-circuiting validation checkers
at assertion sites (such as the call to \code[highlighting]{|1:bstsc|} in \figref{validation:bstsc}
in \secref{shortcircuit});
and
\item \label{syncheckerpart} synthesizing the code for the short-circuiting validation checkers themselves (such as the code for \code[highlighting]{|1:bstsc|} itself).
\end{inparaenum}

\newsavebox{\SBoxPreShortify}
\begin{lrbox}{\SBoxPreShortify}\small
\begin{lstlisting}[language=Spec,alsolanguage=highlighting,alsolanguage=SqueezeOps,style=number,firstnumber=11]
$\astore\colon$#\lststopn#
$\scalebox{1}{\usebox{\SBoxBstExcisePostDM}}$#\lststartn#
assert(#$\bstcall{\vart}{-\infty}{\infty}$#);
\end{lstlisting}
\end{lrbox}

\newsavebox{\SBoxPostShortify}
\begin{lrbox}{\SBoxPostShortify}\small
\begin{lstlisting}[language=Spec,alsolanguage=highlighting,alsolanguage=SqueezeOps]
$\symu_1'$ = $\symu_1$; $\symu_2'$ = $\symu_2$; assert(      ##
 |1:bstsegsc|(t, $\aaddr_{\varp}$, #$-\infty$#, #$\infty$#, $\symvninf$, $\symvpinf$,
          &$\symu_1'$, &$\symu_2'$, $\symu_1$, $\symu_2$) &&
 $\symu_1'$ <= $\aaddr_{\varp}$->v && $\aaddr_{\varp}$->v < $\symu_2'$ &&
 |1:bstsc|($\aaddr_3$, $\symu_1'$, $\aaddr_{\varp}$->v, $\aval_1$, $\aval_{\varp}$) &&
 |1:bstsc|($\aaddr_4$, $\aaddr_{\varp}$->v, $\symu_2'$, $\aval_{\varp}$, $\symu_2$)
);
\end{lstlisting}
\end{lrbox}

\paragraph{Synthesis Overview.}
Our approach employs a static verification to drive
synthesis.
\figref{preshortify} shows the \code[Spec]{assert} at the end of the \code{excisemin} procedure from
\figref{excisemin} and the inferred shape-data invariant (now labeled $\astore$) at that program point.
From a static verification perspective, the \code[Spec]{assert} is a request to prove 
that
the  fact $\bstcall{\aaddr_t}{\symvninf}{\symvpinf}$ holds in a sub-heap at that program point, where $\aaddr_t$ is the symbolic
value stored in program variable \code{t}.
%
%
%
In other words, the analysis is asked to prove the abstract inclusion $\astore \sqsubseteq \bstcall{\aaddr_t}{\symvninf}{\symvpinf} \lsep \ltrue$.  In separation-logic--based shape analysis, 
abstract inclusion is verified by
unfolding and subtraction\citeverbelse{berdine+2005:symbolic-execution,
  distefano+2006:local-shape,chang+2007:shape-analysis,rival+2011:calling-context}{berdine+2005:symbolic-execution,distefano+2006:local-shape,rival+2011:calling-context}.
Subtraction-based inclusion checking $\astore_1 \sqsubseteq \astore_2$ works by unfolding the righthand-side $\astore_2$ to match the shape structure (i.e., spatial formulas) of the lefthand-side $\astore_1$ and ``subtracting'' matching shapes until the inclusion is trivial.
(A base data-value domain, decision procedure, or SMT-solver can then be used to try to discharge any pure, data-value constraints for inclusion.)

\subparagraph{\reftxt{Part}{syncallpart}) Synthesizing Calls to Short-Circuiting Checkers.}
We adapt the standard unfolding-subtraction inclusion algorithm
to synthesize short-circuiting validation calls.
%
We describe this algorithm subsequently in \secref{shortify} and \secref{segmentsyn-appendix}.
For the moment, consider \figref{postshortify}, where
we present the calls to short-circuiting checkers that
our approach synthesizes to replace the \code[Spec]{assert} from \figref{preshortify}.

Intuitively, our approach is to synthesize this code
that corresponds to an unfolding of the assertion
$\bstcall{\vart}{-\infty}{\infty}$
to match the shape structure of the inferred static invariant $\astore$.
Here we write \code[highlighting]{|1:bstsegsc|} and \code[highlighting]{|1:bstsc|} as calls to short-circuiting validation checkers
corresponding to full, non--short-circuiting validation checkers for a binary search tree segment
and a complete binary search tree \code[highlighting]{|bst|}, respectively. We synthesize calls to short-circuiting checkers at assertion sites
by (a) unfolding the assertion to match the inferred static invariant and
(b) replacing calls to
non-short-circuiting checkers with calls to their short-circuiting variants.


\subparagraph{\reftxt{Part}{syncheckerpart}) Synthesizing Short-Circuiting Checkers.}
We can also use 
this subtraction-directed synthesis
to generate the deep short-circuiting validation
checkers themselves.
%
%
The key observation is that we can represent the safe replacement criterion
from \secref{shortcircuit} in the form of a generic code template---which we show in 
\figref{sctemplate} for an arbitrary non--short-circuiting checker
$\chker$ (here, type \texttt{ds} is a generic placeholder for the data structure type in question).
We can then apply the same subtraction-directed synthesis approach that
we use in \reftxt{Part}{syncallpart} to generate deep short-circuiting.
Note the similarity of the template to the shallow short-circuiting checker $\fmtidef{bstsc\_a}$
from \figref{bstsc:shallow} in \secref{shortcircuit}.  We write $\shortify$ to
indicate the location where we apply short-circuiting synthesis to generate the recursive calls.


\newcommand{\jsubtractsyn}[3]{\ensuremath{#1 \vdash [\,#3\,] \;\trianglelefteq\; #2}}

\newsavebox{\SBoxScTemplate}
\begin{lrbox}{\SBoxScTemplate}\small\lststopn
\begin{lstlisting}[language=C,alsolanguage=Spec,alsolanguage=SqueezeOps,style=number]
bool $\chker$sc(ds n,
         $\vec{\texttt{val v}}$, $\vec{\texttt{val hv}}$) {#\lstbeginn#
 assume($\;\icall{\texttt{n}}{\chker}{\vec{\texttt{hv}}}\;$);#\label{line-ksc-assume}#
 if ($\vec{\texttt{v}}$ == $\vec{\texttt{hv}}$) return true;#\label{line-ksc-short}#
##
 return $\shortify(\;\icall{\texttt{n}}{\chker}{\vec{\texttt{v}}}\;)$;#\label{line-ksc-ify}##\lststopn#
}
\end{lstlisting}
\end{lrbox}


\newsavebox{\SBoxScsegTemplate}
\begin{lrbox}{\SBoxScsegTemplate}\small
\begin{lstlisting}[language=C,alsolanguage=Spec,alsolanguage=SqueezeOps,style=number]
bool $\chker_1\chker_2$segsc(ds $\fmtvar{n}_1$, ds $\fmtvar{e}_2$, $\vec{\texttt{val v}_1'}$, $\vec{\texttt{val hv}_1}$, $\vec{\texttt{val* ev}_2'}$, $\vec{\texttt{val ehv}_2 }$) {#\lstbeginn#
 assume($(\icall{\texttt{n}_1}{\chker_1}{\vec{\texttt{hv}_1}} \lreviwand \icall{\texttt{e}_2}{\chker_2}{\vec{\texttt{ehv}_2}}) \;\;\lor\;\; \icall{\texttt{n}_1}{\chker_1}{\vec{\texttt{hv}_1}}$);#\label{line-ksegsc-assume}#
 if ($\vec{\texttt{v}_1'}$ == $\vec{\texttt{hv}_1}$) { return true; }#\label{line-ksegsc-short}#
 if ($\fmtvar{n}_1$ == $\fmtvar{e}_2$) { $\vec{\texttt{*ev}_2'}$ = $\vec{\texttt{v}_1'}$; return true; }#\label{line-ksegsc-endpoint}#
 return $\shortify(\;\icall{\texttt{n}_1}{\chker_1}{\vec{\texttt{v}_1'}} \lreviwand \icall{\texttt{e}_2}{\chker_2}{\vec{\texttt{*ev}_2'}} \;\lor\;\; \icall{\texttt{n}_1}{\chker_1}{\vec{\texttt{v}_1'}}\;)$;#\label{line-ksegsc-ify}##\lststopn#
}
\end{lstlisting}
\end{lrbox}

\newsavebox{\SBoxStoreSyntax}
\begin{lrbox}{\SBoxStoreSyntax}\footnotesize
\begin{grammar}
  \astore \in \astoreset & \bnfdef &
  \astore_1 \lsep \astore_2
  \bnfalt \lemp
  \bnfalt \fldpt{\aaddr}{\fld}{\aval}
  \bnfalt \astorepure
  \bnfalt
  \achkreg
  \bnfalt
  \achkreg_1 \lreviwand \achkreg_2
  \bnfalt \ltrue
  \\
      \achkreg & \bnfdef & \achkercall
  \\
  \multicolumn{3}{@{}c@{}}{\(
    \apure\;\;\;\text{pure fact}
    \quad
    \fld\;\;\;\text{fields}
    \quad
    \chker\;\;\;\text{validation checker}
    \quad
    \aaddr, \aval\in\avalset\;\;\;\text{symbolic variables}
  \)}
\end{grammar}
\end{lrbox}

\begin{figure}
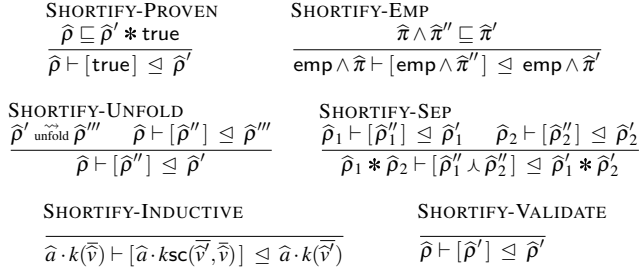
\small\centering
\subfloat[Assert from \texttt{excisemin} (\figref{excisemin}).]{\label{fig:preshortify}
\rule{1em}{0pt}\ovalbox{\scalebox{0.9}{\usebox{\SBoxPreShortify}}}
}
\\
\subfloat[Short-circuiting template.]{\label{fig:sctemplate}
\rule{1em}{0pt}\ovalbox{\scalebox{0.9}{\usebox{\SBoxScTemplate}}}\rule{1em}{0pt}
}
\subfloat[Synthesized short-circuiting.]{\label{fig:postshortify}
\ovalbox{\scalebox{0.9}{\usebox{\SBoxPostShortify}}}
}
%
\hfill\subfloat[Abstract store.]{\label{fig:abs-store-syntax}
\scalebox{0.9}{\usebox{\SBoxStoreSyntax}}
}
\par\noindent\subfloat[Subtraction-directed synthesis of short-circuiting validation calls.]{\label{fig:subtraction}
\begin{mathpar}
\inferrule[Shortify-Proven]{
  \astore \sqsubseteq \astore' \lsep \ltrue
}{
  \jsubtractsyn{\astore}{\astore'}{\ltrue}
}

\inferrule[Shortify-Emp]{
  \apure \land \apure'' \sqsubseteq \apure'
}{
  \jsubtractsyn{\lemp \land \apure}{\lemp \land \apure'}{\lemp \land \apure''}
}
\\
\inferrule[Shortify-Unfold]{
  \astore' \unfold \astore'''
  \\
  \jsubtractsyn{\astore}{\astore'''}{\astore''}
}{
  \jsubtractsyn{\astore}{\astore'}{\astore''}
}

\inferrule[Shortify-Sep]{
  \jsubtractsyn{\astore_1}{\astore_1'}{\astore_1''}
  \\
  \jsubtractsyn{\astore_2}{\astore_2'}{\astore_2''}
}{
  \jsubtractsyn{\astore_1 \lsep \astore_2}{\astore_1' \lsep \astore_2'}{\astore_1'' \ltrueandtrue \astore_2''}
}
\\
%
\inferrule[Shortify-Inductive]{
}{
  \jsubtractsyn{ \achkercall }{ \icall{\aaddr}{\chker}{\vec{\aval'}} }{
    \icall{\aaddr}{\ichk{\chker}}{\vec{\aval'}, \vec{\aval}}
  }
}
%

\inferrule[Shortify-Validate]{
}{
  \jsubtractsyn{\astore}{\astore'}{\astore'}
}
\end{mathpar}
}
\caption{Synthesizing short-circuiting validation with short-circuiting checker templates and inductive shape analysis.}
\label{fig:synthesis-bigfig}
\end{figure}

\subsection{Shortify: Subtraction-Directed Synthesis}
\label{sec:shortify}

In this subsection, we formalize our subtraction-directed short-circuiting synthesis algorithm, which is an adaption of a standard inclusion over stores.
%
We assume a programmatic valuation (\secref{logic-variable-instrumentation}) 
to convert from a symbolic variable
to code that, at run time, retrieves a concrete value represented by that variable.
So, we will, for clarity, only use symbolic variables
in the remainder of this section.

In \figref{subtraction}, we define a judgment of the form $\jsubtractsyn{\astore}{\astore'}{\astore''}$.
The judgment form is a three-place relation between the \emph{statically proven store} $\astore$, the \emph{store to be dynamically validated} $\astore''$, and the \emph{assertion store} $\astore'$.
With this judgment form, we want to derive when it is sound to optimize \code[Spec]{assert(}$\astore'$\code{)} in the original program to \code[Spec]{assert(}$\astore''$\code{)} under the assumption of the sound statically-inferred invariant $\astore$.
%
In other words, we have inferred the invariant $\astore$ and are asked to prove an assertion $\astore'$, so
we abduct a hypothesis $\astore''$ that allows us to prove the following implication:
\begin{equation}\label{eqn:shortify-implication}
\astore \limp \left((\astore'' \lsep \ltrue) \limp (\astore' \lsep \ltrue)\right)
\end{equation}
We have thus reduced our synthesis problem to an abduction problem.
%
%
The syntactic form of stores $\astore$ is given in \figref{abs-store-syntax}, which are separation logic formulas with inductive summaries $\achkreg$ and $\achkreg_1 \lreviwand \achkreg_2$.

%
The \TirName{Shortify-Proven} rule describes a degenerate case of short-circuiting synthesis:
if the analysis can prove the inclusion statically, then there is no need
to check anything dynamically and thus it can eliminate the check (i.e., no store constraints $\ltrue$).
%
The remaining rules express the power of our approach: they enable us to go beyond the
false ``all-or-nothing'' dichotomy between static proof and dynamic validation.


The \TirName{Shortify-Emp} rule expresses the axiom for subtraction-based inclusion---the empty store $\lemp$ is contained in $\lemp$---except that the analysis is \emph{not} obligated to prove the pure constraint $\apure'$ assuming $\apure$.  Instead, it is permitted to leave a residual dynamic check $\apure''$. 

The next two rules describe the basic decomposition of the store for subtraction.
Rule \TirName{Shortify-Unfold} permits unfolding of the assertion store $\astore'$. That is, it can unfold $\astore'$ in order to match the structure of statically-inferred invariant $\astore$.  Existential variables introduced as part of unfolding the $\astore'$ are unified with the corresponding value in $\astore$.
To check inclusion of matching disjoint regions, rule \TirName{Shortify-Sep} uses the statically-inferred invariant $\astore_1 \lsep \astore_2$ to prove the separation constraint in assertion store $\astore_1' \lsep \astore_2'$. The synthesized validation for the two regions $\astore_1''$ and $\astore_2''$ are combined with $\astore_1'' \ltrueandtrue \astore_2''$, which is our shorthand for $(\astore_1'' \lsep \ltrue) \land (\astore_2'' \lsep \ltrue)$ meaning that $\astore_1''$ and $\astore_2''$ can be disjoint or overlapping. While the rule would be sound with $\astore_1'' \lsep \astore_2''$, this more general rule emphasizes that the separation constraint in the assertion store is proven by the statically-inferred invariant, so the dynamic validation need not check separation.

The \TirName{Shortify-Inductive} rule is the key rule that leverages the statically-inferred invariant. It attempts to synthesize short-circuiting validation for $\smash{ \icall{\aaddr}{\chker}{\vec{\aval'}} }$ when the statically-inferred invariant has shown $\smash{ \achkercall }$.
The synthesis is guided by the statically-inferred invariant in that the root address $\aaddr$ and the inductive checker $\chker$ match but permits the additional arguments to be different.
%
This rule motivates the safe replacement criterion with the short-circuiting checker \code[highlighting]{|1:bstsc|} given in \secref{shortcircuit}. Stated in terms of $\ichk{\chker}$, the short-circuiting checker $\ichk{\chker}$ should satisfy:
\[
\achkercall
\limp
\left(\icall{\aaddr}{\ichk{\chker}}{\vec{\aval'}, \vec{\aval}}
\limp
\icall{\aaddr}{\chker}{\vec{\aval'}}\right) \;.
\]
An application of this rule means that when asked to shortify $\smash{ \icall{\aaddr}{\chker}{\vec{\aval'}} }$, we rewrite it
to
$\smash{ \icall{\aaddr}{\ichk{\chker}}{\vec{\aval'}, \vec{\aval}} }$. This rewriting is sound
because we have statically proven the assumption $\smash{ \achkercall }$ and checking $\smash{ \icall{\aaddr}{\ichk{\chker}}{\vec{\aval'}, \vec{\aval}} }$ gives us the desired condition $\smash{ \icall{\aaddr}{\chker}{\vec{\aval'}} }$.
%

Separately, we need to synthesize code for the short-circuiting invariant checker $\ichk{\chker}$ itself and ensure that it satisfies its safe replacement criterion.  In \figref{sctemplate}, we show the synthesis template for $\ichk{\chker}$.  The $\shortify$ directive indicates application of this synthesis routine to generate short-circuiting validation of $\icall{\texttt{n}}{\chker}{\vec{\texttt{v}}}$.  Without this directive, we have a shallow short-circuiting invariant checker that satisfies its safe replacement criterion.  The $\shortify$ then yields a deep short-circuiting invariant checker.

The last rule \TirName{Shortify-Validate} is not allowed in our implementation, but we include it here for discussion. This is a degenerate case of short-circuiting synthesis that is dual to \TirName{Shortify-Proven}: it simply gives up on using the static information $\astore$ and simply checks the assertion $\astore'$ dynamically.
By disallowing this rule, we can observe that there are no separation constraints that need to be validated in $\astore''$.


\paragraph{Shortifying Asserts.}

The final output is a code transformation that rewrites all \code[Spec]{assert}s to perform short-circuiting validation at run time---we apply $\shortify$ to every \code[Spec]{assert(}$\astore'$\code{)}.
To see intuitively why this transformation is sound, consider such an \code[Spec]{assert(}$\astore'$\code{)}. Before this synthesis phase, the static shape analysis will have derived the following program assertion:
\[
\{\,\astore\,\}\;\mbox{\code[Spec]{assert(}}\astore'\mbox{\code{)}}\;\{\, \astore \land (\astore' \lsep \ltrue)\,\}
\]
where $\astore$ is the inferred invariant (i.e., the proven store) at the program point before this \code[Spec]{assert}.
Now, shortification infers a predicate $\astore''$ such that the judgment
$\jsubtractsyn{\astore}{\astore'}{\astore''}$ holds, which means the implication
$\astore \limp ((\astore'' \lsep \ltrue) \limp (\astore' \lsep \ltrue))$ holds.
Because the post-condition of the transformed \code[Spec]{assert} is
\[
\{\,\astore\,\}\;\mbox{\code[Spec]{assert(}}\astore''\mbox{\code{)}}\;\{\, \astore \land (\astore'' \lsep \ltrue)\,\} \;,
\]
we get the original post-condition using the above implication.
%

To state the soundness condition for $\trianglelefteq$, we consider 
concrete semantic domains for a concrete memory $\cmem : \caddrset \finitemap \cvalset$
that maps addresses to values where we assume the set of addresses are
contained in values (i.e., $\caddrset \subseteq \cvalset$) and a
valuation $\cvalua : \avalset \rightarrow \cvalset$ that maps symbolic variables to concrete values.
Symbolic variables are existential variables
naming heap addresses and values, and they form
the coordinates for the base pure, data-value domain.
The concretization of an abstract store $\astore$ thus yields a pair of a concrete memory and a valuation (i.e., $\inconc{\cmemvalua[\cmem][\cvalua]}{\astore}$).

\begin{theorem}[Shortify Soundness]\label{thm:shortify-soundness-main}
If $\inconc{\cmemvalua[\cmem][\cvalua]}{\astore}$ and \\
$\jsubtractsyn{\astore}{\astore'}{\astore''}$ and
$\inconc{\cmemvalua[\cmem''][\cvalua]}{\astore''}$
where $\cmem'' \subseteq \cmem$,
then $\inconc{\cmemvalua[\cmem'][\cvalua]}{\astore'}$ for some $\cmem' \subseteq \cmem$.
\end{theorem}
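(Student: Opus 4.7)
The plan is to prove Theorem~\ref{thm:shortify-soundness-main} by structural induction on the derivation of $\jsubtractsyn{\astore}{\astore'}{\astore''}$, discharging one case per rule in \figref{subtraction}. Reading the conclusion operationally, we must exhibit a sub-memory of $\cmem$ that models $\astore'$, built from the sub-memory $\cmem''$ that models the synthesized residual $\astore''$ together with the ambient $\cmem$ that models the statically-inferred invariant $\astore$.

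The base cases are direct. For \TirName{Shortify-Proven}, soundness of the statically-checked inclusion $\astore \sqsubseteq \astore' \lsep \ltrue$ applied to $\cmem$ yields a splitting $\cmem = \cmem' \uplus \cmem_\star$ with $\inconc{\cmemvalua[\cmem'][\cvalua]}{\astore'}$. For \TirName{Shortify-Emp}, both spatial parts are $\lemp$, forcing $\cmem = \cmem'' = \emptyset$, and $\apure'$ holds in $\cvalua$ by the pure entailment $\apure \land \apure'' \sqsubseteq \apure'$; take $\cmem' = \emptyset$. The excluded case \TirName{Shortify-Validate} is trivial by setting $\cmem' = \cmem''$. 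For \TirName{Shortify-Unfold}, the rewriting $\astore' \unfold \astore'''$ is a concretization-preserving case split, so the IH on the sub-derivation produces a $\cmem' \subseteq \cmem$ modeling $\astore'''$, which therefore models $\astore'$.

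The principal structural obstacle is \TirName{Shortify-Sep}. Hypothesis~1 splits $\cmem = \cmem_1 \uplus \cmem_2$ with $\inconc{\cmemvalua[\cmem_i][\cvalua]}{\astore_i}$, and hypothesis~3, applied to $\astore_1'' \ltrueandtrue \astore_2'' = (\astore_1'' \lsep \ltrue) \land (\astore_2'' \lsep \ltrue)$, unpacks into sub-memories $\cmem_i''' \subseteq \cmem''$ with $\inconc{\cmemvalua[\cmem_i'''][\cvalua]}{\astore_i''}$. To apply the IH side-wise we need $\cmem_i''' \subseteq \cmem_i$, but a priori $\cmem_i'''$ could straddle both parts of $\cmem$. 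The fix is a supporting footprint lemma, itself proved by induction on $\trianglelefteq$: for any derivable $\jsubtractsyn{\astore}{\astore'}{\astore''}$, the domain of any concretization of $\astore''$ inside a larger host memory must be contained in the domain of the corresponding concretization of $\astore$ in that host. The separation in $\astore_1 \lsep \astore_2$ then rules out cross-contamination, giving $\cmem_i''' \subseteq \cmem_i$; the IH yields $\cmem_i' \subseteq \cmem_i$ with $\inconc{\cmemvalua[\cmem_i'][\cvalua]}{\astore_i'}$, and $\cmem' = \cmem_1' \uplus \cmem_2'$ is the witness for $\astore_1' \lsep \astore_2'$.

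The substantive case \TirName{Shortify-Inductive} factors through the safe replacement criterion of the synthesized short-circuiting checker $\ichk{\chker}$, namely $\achkercall \limp (\icall{\aaddr}{\ichk{\chker}}{\vec{\aval'}, \vec{\aval}} \limp \icall{\aaddr}{\chker}{\vec{\aval'}})$. Applied to hypotheses~1 and~3 with $\cmem'' \subseteq \cmem$, this yields $\inconc{\cmemvalua[\cmem][\cvalua]}{\icall{\aaddr}{\chker}{\vec{\aval'}}}$; take $\cmem' = \cmem$. The hardest part of the overall development is therefore not this induction but discharging this safe replacement obligation: it requires a parallel application of $\jsubtractsyn{\cdot}{\cdot}{\cdot}$ to the template in \figref{sctemplate} and a separate termination argument for the mutual recursion between call-site shortification and checker-body shortification.
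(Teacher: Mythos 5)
Your handling of the routine cases (\TirName{Shortify-Proven}, \TirName{Shortify-Emp}, \TirName{Shortify-Unfold}, \TirName{Shortify-Validate}) matches the paper's, and your remark in the \TirName{Shortify-Sep} case---that applying the induction hypothesis side-wise requires the sub-memories concretizing $\astore_1''$ and $\astore_2''$ to land inside $\cmem_1$ and $\cmem_2$ respectively---flags a real subtlety that the paper's own proof applies silently. (Your proposed footprint lemma is, however, false as stated: when the residual is $\ltrue$, as produced by \TirName{Shortify-Proven}, its concretization is an arbitrary sub-memory of the host, so no domain containment in the footprint of $\astore$ can be inferred; the repair is to observe which cases of the induction actually consume the containment hypothesis.)

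The genuine gap is in \TirName{Shortify-Inductive}, which is the crux of the theorem. That rule is an axiom, so an induction on the derivation of $\trianglelefteq$ alone gives you nothing to recurse on there. You discharge the case by citing the safe replacement criterion of $\ichk{\chker}$ as an external fact, and then concede that establishing it involves a mutual recursion between call-site shortification and checker-body shortification requiring ``a separate termination argument''---which you never supply. That unresolved circularity is exactly where the content of the proof lives: the recursive calls in the body of $\ichk{\chker}$ are themselves introduced by \TirName{Shortify-Inductive}, so their soundness \emph{is} the safe replacement criterion again. The paper breaks the cycle without any separate lemma by (i) generalizing the statement with a frame $\astore_0$ on the statically-proven side and (ii) inducting lexicographically on the pair consisting of the derivation of the concretization $\inconc{\cmemvalua[\cmem_0 \uplus \cmem][\cvalua]}{\astore_0 \lsep \astore}$ followed by the derivation of $\trianglelefteq$. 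In the inductive case, concretizing $\achkercall$ must pass through \TirName{CStore-Unfold}, producing a strictly smaller concretization derivation for the unfolded store; the definition of $\ichk{\chker}$ supplies, for each unfolding disjunct, a shortify derivation of the form $\jsubtractsyn{\astore_{\mathrm{u}}}{\icall{\aaddr}{\chker}{\vec{\aval'}}}{\astore_{\mathrm{u}}''}$, and the induction hypothesis applies because the first component of the measure decreased. The ``termination argument'' you are missing is thus well-founded induction on the concretization derivation of the (finite) concrete memory; without identifying this or an equivalent decreasing measure, your proof does not go through at its central step.
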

%
%
This statement is a formalization of the implication shown in \eqref{eqn:shortify-implication};
a proof is given \refappendix{shortify-soundness}.

\subsection{Synthesizing Short-Circuiting Segments}
\label{sec:segmentsyn-appendix}

The previous subsection describes a procedure for synthesizing short-circuiting validation sufficient for the \code{setroot} example in \figref{validation} and for most recursive procedures over data structures. In particular, the previous subsection focuses on the case where the static shape analysis need only summarize complete data structures (e.g., a entire binary search tree).
An additional challenge for static shape analysis on iterative programs versus recursive programs is the need to summarize prefixes of data structures.
For example, in the shape invariant shown in \figref{preshortify} before the final \code[Spec]{assert} in \code{excisemin}, there is an inductive segment from $\aaddr_\vart$ to $\aaddr_\varp$.
Recall that we write this segment as the formula
\[
    \bstcall{\aaddr_{\vart}}{\symvninf}{\symvpinf} \lreviwand \bstcall{\aaddr_{\varp}}{\symu_1}{\symu_2} \;,
\]
which describes a binary search tree at $\aaddr_{\vart}$ with allowable range $(\symvninf,\symvpinf]$ except with a hole at $\aaddr_{\varp}$ such that the binary search tree could be completed if the hole is filled with a binary search tree with allowable range $(\symu_1, \symu_2]$. With such inductive segments, the shape analysis has a different kind of summary form to summarize the binary search tree \emph{prefix} from \code{t} to \code{p}.

We can synthesize short-circuiting validation for this different summary form by introducing a new short-circuiting template and a new \TirName{Shortify} rule that introduces calls to this new short-circuiting template.
In
\figref{preshortify}, the inductive segment from the root $\aaddr_\vart$ to the end point $\aaddr_\varp$ is incrementalized with a call to a synthesized \emph{short-circuiting segment checker} \code[highlighting]{|1:bstsegsc|} in
\figref{postshortify}.
To see how the call to \code[highlighting]{|1:bstsegsc|} is synthesized, consider the new rule \TirName{Shortify-Segment} shown in \figref{subtraction-appendix}. This rule synthesizes a call to a short-circuiting segment checker $\iseg{\chker_1}{\chker_2}$ even when the additional arguments differ. That is, when asked to synthesize short-circuit validation for the specific inductive segment $\smash{ \icall{\aaddr_1}{\chker_1}{\vec{\aval_1'}} \lreviwand \icall{\aaddr_2}{\chker_2}{\vec{\aval_2'}} }$, we require only a proof of $\smash{ \icall{\aaddr_1}{\chker_1}{\vec{\aval_1}} \lreviwand \icall{\aaddr_2}{\chker_2}{\vec{\aval_2}} }$ that may differ in the additional arguments.
Requiring this weaker property in the additional arguments is analogous to the difference permitted in \TirName{Shortify-Inductive}.

\newsavebox{\SBoxShortifySegmentRule}
\begin{lrbox}{\SBoxShortifySegmentRule}\small
\(
%
%
%
%
%
\inferrule[Shortify-Segment]{
  \astore'' = \icall{\aaddr_1}{\iseg{\chker_1}{\chker_2}}{\aaddr_2, \vec{\aval_1'}, \vec{\aval_1}, \vec{\aval_2'}, \vec{\aval_2}}
}{
  \jsubtractsyn{
    \left(\icall{\aaddr_1}{\chker_1}{\vec{\aval_1}} \lreviwand
\icall{\aaddr_2}{\chker_2}{\vec{\aval_2}}\right)
  }{
    \left(\icall{\aaddr_1}{\chker_1}{\vec{\aval_1'}} \lreviwand
\icall{\aaddr_2}{\chker_2}{\vec{\aval_2'}}\right)
  }{
    \astore''
  }
}
\)
\end{lrbox}

\begin{figure}\small\centering
%
\subfloat[Short-circuiting segment checker.]{\label{fig:scstemplate-appendix}
\rule{1em}{0pt}\ovalbox{\scalebox{0.85}{\usebox{\SBoxScsegTemplate}}}
}
%
\\
\subfloat[Subtraction-directed synthesis of short-circuiting validation calls.]{\label{fig:subtraction-appendix}
\scalebox{1}{\usebox{\SBoxShortifySegmentRule}}
}
\caption{Synthesizing short-circuiting validation of inductive segments.}
\label{fig:synthesis-bigfig-appendix}
\end{figure}

\newsavebox{\SBoxSegmentSafeReplacement}
\begin{lrbox}{\SBoxSegmentSafeReplacement}\small
\(\begin{array}{@{}l@{\;}l@{}}
\text{if} & \icall{\aaddr_1}{\chker_1}{\vec{\aval_1}} \lreviwand \icall{\aaddr_2}{\chker_2}{\vec{\aval_2}} \\
\text{then} &
\icall{\aaddr_1}{\iseg{\chker_1}{\chker_2}}{\aaddr_2, \vec{\aval_1'}, \vec{\aval_1}, \vec{\aval_2'}, \vec{\aval_2}}
\limp
(\icall{\aaddr_1}{\chker_1}{\vec{\aval_1'}} \lreviwand \icall{\aaddr_2}{\chker_2}{\vec{\aval_2'}})
\end{array}\)
\end{lrbox}

The difference that we permit in the \TirName{Shortify-Segment} rule leads to a safe replacement criterion that the short-circuiting segment checker $\iseg{\chker_1}{\chker_2}$ should satisfy:
\[
\scalebox{0.95}{\usebox{\SBoxSegmentSafeReplacement}}
\]
based on its use in that rule. This safe replacement criterion motivates the short-cir\-cuit\-ing segment checker template given in \figref{scstemplate-appendix} (we have annotated the parameters with subscripts and primes to more readily see related components). It is captured in the first disjunct of the \code[C]{return} condition on \reftxt{line}{line-ksegsc-ify} and the first disjunct of the \code[Spec]{assume} on \reftxt{line}{line-ksegsc-assume}.
	If we ignore the assignment to $\vec{\texttt{*ev}_2'}$ on \reftxt{line}{line-ksegsc-endpoint} and the second disjuncts for now, we have the analogous template as for inductives $\ichk{\chker}$---\reftxt{line}{line-ksegsc-short} checks the short-circuiting condition.

To see where the assignment to $\vec{\texttt{*ev}_2'}$ comes from, let us consider how the symbolic variables $\smash{ \vec{\aval_2'} }$ arise when the \TirName{Shortify-Segment} rule is applied. Because segments are not directly specified in assertions, this segment with symbolic variables $\smash{ \vec{\aval_2'} }$ must result from unfolding during subtraction (using the \TirName{Shortify-Unfold} rule) to match the shape of the inferred invariant.	 For this reason, they are not present in the analysis state and thus not in the programmatic valuation. However, we observe that we can instrument the short-circuiting segment checker template to find concrete witnesses for these symbolic variables.
In the case that the short-circuiting condition applies (\reftxt{line}{line-ksegsc-short}), it must be the case that $\smash{ \vec{\aval_2'} = \vec{\aval_2} }$, that is, the arguments at the end point $\aaddr_2$ correspond to the statically-inferred invariant.
Our synthesis procedure will thus initialize the contents of $\smash{ \vec{\texttt{*ev}_2'} }$ to $\smash{ \vec{\texttt{ehv}_2} }$ at the top-level call site, as we have shown with the assignments to $\symu_1'$ and $\symu_2'$ with $\symu_1$ and $\symu_2$, respectively, in \figref{postshortify}.
In the case that we reach the endpoint \texttt{e} on \reftxt{line}{line-ksegsc-endpoint}, we witness $\smash{ \vec{\aval_2'} }$ with $\smash{ \vec{\aval_1} }$ using the assignment \code{$\vec{\texttt{*ev}}$ = $\;\vec{\texttt{v}}$} using the semantics of the empty segment.

This template design enables additional precision in dynamic validation. Alternatively, we could strengthen the static obligation so that $\smash{ \vec{\aval_2'} = \vec{\aval_2} }$ (i.e., disallow a difference between the additional arguments at $\aaddr_2$) at the cost that short-circuiting applies less often.

Finally, the second disjuncts of the \code[C]{return} condition on \reftxt{line}{line-ksegsc-ify} and the \code[Spec]{assume} on \reftxt{line}{line-ksegsc-assume} are another optimization in our template design. An inductive segment $\achkreg_1 \lreviwand \achkreg_2$ is defined by unfolding $\achkreg_1$ until it either terminates in a base case or matches $\achkreg_2$ (cf. \secref{validated-views}).
The second disjuncts generalize the replacement criterion so that this short-circuiting checker can also be used in the context the static analysis has proven $\icall{\texttt{n}}{\chker}{\vec{\texttt{hv}}}$, that is, where there is a full validation from \code{n} with no endpoint.  This additional assumption permits this same short-circuiting checker to be used in the recursive calls generated by $\shortify$ at \reftxt{line}{line-ksc-ify}, regardless of whether the recursive check is a full validation or a segment.

\paragraph{Discussion: Limitations and Generalizations.}
Fundamentally, synthesizing short-circuiting validation checkers from templates requires proving implications in inductive separation logic and thus is necessarily incomplete.  To synthesize from the templates 
shown in Figures~\ref{fig:sctemplate} and~\ref{fig:scstemplate-appendix}, we unfold once in the \code[Spec]{assume} on \reftxt{line}{line-ksc-assume}
and in the \code[C]{return} on
\reftxt{line}{line-ksc-ify}.
This strategy is effective at proving the implication of interest for common inductive definitions, like $\bst$, that unfold ``one-node-at-a-time''---but, in general, is not guaranteed to succeed.

As we can observe from the above discussion about inductive segments, there is a design trade-off between the \TirName{Shortify} rules and the short-circuiting templates. The \TirName{Shortify} rules specify the desired weaker static verification obligation, which dictates the remaining obligation that must be satisfied by the synthesized short-circuiting checker.
In this paper, we have asked the static verification to prove the same structure (cf. \TirName{Shortify-Inductive} and \TirName{Shortify-Segment}) with potentially different additional arguments.
This design decision leads to short-circuiting templates that are both effective in realizing short-circuiting validation and reasonable to synthesize.
But clearly, this choice is simply one in a potentially large design space. The dynamic validation of different kinds of summaries may be shortified with a similar approach using different \TirName{Shortify} rules but potentially requiring different synthesis strategies for short-circuiting checker templates.

\section{Empirical Evaluation}
\label{sec:empirical-evaluation}



\PUNCH{Setting up the section.  This section will consider the
  hypotheses.}


We test our short-circuiting validation approach using a
prototype implementation called \Tool{}. In particular, we seek to
test the following expectations:
\begin{inparaenum}[(1)]
\item Theoretical asymptotic improvements from short-circuiting
  translate to dramatic
  empirical speedups in run-time validation on workloads even
  assuming worst-case inference imprecision in the static verifier.
\item Short-circuiting validation is effective even when the
  pre-conditions are too weak to imply the validated post-condition
  (i.e., when it is not sound to eliminate the post-condition).
\item The absence of a shadow heap in \Tool{} results in
  incremental validation with low overhead.
\end{inparaenum}

\paragraph{Experimental Setup and Static Analysis Time.}

%
%
We use \Tool{} to incrementally verify-validate the set of workloads
shown in \figref{ivv-performance}. Each data structure has an
invariant validation checker that requires numerical constraints on
the data structure contents.
All workloads consist of repeatedly applying a data structure
operation assuming the data structure invariant holds on input and then asserting
that the data structure invariant again holds for the output.

%
%
We consider both classic data structures, such as ordered
singly-linked lists (\liststruct{}) and binary search trees
(\bststruct{}), as well as structures less commonly
considered by static verifiers: treaps (\treapstruct{}) and hash tries
(\hashtriestruct{}). A treap is a randomized version of a binary
search tree that is heap-ordered on a set of randomly-generated
priorities; a hash trie is a tree-structured hash map that uses
bit-blocks from hash keys to form a
trie~\cite{bagwell2001:ideal-hash}---challenging even for state-of-the-art static
verification.
And even for binary search trees---where the invariant seems amenable to current static reasoning techniques---certain operations like deletion (that have non-local modification) can be surprisingly challenging to verify statically compared to simpler operations like insertion~\cite{Magill-Personal-Communication-2014}.
%
Our  validated views domain handles
these structures for short-circuiting synthesis because it can preserve uninterpreted data constraints on unmodified regions.
\SQUEEZE{Overall, we have specifically considered tree-like
structures in order to obtain reliable inference of shape invariants as
input to the synthesizer.}

%
%
To stress test our approach, we intentionally instantiate the
shape-data analyzer in \Tool{} with the ``most lossy'' data
domain. This domain can only keep constraints in
straight-line code and loses all of them (i.e., goes to $\top$) at any
join point. Complex operations, like bit operations, are treated as
uninterpreted functions. This imprecision represents the
scenario in which the data property is out of reach of static verification.
%



\SQUEEZE{
\Tool{} introduces instrumentation to reify the programmatic valuation
connecting symbolic, logic variables in the abstract memory invariants
computed by the static analysis with concrete values in the current
execution (\secref{logic-variable-instrumentation}). We hypothesize
that the cost of logic-variable instrumentation should be reasonable given that
only stack variables are introduced. As some evidence for this
hypothesis, we performed instrumentation on all of our examples with
no asserts (i.e., no data structure validation checks). The geometric
mean of the slowdown over a non-instrumented, unchecked configuration
was 42\%.
}






We measured the static analysis time to infer invariants
for our examples. As our primary concern is not the efficiency of the static
analysis, our data-value domain is not particularly optimized.  It
calls the Z3 SMT-solver for inclusion checking.  The total analysis time for
all our examples was 640.2s.
It is likely that improved engineering can lower the analysis time by
making fewer SMT calls. If we exclude the Z3 time, then the remaining
analysis time was 27.9s.
\SQUEEZE{
The code size expansion for these benchmark falls in 80-270\% due to
instrumentation. 
Note that because we have deliberately instantiated \Tool{} with the
``lossy'' domain, our analyzer will continue to instrument code along
many infeasible paths.  Thus, while code size was not a chief concern
of ours, our reported figures are essentially the worst case.
}






\begin{figure}\centering\small
  \subfloat[Execution time of data structure workloads
  with short-circuiting incremental verification-validation (\Inc); no
  validation (\Unsafe); and full dynamic validation (\Noninc).  The
  workloads execute $m$ data structure
  operations---with a dynamic validation check between each operation
  in the \Noninc{} and \Inc{} variants.
  We chose the number of operations $m$ to
  ensure that the workloads run long enough to accurately measure elapsed time.
  For most workloads,
  the fastest variant (that with no validation, \Unsafe{})
  takes at least 1 second. In cases where the slowest, fully dynamic validation variant
  (\Noninc{}) would not finish within hours on that size, we chose $m$ to
  ensure that the variant without validation (\Unsafe{}) takes a minimum of
  0.05s.
  These workloads
are compiled with \texttt{gcc -O2} and measured on an Intel Core 2 Quad CPU
Q6600 2.40GHz with 4G RAM. The time reported is averaged over 12 runs.]{\label{fig:ivv-performance}
\begin{tabular*}{\linewidth}{@{\extracolsep{\fill}}lr rrr@{}}\toprule
                & $m$       & \Inc  & slowdown  & speedup     \\
workload        & (K)       & (s)   & wrt \Unsafe  & wrt \Noninc   \\
\midrule
\listconcat     & 1,180     & 1.163 &     1.07x &      2.1x \\
\listdrop       & 600       & 0.053 &     1.01x & 230000~~x \\
\listinsert     & 24        & 1.185 &     1.12x &      2.8x \\
\listdelete     & 23        & 1.087 &     1.08x &      2.9x \\
\listinsdel     & 27        & 1.195 &     1.10x &      2.8x \\
\midrule
\bstinsert      & 200       & 0.075 &     1.40x &   5600~~x \\
\bstdeletemin   & 460       & 0.087 &     1.66x &  83000~~x \\
\bstexciseroot  & 380       & 0.157 &     2.80x &  26000~~x \\
\bstdelete      & 170       & 0.097 &     1.76x &   3300~~x \\
\bstinsdel      & 170       & 0.079 &     1.41x &    360~~x \\
\midrule
\treapdelete    & 900       & 1.171 &     1.12x &      3.9x \\ 
\treapinsert    & 970       & 1.399 &     1.35x &      3.3x \\
\treapinsdel    & 660       & 1.151 &     1.09x &      2.3x \\
\midrule
\hashtriewrite  & 140       & 0.230 &     4.35x &   9300~~x \\
\hashtrieinsdel & 280       & 0.305 &     5.78x &   1000~~x \\
\midrule
geometric mean  &           &       &     1.54x &    $-$ 
\\
\bottomrule
\end{tabular*}
}
\par\subfloat[Short-circuiting incremental verification-validation does not change the apparent
complexity of \bststruct{} $m$
\texttt{insertdelete}s.  The \Inc{} line shows the cost of $m$ validated operations
on a data structure of size $n$. The \Unsafe{} line shows the cost on the original,
unvalidated code.]{\label{fig:asymptotic}
\includegraphics[width=0.97\linewidth]{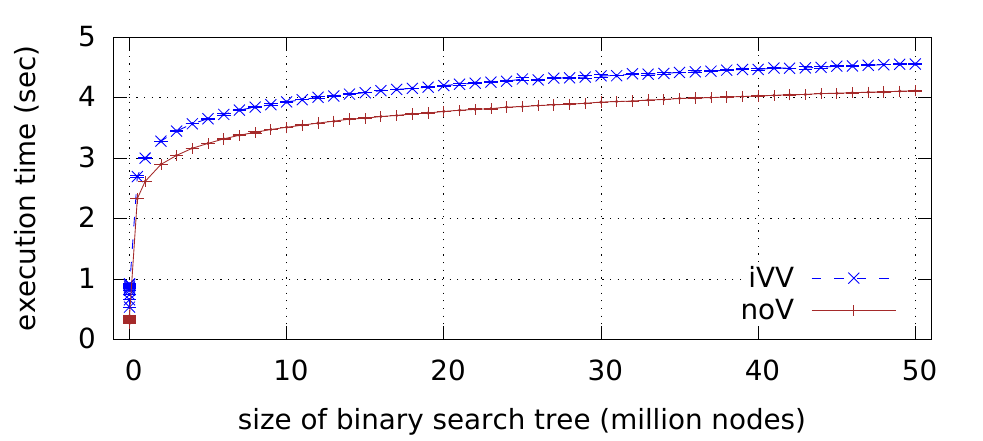} 
}
\caption{Effectiveness of incremental
  verification-validation.}
\end{figure}

\paragraph{Incrementalized Run-Time Validation Cost on Workloads.}
In \figref{ivv-performance}, we report the cost and benefit of incrementalized run-time validation
for a variety of data structure workloads.
%
%
%
The measurements under `\Inc{}' show the execution time with synthesized
short-circuiting dynamic validation. The subsequent
columns give the slowdown of \Inc{} over no validation (`\Unsafe{}') and
speedup of \Inc{} over dynamic validation (`\Noninc{}').
The \Noninc{} configuration runs C-style data structure invariant checks that dynamically validate the data constraints for the complete structure but do not validate separation properties.
This configuration provides more checking than \Unsafe{}, but it is still weaker than \Inc{}, which checks separation properties statically.
Note that strengthening \Noninc{} to dynamically validate separation can only make the \Inc{} speedups larger. We
provide only the speedups here, but we give the detailed execution times \refappendix{workloads}.
%

  For ordered singly-linked
  lists \texttt{olist}, \texttt{concat} iteratively concatenates
  sub-lists until reaching a list of size $m$, \texttt{drop} drops the first element of the list $m$ times
  from a list of size $m$, \texttt{insert}
  performs ordered insertion $m$ times from an empty list,
  \texttt{delete} performs element deletion $m$ times from a list of
  size $m$. For binary search tree \texttt{bst}, \texttt{insert} performs search tree
  insertion, \texttt{deletemin} deletes the
  minimum element, \texttt{exciseroot} excises the root node,
  and \texttt{delete} does search tree deletion.
  For treaps \texttt{treap}, search tree rotations are performed on
  insertion and deletion to restore the heap-ordering invariant; we
  perform validation after each rotation.  For hash tries \texttt{hashtrie}, \texttt{write} writes to the hash trie, and we validate
  that the hash key of each element is along the correct path in the
  trie for the last 14 levels; this dynamic validation limit is
  because of the lack of arbitrary precision integers in C.  The 50-50
  workloads consist of a randomly-selected sequence of insert and
  delete operations with 50-50 split of each kind; the data structure
  size for these workloads was 10K nodes, except for the
  \treapstruct{} workloads, which were at 1M nodes. The code for \texttt{hashtrie write} along with its invariant checker is given \refappendix{hashtrie}.

In most cases, our \Inc{} approach brings the
slowdown with respect to the \Unsafe{} configuration back down to
within a factor of 1.76x. One exception is \texttt{exciseroot}:
this workload is a bad case for short-circuiting because the operation
itself requires just one $O(\lg n)$ walk from the root, while the
theoretically ideal short-circuiting validation
has to perform three $O(\lg n)$ walks. This larger slowdown is thus consistent with our expectations.
We discuss
the exceptions in
the \hashtriestruct{} workloads further below.


The speedup column shows the improvement of incrementalized
verification-validation (\Inc{}) over non-short-circuiting,
whole-structure dynamic validation  (\Noninc{}). We expect
short-circuiting to
obtain asymptotic improvements over \Noninc{}
(or rather, recover the asymptotic losses of the \Noninc{} variant
over the unchecked \Unsafe{} configuration). Our expectation is
supported by the measurements shown in \figref{ivv-performance}: the
cases with single-digit factor speedups are those where there
is no change in asymptotic complexity between \Unsafe{} and \Noninc{},
while the cases with multiple order-of-magnitude speedups
are those where there is a change in asymptotic complexity (the exact factors are thus not meaningful).
Recall that dynamic validation does not check separation properties---these
performance improvements are on top of the
additional separation
assurances provided by our static verifier.

We obtain these speedups with relatively unoptimized
instrumentation.  We never eliminate any \code[Spec]{assert}s entirely---we only
replace non-short-circuiting, whole-structure
assertions
(like on \reftxt{line}{line:setroot-assert} in \figref{validation:bst}) with
calls to short-circuiting checkers
(like on \reftxt{line}{line:setroot-assert} in \figref{validation:bstsc}).
To
focus our measurements on short-circuiting, we do not eliminate data
invariant \code[Spec]{assert}s (e.g., \code{x < y}) even when they are
known to hold statically.  Removing such statically-proven
\code[Spec]{assert}s would especially improve the
\hashtriestruct{} workloads.  For the other data structures, the
data invariant at each node is a constant-time check.  But for
\hashtriestruct{}, the data invariant check involves recomputing the
hash key of an element, which is a $O(\lg n)$ operation (adding a
$O(\lg n)$ factor over \Unsafe{} but better than the $O(n)$ factor
added by \Noninc{}).

Overall, these workload performance measurements are quite promising,
particularly given the two intentional worst-case choices of (1)
no \code[Spec]{assert} eliminations even when they are statically known;
and (2) a worst-case base data-value domain.  Applying 
static verification techniques to data-value reasoning, instead of hamstringing it, would only improve the run-time performance
of \Inc{}.

\PUNCH{We setup the analysis with a lossy pure domain to evaluate this
  approach. Why lossy? Worst case for us.}

\PUNCH{The data structure invariants are checked in the end.  Note not all
  examples are statically verifiable--see later.}

\PUNCH{Discuss the table on ``medium'' comparing full checking with
  incremental checking.}

\PUNCH{Discuss worst case/ general case analysis for bst}

\PUNCH{Discuss asymptotics and scaling}

\paragraph{When Pre-Conditions are Insufficient to Verify.}
%
%
Not only can \Tool{} incrementalize to make up for inference
imprecision in static analysis, but it can also tolerate ``insufficient''
pre-condition specifications. Insufficient pre-condition specifications are ones that are not strong enough for any static verifier to soundly eliminate checking the
post-condition (like the \code{setroot} example from \figref{validation}).  Here, the \texttt{concat} example and
both treap examples use insufficient pre-conditions.  For the
\texttt{concat} example, the pre-condition for each input is that they
are both ordered lists with a lower bound parameter of $-\infty$.  For
the \texttt{treap} examples, the pre-condition to the rotation
operation is that both sub-trees have binary search tree bounds of
$(-\infty,\infty)$.  \Tool{} generates a short-circuiting
validation that checks that the observed concrete instances satisfy
the post-condition. This synthesized
validation soundly fails when passing concrete inputs that
would violate the data structure invariant (e.g., when
passing two individually ordered lists to
\texttt{concat} whose concatenation is not an ordered list).


\PUNCH{Append example where we check for a run-time violation.}

\PUNCH{Able to incrementalize with imprecise assumptions (treap).}


\newcommand{\bstminratio}[0]{1.11}
\newcommand{\bstmaxratio}[0]{2.55}

\newcommand{\bstasymptoticslowdownpercent}[0]{4.7}
\newcommand{\bstasymptoticslowdownpercenterror}[0]{1.4}
\newcommand{\bstasymptoticivvrsquared}[0]{0.984}
\newcommand{\bstasymptoticnovrsquared}[0]{0.995}

\paragraph{Low Overhead Validation.}
As we have seen, incrementalizing dynamic validation checks
can improve their asymptotic complexity---and, in some cases, even
recover the complexity of the original, unvalidated operation while 
providing safety guarantees. Recall the binary search tree
example from \figref{unscalable}---there, adding whole-data-structure
validation to insert and delete
operations changed the asymptotic complexity from $O(\lg n)$ to
$O(n)$, but applying 
\Tool{}'s short-circuiting approach retained the benefits
of dynamic validation without changing the complexity of the operation.

In \figref{asymptotic}, we show a plot of \Inc{} and \Unsafe{} execution
times on a workload of $m$ binary search tree \texttt{insertdelete}
operations (this is the plot from \figref{unscalable} with
the dynV curve removed). An
\texttt{insertdelete} performs a deletion of a random
element
followed immediately by the re-insertion of the same element (this keeps the
data structure size $n$ constant over the $m$ operations); here we fix
$m = 1$ million so that the workload execution time is easily
measurable (on the order of 1 second) on even the smallest data structure
size.
We performed $k = 12$ trials at each data structure
size $n$. The error bars (though negligible) show two standard deviations.

Over this range, the slowdown of \Inc{} over \Unsafe{} ranges from
{\bstminratio}x (for the largest \texttt{insertdelete} 
workloads) to {\bstmaxratio}x (for the smallest).
We use this same data to estimate the
\emph{asymptotic} slowdown---the slowdown in the limit,
as the number of nodes in the tree increases---by fitting
to equations of the form
$\mathit{time} = b + c\cdot\mathrm{lg}(\mathit{size})$.
The asymptotic slowdown is then $c_\mathrm{\Inc{}}/c_\mathrm{\Unsafe{}}$---which
we estimate to a 95\% confidence interval
to be \bstasymptoticslowdownpercent{}\% $\pm$ \bstasymptoticslowdownpercenterror{}.
(For completeness, we provide both a plot of slowdown vs. workload size
and the details of this fitting \refappendix{workloads}).

This asymptotic slowdown of 4.7\% compares quite favorably with purely dynamic approaches
to validation. While not a direct comparison, the overhead for Ditto ranges from
2.5x to 50x,
depending on the data structure~\cite{Shankar-Bodik-Personal-Communication-2014}.
Such dynamic techniques are quite general
but must maintain sizable run-time structures in a shadow
heap.
In contrast, when static shape analysis
is possible---and it often is---then \Tool{}'s combined verification-validation approach
allows for a much lower overhead.

\section{Related Work}
\label{sec:related-work}



Several ways of combining pointer-shape reasoning
and data-value reasoning have been proposed,
including~\cite{chang+2008:relational-inductive, DBLP:conf/sas/McCloskeyRS10,
  DBLP:conf/pldi/BouajjaniDES11}.
\citet{magill+2010:automatic-numeric}
proposes a sequence of two analyses: a shape analysis followed by an
off-the-shelf scalar value analysis.
\SQUEEZE{%
There are clear engineering
benefits to this approach, though the pipeline architecture means that
heap reasoning cannot benefit directly from the contents analysis
results.
}%
The kind of invariants that we consider require a
tight integration of pointer-shape and data-value reasoning, which may
be difficult to realize in a pipelined approach.
\SQUEEZE{%
A simultaneous combination of pointer-shape and data-value approach
implementing a kind of reduction~\cite{cousot+1979:systematic-design}
has been described previously~\cite{chang+2008:relational-inductive}.
}%
Incremental verification-validation can be seen as supplementing these
approaches with the ability to discharge complex data-value
constraints on heap contents at run-time in a fine-grained manner
(whereas simpler data-value constraints can still be discharged
statically).
%
%
There is a wealth of static verification techniques for data structure
properties that combine pointer-shape and data-value constraints,
including\citeverbelse{nguyen+2007:automated-verification,DBLP:conf/popl/MadhusudanQS12,DBLP:conf/pldi/Qiu0SM13,mcpeak+2005:data-structure}{nguyen+2007:automated-verification,DBLP:conf/pldi/ZeeKR08,DBLP:conf/pldi/Qiu0SM13,DBLP:conf/pldi/PekQM14} amongst many others,
that use
user-supplied loop invariants and try to discharge proof obligations using
solvers and decision procedures.
\citet{DBLP:conf/pldi/ZeeKR08} aim to verify not only rich
shape-numerical data structure properties but also functional
properties of algorithms that operate on them.  Our technique is complementary:
it offers more automation and the opportunity
to discharge the hardest constraints dynamically.
In another line of work, \citet{DBLP:conf/vmcai/Christakis0W15} proposed using dynamic validation to study \emph{unsound assumptions} made by static analysis tools.
In contrast, our approach uses dynamic validation to check \emph{unverified assertions} from a sound static analysis (i.e., assertions that the analysis could not prove statically).

Prior approaches to run-time checking of expressive heap
assertions
includes work that parallelizes
validation~\cite{vechev+2010:phalanx:-parallel}, allows
user-specified overhead budgets~\cite{DBLP:journals/tosem/ArnoldVY11}, checks separation
properties~\cite{nguyen+2008:runtime-checking}, and piggybacks on the
GC with an expressive assertion
language~\cite{DBLP:conf/oopsla/ReichenbachISAG10}.  In this space,
our work is most related to Ditto~\cite{shankar+2007:ditto:-automatic}
where our bottom-line goal is the same: to incrementalize 
data structure validation checks.  We present a complementary
approach that offers different trade-offs.  For example, our memory
overhead is dictated by the size of static shape invariants that are
independent of the size of the concrete data structure instances.
\SQUEEZE{%
Ditto's optimistic memoization algorithm can be seen as a dynamic
analogue to segment invariants inferred by a static shape analysis.
Optimistic memoization permits ``jumping'' into ``future'' recursive
validation checks before scrutinizing the optimistic assumption on the
upward return path.  Correspondingly, our short-circuiting segment validators $\iseg{\chker_1}{\chker_2}$
check on the downward call path whether the shape invariant permits a
``jumping'' to the endpoint.
%
}%
Incremental
computation~\cite{DBLP:conf/popl/LiuST96,DBLP:conf/popl/AcarAB08,DBLP:conf/pldi/HammerAC09}
has been used for improving execution efficiency in general, beyond
data structure validation checks.  The basic principle is to memoize
intermediate results and reuse them when a dependency is
discovered.  Our approach can be seen as a static analogue to these principles, applying static analysis to ``statically memoize'' validation checks so
that only a single short-circuiting check at the root is needed to reuse the results
for the entire sub-tree.


\SQUEEZE{%
Our logic-variable reification is related to
instrumentation in
\citet{magill+2010:automatic-numeric}, though their goal is to abstract
a pointer-manipulating program into a numerical program for further
static analysis.  Because the consumer of the instrumentation is a
static analysis, they are permitted to instrument with
non-deterministic assignment for any existential whose instantiation
cannot be resolved.
In our work, we must reify a programmatic
valuation that is a deterministic computation from input values.  To
do so, we required the validated views domain to get this reification for
validation parameters (like $\bst$'s allowable range parameters ).
}



\section{Conclusion}
\label{sec:conclusion}

We have presented a proof-directed approach to synthesizing short-circuiting dynamic validation checks for data structures with complex data invariants.
A short-circuiting checker stops checking whenever it detects at \emph{run time} that an assertion for \emph{some sub-structure} has been proven \emph{statically}.
The key insight of our approach is to obtain a static verification for an existential weakening of the original property of interest and then to use this proof to synthesize short-circuiting validation. To achieve this technically,
we first defined an enrichment of inductive shape analysis with validated views. This abstract domain enables run-time reification of static logic variables and
precisely tracks uninterpreted data invariants in unmodified regions.
Finally, we described a subtraction-directed synthesis technique that abducts short-circuiting validation checks for data constraints using the statically-inferred invariants.
%

The result is a hybrid
technique that strikes a unique balance of static
verification and dynamic validation for data structure invariants.  On
one hand, our technique provides benefit with efficient
dynamic validation when faced with the inevitable imprecision of
statically inferred data invariants on inductive structures.  On
the other hand, our technique leverages static verification to
provide additional static assurances (i.e., separation
properties) and substantially lower the run-time overhead of dynamic
validation---in many cases recovering the complexity of the original
unsafe and unvalidated code.

\bibliographystyle{plainnat}
\bibliography{conferences.short,bec.short,reference.short}

\begin{thebibliography}{31}
\providecommand{\natexlab}[1]{#1}
\providecommand{\url}[1]{\texttt{#1}}
\expandafter\ifx\csname urlstyle\endcsname\relax
  \providecommand{\doi}[1]{doi: #1}\else
  \providecommand{\doi}{doi: \begingroup \urlstyle{rm}\Url}\fi

\bibitem[Acar et~al.(2008)Acar, Ahmed, and Blume]{DBLP:conf/popl/AcarAB08}
Umut~A. Acar, Amal Ahmed, and Matthias Blume.
\newblock Imperative self-adjusting computation.
\newblock In \emph{Principles of Programming Languages (POPL)}, 2008.

\bibitem[Arnold et~al.(2011)Arnold, Vechev, and
  Yahav]{DBLP:journals/tosem/ArnoldVY11}
Matthew Arnold, Martin~T. Vechev, and Eran Yahav.
\newblock {QVM}: An efficient runtime for detecting defects in deployed
  systems.
\newblock \emph{ACM Trans. Softw. Eng. Methodol.}, 21\penalty0 (1), 2011.

\bibitem[Bagwell(2001)]{bagwell2001:ideal-hash}
Phil Bagwell.
\newblock Ideal hash trees.
\newblock Technical report, EPFL, 2001.

\bibitem[Barnett et~al.(2011)Barnett, F{\"a}hndrich, Leino, M{\"u}ller,
  Schulte, and Venter]{barnett+2011:specification-verification:}
Mike Barnett, Manuel F{\"a}hndrich, K.~Rustan~M. Leino, Peter M{\"u}ller,
  Wolfram Schulte, and Herman Venter.
\newblock Specification and verification: the {Spec\#} experience.
\newblock \emph{Commun. ACM}, 54\penalty0 (6), 2011.

\bibitem[Berdine et~al.(2005)Berdine, Calcagno, and
  O'Hearn]{berdine+2005:symbolic-execution}
Josh Berdine, Cristiano Calcagno, and Peter~W. O'Hearn.
\newblock Symbolic execution with separation logic.
\newblock In \emph{Asian Programming Languages and Systems (APLAS)}, 2005.

\bibitem[Bouajjani et~al.(2011)Bouajjani, Dragoi, Enea, and
  Sighireanu]{DBLP:conf/pldi/BouajjaniDES11}
Ahmed Bouajjani, Cezara Dragoi, Constantin Enea, and Mihaela Sighireanu.
\newblock On inter-procedural analysis of programs with lists and data.
\newblock In \emph{Programming Language Design and Implementation (PLDI)},
  2011.

\bibitem[Chang and Rival(2008)]{chang+2008:relational-inductive}
Bor-Yuh~Evan Chang and Xavier Rival.
\newblock Relational inductive shape analysis.
\newblock In \emph{Principles of Programming Languages (POPL)}, 2008.

\bibitem[Christakis et~al.(2015)Christakis, M{\"{u}}ller, and
  W{\"{u}}stholz]{DBLP:conf/vmcai/Christakis0W15}
Maria Christakis, Peter M{\"{u}}ller, and Valentin W{\"{u}}stholz.
\newblock An experimental evaluation of deliberate unsoundness in a static
  program analyzer.
\newblock In \emph{Verification, Model Checking, and Abstract Interpretation
  (VMCAI)}, 2015.

\bibitem[Cousot and Cousot(1979)]{cousot+1979:systematic-design}
Patrick Cousot and Radhia Cousot.
\newblock Systematic design of program analysis frameworks.
\newblock In \emph{Principles of Programming Languages (POPL)}, 1979.

\bibitem[Distefano et~al.(2006)Distefano, O'Hearn, and
  Yang]{distefano+2006:local-shape}
Dino Distefano, Peter~W. O'Hearn, and Hongseok Yang.
\newblock A local shape analysis based on separation logic.
\newblock In \emph{Tools and Algorithms for the Construction and Analysis of
  Systems (TACAS)}, 2006.

\bibitem[F{\"a}hndrich et~al.(2010)F{\"a}hndrich, Barnett, and
  Logozzo]{DBLP:conf/sac/FahndrichBL10}
Manuel F{\"a}hndrich, Michael Barnett, and Francesco Logozzo.
\newblock Embedded contract languages.
\newblock In \emph{Applied Computing (SAC)}, 2010.

\bibitem[Graf and Sa\"{\i}di(1997)]{DBLP:conf/cav/GrafS97}
Susanne Graf and Hassen Sa\"{\i}di.
\newblock Construction of abstract state graphs with {PVS}.
\newblock In \emph{Computer-Aided Verification (CAV)}, 1997.

\bibitem[Hammer et~al.(2009)Hammer, Acar, and Chen]{DBLP:conf/pldi/HammerAC09}
Matthew~A. Hammer, Umut~A. Acar, and Yan Chen.
\newblock {CEAL}: A {C}-based language for self-adjusting computation.
\newblock In \emph{Programming Language Design and Implementation (PLDI)},
  2009.

\bibitem[Liu et~al.(1996)Liu, Stoller, and Teitelbaum]{DBLP:conf/popl/LiuST96}
Yanhong~A. Liu, Scott~D. Stoller, and Tim Teitelbaum.
\newblock Discovering auxiliary information for incremental computation.
\newblock In \emph{Principles of Programming Languages (POPL)}, 1996.

\bibitem[Magill(2014)]{Magill-Personal-Communication-2014}
Stephen Magill, 2014.
\newblock Personal Communication. 12-November-2014.

\bibitem[Magill et~al.(2010)Magill, Tsai, Lee, and
  Tsay]{magill+2010:automatic-numeric}
Stephen Magill, Ming-Hsien Tsai, Peter Lee, and Yih-Kuen Tsay.
\newblock Automatic numeric abstractions for heap-manipulating programs.
\newblock In \emph{Principles of Programming Languages (POPL)}, 2010.

\bibitem[McCloskey et~al.(2010)McCloskey, Reps, and
  Sagiv]{DBLP:conf/sas/McCloskeyRS10}
Bill McCloskey, Thomas~W. Reps, and Mooly Sagiv.
\newblock Statically inferring complex heap, array, and numeric invariants.
\newblock In \emph{Static Analysis (SAS)}, 2010.

\bibitem[Minamide(1998)]{DBLP:conf/popl/Minamide98}
Yasuhiko Minamide.
\newblock A functional representation of data structures with a hole.
\newblock In \emph{Principles of Programming Languages (POPL)}, 1998.

\bibitem[Nguyen et~al.(2007)Nguyen, David, Qin, and
  Chin]{nguyen+2007:automated-verification}
Huu~Hai Nguyen, Cristina David, Shengchao Qin, and Wei-Ngan Chin.
\newblock Automated verification of shape and size properties via separation
  logic.
\newblock In \emph{Verification, Model Checking, and Abstract Interpretation
  (VMCAI)}, 2007.

\bibitem[Nguyen et~al.(2008)Nguyen, Kuncak, and
  Chin]{nguyen+2008:runtime-checking}
Huu~Hai Nguyen, Viktor Kuncak, and Wei-Ngan Chin.
\newblock Runtime checking for separation logic.
\newblock In \emph{Verification, Model Checking, and Abstract Interpretation
  (VMCAI)}, 2008.

\bibitem[Pek et~al.(2014)Pek, Qiu, and Madhusudan]{DBLP:conf/pldi/PekQM14}
Edgar Pek, Xiaokang Qiu, and P.~Madhusudan.
\newblock Natural proofs for data structure manipulation in {C} using
  separation logic.
\newblock In \emph{Programming Language Design and Implementation (PLDI)},
  2014.

\bibitem[Qiu et~al.(2013)Qiu, Garg, Stefanescu, and
  Madhusudan]{DBLP:conf/pldi/Qiu0SM13}
Xiaokang Qiu, Pranav Garg, Andrei Stefanescu, and Parthasarathy Madhusudan.
\newblock Natural proofs for structure, data, and separation.
\newblock In \emph{Programming Language Design and Implementation (PLDI)},
  2013.

\bibitem[Reichenbach et~al.(2010)Reichenbach, Immerman, Smaragdakis,
  Aftandilian, and Guyer]{DBLP:conf/oopsla/ReichenbachISAG10}
Christoph Reichenbach, Neil Immerman, Yannis Smaragdakis, Edward Aftandilian,
  and Samuel~Z. Guyer.
\newblock What can the {GC} compute efficiently?: A language for heap
  assertions at {GC} time.
\newblock In \emph{Object-Oriented Programming Systems, Languages, and
  Applications (OOPSLA)}, 2010.

\bibitem[Reynolds(2002)]{reynolds2002:separation-logic:}
John~C. Reynolds.
\newblock Separation logic: A logic for shared mutable data structures.
\newblock In \emph{Logic in Computer Science (LICS)}, 2002.

\bibitem[Rival and Chang(2011)]{rival+2011:calling-context}
Xavier Rival and Bor-Yuh~Evan Chang.
\newblock Calling context abstraction with shapes.
\newblock In \emph{Principles of Programming Languages (POPL)}, 2011.

\bibitem[Sagiv et~al.(2002)Sagiv, Reps, and
  Wilhelm]{sagiv+2002:parametric-shape}
Mooly Sagiv, Thomas Reps, and Reinhard Wilhelm.
\newblock Parametric shape analysis via 3-valued logic.
\newblock \emph{ACM Trans. Program. Lang. Syst.}, 24\penalty0 (3), 2002.

\bibitem[Shankar and Bod\'{\i}k(2007)]{shankar+2007:ditto:-automatic}
Ajeet Shankar and Rastislav Bod\'{\i}k.
\newblock {DITTO}: Automatic incrementalization of data structure invariant
  checks (in {Java}).
\newblock In \emph{Programming Language Design and Implementation (PLDI)},
  2007.

\bibitem[Shankar and
  Bod\'{\i}k(2014)]{Shankar-Bodik-Personal-Communication-2014}
Ajeet Shankar and Rastislav Bod\'{\i}k, 2014.
\newblock Personal Communication. 24-March-2014.

\bibitem[Solar-Lezama et~al.(2006)Solar-Lezama, Tancau, Bod\'{\i}k, Seshia, and
  Saraswat]{DBLP:conf/asplos/Solar-LezamaTBSS06}
Armando Solar-Lezama, Liviu Tancau, Rastislav Bod\'{\i}k, Sanjit~A. Seshia, and
  Vijay~A. Saraswat.
\newblock Combinatorial sketching for finite programs.
\newblock In \emph{Conference on Architectural Support for Programming
  Languages and Operating Systems (ASPLOS)}, 2006.

\bibitem[Vechev et~al.(2010)Vechev, Yahav, and
  Yorsh]{vechev+2010:phalanx:-parallel}
Martin~T. Vechev, Eran Yahav, and Greta Yorsh.
\newblock {PHALANX}: Parallel checking of expressive heap assertions.
\newblock In \emph{Memory Management (ISMM)}, 2010.

\bibitem[Zee et~al.(2008)Zee, Kuncak, and Rinard]{DBLP:conf/pldi/ZeeKR08}
Karen Zee, Viktor Kuncak, and Martin~C. Rinard.
\newblock Full functional verification of linked data structures.
\newblock In \emph{Programming Language Design and Implementation (PLDI)},
  2008.

\end{thebibliography}

\ifTR
\appendix

\section{Data Structure Workloads}
\label{app:workloads}

\begin{table}\centering\small
\nocaptionrule\caption{The execution overhead of running data structure workloads
  with dynamic validation (\Noninc) compared with no validation
  (\Unsafe).}
\begin{tabular*}{\linewidth}{@{\extracolsep{\fill}}l r r rr@{}}\toprule
                & $m$       & \Unsafe   & \Noninc       & slowdown  \\
workload        & (K)       & (s)       & (s)           & wrt \Unsafe \\
\midrule
\listconcat     & 1,180     &   1.086 &     2.399       &      2.2x \\
\listdrop       & 600       &   0.053 & 12472.260       & 236109.1x \\
\listinsert     & 24        &   1.059 &     3.294       &      3.1x \\ 
\listdelete     & 23        &   1.007 &     3.147       &      3.1x \\
\listinsdel     & 27        &   1.084 &     3.390       &      3.1x \\
\midrule
\bstinsert      & 200       &   0.053 &   415.414       &   7782.3x \\
\bstdeletemin   & 460       &   0.052 &  7262.621       & 138725.5x \\
\bstexciseroot  & 380       &   0.056 &  4101.548       &  73395.5x \\
\bstdelete      & 170       &   0.055 &   318.750       &   5809.1x \\
\bstinsdel      & 170       &   0.056 &    28.600       &    508.5x \\
\midrule
\treapdelete    & 900       &   1.046 &     4.595       &      4.4x \\
\treapinsert    & 970       &   1.039 &     4.568       &      4.4x \\
\treapinsdel    & 660       &   1.052 &     2.629       &      2.5x  \\
\midrule
\hashtriewrite  & 140       &   0.053 &  2131.922       &  40326.6x \\
\hashtrieinsdel & 280       &   0.053 &   316.613       &   6000.3x \\
\bottomrule
\end{tabular*}
\label{tbl:validation-slowdown}
\end{table}

\begin{figure}
\includegraphics[width=0.97\linewidth]{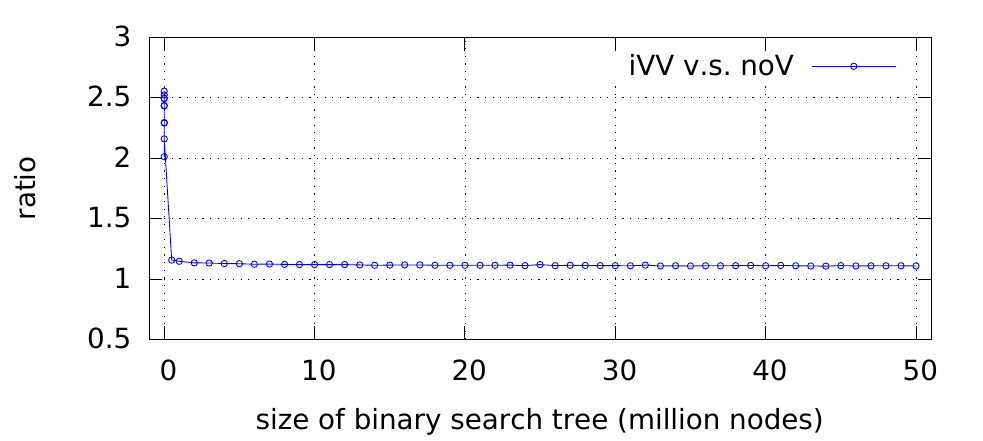} 
\caption{Ratio of \texttt{insertdelete} execution times under incremental
verification-validation (\Inc{}) and for the original, unsafe code
(\Unsafe{}). }
\label{fig:ratios}
\end{figure}

\Tblref{validation-slowdown} gives the execution times over our
workloads from \secref{empirical-evaluation} on the no validation
(\Unsafe{}) and the full, non-incrementalized (\Noninc{}) variants.
These workloads
are compiled with \texttt{gcc -O2} and measured on an Intel Core 2 Quad CPU
Q6600 2.40GHz with 4G RAM. The time reported is averaged over 12 runs.


\subsection{Analysis of BST \texttt{insertdelete} Workload}

In this subsection, we provide additional details for the
measurements of the overhead of \Inc{} over \Unsafe{} that
we reported in \secref{empirical-evaluation}.

\paragraph{Behavior Over Measured Range.}

In \Figref{ratios}, we show the ratio of \Inc{} execution times
over \Unsafe{} executions for the BST \texttt{insertdelete} over the range of tree
sizes that we measured (up to 50 million nodes).
The slowdown of \Inc{} for our validation (over the original
unsafe,
unvalidated code) ranges from
around {\bstminratio}x (for the largest \texttt{insertdelete} 
workloads) to {\bstmaxratio}x (for the smallest). 

\begin{figure}
\centering
\subfloat[Linear fittings of lg-transformed curve.]{\label{fig:asymptotic:fitting-lines}
\includegraphics[width=0.97\linewidth]{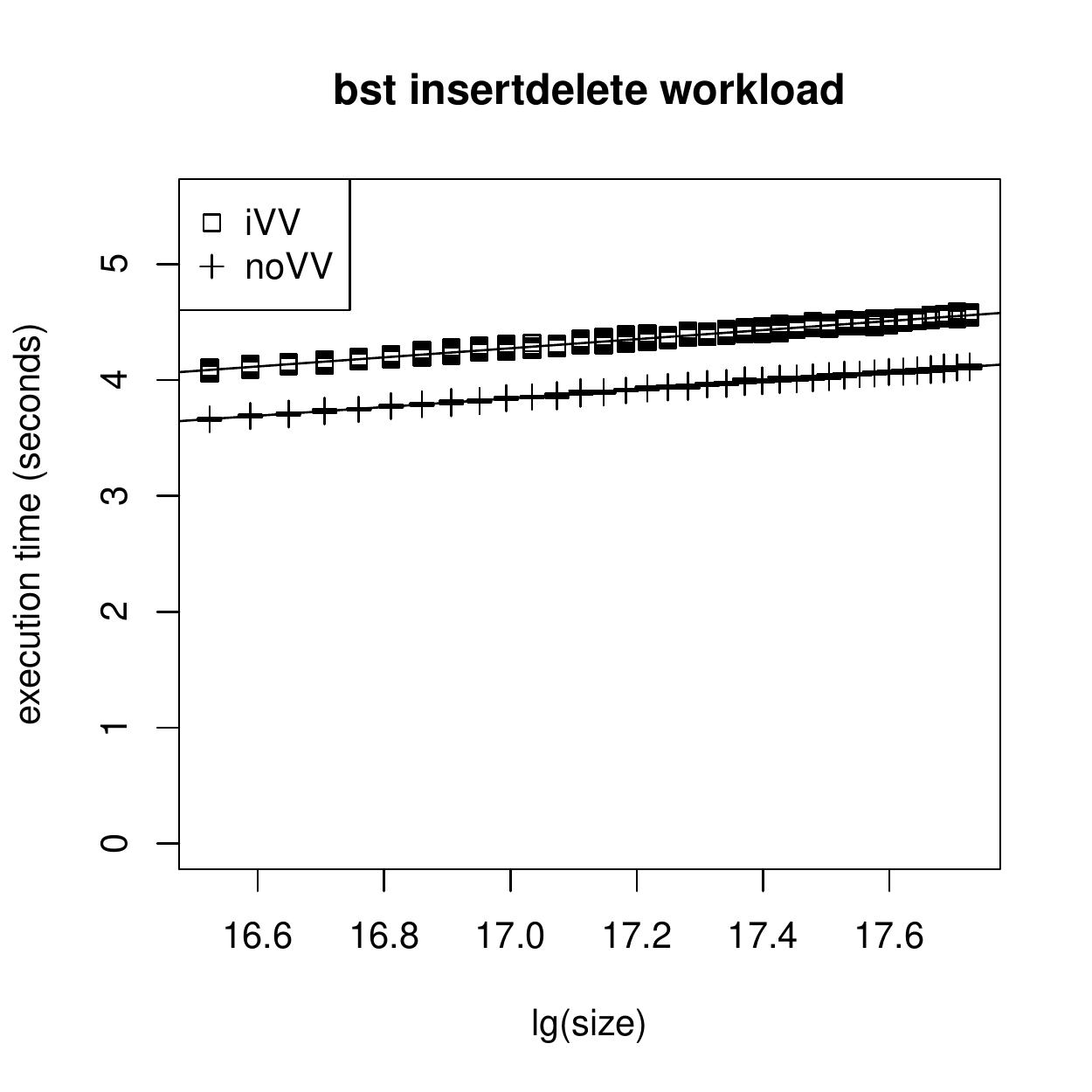}
}

\subfloat[Spread of residuals for fitted curves.]{\label{fig:asymptotic:fitting-residuals}
\includegraphics[width=0.47\linewidth]{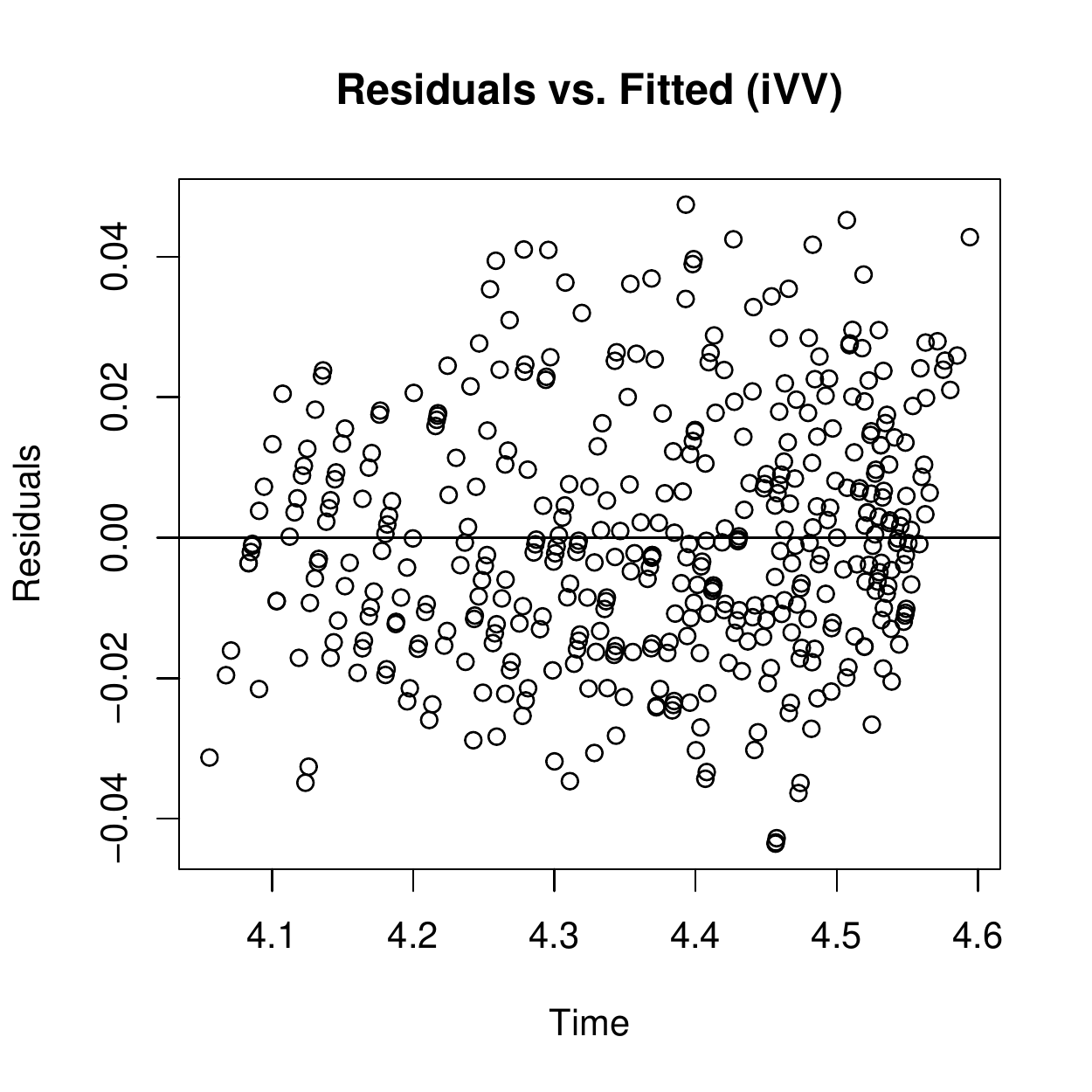}
\hfill
\includegraphics[width=0.47\linewidth]{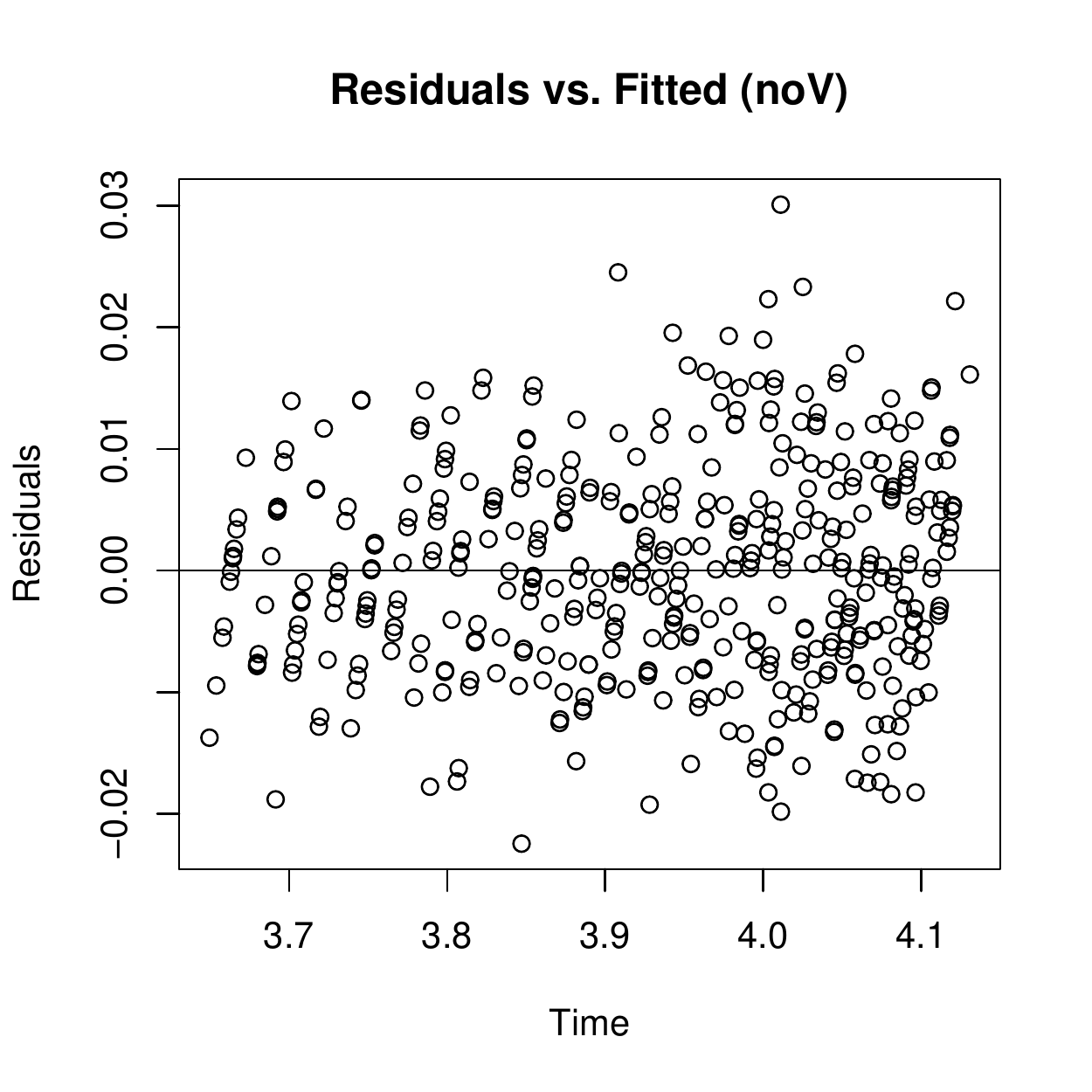}
}

\caption{We estimate the asymptotic overhead of our approach on the BST \texttt{insertdelete} workload by
fitting lg-transformed curves.}
\label{fig:asymptotic:fitting}
\end{figure}

\paragraph{Estimated Asymptotic Slowdown.} We also performed
an analysis to estimate the \emph{asymptotic} slowdown of
\Inc{} vs. \Unsafe{}---that is, the slowdown in the limit,
as the size of the tree grows to infinity. 
To estimate this slowdown, we fit a suffix (starting
at 15 million nodes) of the data from
~\figref{asymptotic} to equations of the form:
\[ \mathit{time} = b + c\cdot\mathrm{lg}(\mathit{size}) \]

\par\noindent
The asymptotic slowdown is then $c_\mathrm{\Inc{}}/c_\mathrm{\Unsafe{}}$---which
we estimate to a 95\% confidence interval
to be \bstasymptoticslowdownpercent{}\% $\pm$ \bstasymptoticslowdownpercenterror{}.
We show the fitted lines in \figref{asymptotic:fitting-lines}. 
The coefficients of determination ($\mathrm{R}^2$)
for the fit for \Inc{} is \bstasymptoticivvrsquared{}; that for \Unsafe{} is \bstasymptoticnovrsquared{}.
The lack of a trend
in the spread of the residuals (\figref{asymptotic:fitting-residuals}) for these
fits provides evidence for a linear relationship between execution time and
$\mathrm{lg}(\mathit{size})$.

\section{The Validated View Abstraction}
\label{app:validated-view-formal}

\begin{figure}\scriptsize\centering
\subfloat[Concrete domains.]{\label{fig:abs-concrete}
\begin{mathpar}
  \cmem : (\caddrset \finitemap \cvalset) = \cmemset

  \cvalua : (\avalset \rightarrow \cvalset) = \cvaluaset
\end{mathpar}
}
\setlength{\saveboxsep}{\fboxsep}
\fboxsep 0pt
\subfloat[A memory abstraction with validated views.]{\label{fig:abs-syntax}
\begin{tabular}{@{}c@{}}
\begin{grammar}
  \amem \in \amemset  & \bnfdef & \amem_1 \lor \amem_2 \bnfalt \bot \bnfalt \astore
  &
  abstract memory
  \\[0.5ex]
 \astore \in \astoreset & \bnfdef &
  \astore_1 \lsep \astore_2
  \bnfalt \lemp
  \bnfalt \fldpt{\aaddr}{\fld}{\aval}
  \bnfalt \astorepure
  &
  exact
  \\
  & \bnfalt &
  \achkreg
  \bnfalt
  \achkreg_1 \lreviwand \achkreg_2
  \hibox{\bnfalt \astoreview}
  &
  summaries, views
  \\[0.5ex]
  \apure \in \apuredom
  & & &
  data-value domain
  \\[0.5ex]
  \aview \in \aviewset & \bnfdef &
  \achkreg \lreviwand \amultichkreg
  \bnfalt \ltrue
  \bnfalt \aview_1 \ltrueandtrue \aview_2
  &
  validated views
  \\[0.5ex]
\end{grammar}
\\[7ex]
\begin{mathpar}[\MathparNormalpar]
  \begin{array}{@{}r@{\;}c@{\;}l@{}}
    \achkreg & \bnfdef & \achkercall
  \end{array}

  \begin{array}{@{}r@{\;}c@{\;}l@{}}
    \amultichkreg & \bnfdef &
    \amultichkreg_1 \lsep \amultichkreg_2
    \bnfalt \lemp
    \bnfalt \achkreg
  \end{array}

  \aaddr, \aval \in \avalset \quad\text{symbolic variables}

  \fld\quad\text{fields}

  \chker\quad\text{validation checker}
\end{mathpar}
\end{tabular}
}
\setlength{\fboxsep}{\saveboxsep}
\subfloat[Introducing validated views on unfolding.]{\label{fig:unfolding}
\begin{mathpar}
\inferrule[Unfold-Frame]{
  \astore \unfold \astore'
}{
  \astore_0 \lsep \astore \unfold \astore_0 \lsep \astore'
}

\inferrule[Unfold-View]{
  \achkercall \unfold \astore \lsep \amultichkreg
}{
  \achkercall \unfold \astoreview[(\astore \lsep \amultichkreg)][(\achkercall \lreviwand \amultichkreg)]
}

\inferrule[Unfold-SegView]{
  (\achkercall \lreviwand \achkreg) \unfold \astore \lsep \amultichkreg \lsep (\achkreg' \lreviwand \achkreg)
}{
  (\achkercall \lreviwand \achkreg) \unfold \astoreview[(\astore \lsep \amultichkreg \lsep (\achkreg' \lreviwand \achkreg))][(\achkercall \lreviwand (\amultichkreg \lsep \achkreg'))]
}
\end{mathpar}
}
\caption{Abstracting memory with shape-data constraints and validated
  views.}
\label{fig:abs-state}
\end{figure}


In \figref{abs-state}, we enrich a
separation-logic--based shape abstraction with validated
views.
The concrete semantic domains are a concrete memory $\cmem$
that maps addresses to values where we assume the set of addresses are
contained in values (i.e., $\caddrset \subseteq \cvalset$) and a
valuation $\cvalua$ that maps symbolic variables to concrete values.
Symbolic variables are existential variables
naming heap addresses and values; they form
the coordinates for the base pure, data-value domain $\apuredom$.  A
valuation $\cvalua$ assigns meaning to symbolic variables and thus
is the concrete domain element that corresponds to an abstract
element $\apure$ of the
data-value abstract domain (i.e.,
$\inconc[\apuredom]{\cvalua}{\apure}$).
\SQUEEZE{
Noticeably missing in these
domains is a notion of program variables.  We make program variables
an orthogonal concern by assuming concrete environments and abstract
environments that map program variables to concrete values or symbolic
variables, respectively.  In our diagrams, such as
\figref{code-example}, we indicate the abstract environment with
labels below the nodes.}  \PUNCH{Why symbolic variables and
  valuations.}

We consider an abstract memory $\amem$ to be a
disjunction of abstract stores $\astore$. Our focus is not on
the shape abstraction, but we briefly
describe one on which we can discuss the
validated view layer $\aview$.  Considering just the top line defining
abstract stores, we have an \emph{exact} abstraction of a finite
concrete memory constrained by the base data-value domain element
$\apure$ (i.e., analogous to two-valued structures in
TVLA~\cite{sagiv+2002:parametric-shape}).
The next line adds
summarization that enriches the abstraction to unbounded memory,
including inductive validation checker constraints $\achkreg \bnfdef
\achkercall$---corresponding to a region of memory from address
$\aaddr$ that satisfies validation checker $c$ with actual arguments
$\vec{\aval}$.  We write an overline $\vec{\cdot}$ for a sequence of
items.  We write $\achkreg_1 \lreviwand
\achkreg_2$ for a segment generalization of inductive predicate
$\achkreg_1$ to a hole at $\achkreg_2$, which is defined by syntactic unfolding of $\achkreg_1$
up to the hole $\achkreg_2$~\cite{chang+2008:relational-inductive}.
We can see an
abstract store as a separating shape
graph\citeverb{laviron+2010:separating-shape} by treating $\astore$ as a
finite map from base addresses $\aaddr$ to field edges
$\fldpt{\aaddr}{\fld}{\aval}$ or summary edges like $\achkercall$ or $\achkreg_1 \lreviwand
\achkreg_2$.
\PUNCH{The syntax of the memory abstraction and syntactic unfolding.}
The concrete domain element corresponding to an abstract store (without views) is
a set of pairs of a concrete memory $\cmem$ and a valuation $\cvalua$
(i.e., $\inconc{\cmemvalua}{\astore}$).
We assume any base shape abstraction of interest comes equipped with a concretization of this form.

Our extension is that an abstract store can be constrained by a view $\astoreview$
(shown shaded).  A validated view predicate $\aview\colon \achkreg \lreviwand \amultichkreg$ is an inductive segment with multiple potential endpoints $\amultichkreg  \bnfdef 
    \amultichkreg_1 \lsep \amultichkreg_2
    \bnfalt \lemp
    \bnfalt \achkreg$ (hence, we reuse the $\lreviwand$ connective).
The concretization of a view also yields a set of concrete
memory-valuation pairs.
That is, a view is a spatial formula---a memory abstraction itself.  The concretization of a $\aview\colon \achkreg \lreviwand \amultichkreg$ can be defined by syntactically unfolding $\achkreg$ until matching the endpoints in $\amultichkreg$.

\SQUEEZE{%
We discuss the final two forms of views $\aview$ further below, as
they are less relevant.
}%
A key point is that a validated view concretizes to a memory-valuation pair that constrains a \emph{sub-store} of the ``main'' abstract store (i.e., the shape graph).  In particular, the concretization of the constraining or restriction of an abstract store with a validated view $\astoreview$ is defined as follows:
\begin{definition}[Concretization of a View-Constrained Store]
\[ \begin{array}{l}
\inconc{\cmemvalua}{\astoreview} \\
\;\;\;\text{iff}\;\;\;
\text{$\inconc{\cmemvalua}{\astore}$ and $\inconc{\cmemvalua[\cmem']}{\aview}$ such that $\cmem' \subseteq \cmem$} \;.
\end{array} \]
\end{definition}
\noindent In other words, using the syntax and semantics of connectives from separation logic, we can see $\astoreview$ as being equivalent to $\astore \land (\aview \lsep \ltrue)$---we weaken $\aview$ to the strongest intuitionistic assertion weaker than $\aview$~\cite{reynolds2002:separation-logic:} before conjoining it with $\astore$.

\newsavebox{\SBoxValidatedViewConcretization}
\begin{lrbox}{\SBoxValidatedViewConcretization}\small
\(\begin{array}{r@{\;}l}
\aview & \colon \bstcall{\aaddr_1}{\symu_1}{\symu_6} \lreviwand (\bstcall{\aaddr_2}{\symu_2}{\symu_3} \lsep \bstcall{\aaddr_3}{\symu_4}{\symu_5})
\\
\cmem & \colon [\fldpt{\cvalua(\aaddr_1)}{\fldl}{\cvalua(\aaddr_2)},
\fldpt{\cvalua(\aaddr_1)}{\fldr}{\cvalua(\aaddr_3)},
\fldpt{\cvalua(\aaddr_1)}{\fldv}{\cvalua(\aval_1)}]
\\
\cvalua & \quad\text{s.t.}\quad
\cvalua(\symu_1) =
\cvalua(\symu_2) \leq
\cvalua(\symu_3) =
\cvalua(\aval_1) =
\cvalua(\symu_4) <
\cvalua(\symu_5) =
\cvalua(\symu_6)
\end{array}\)
\end{lrbox}

\begin{example}[Concretization of a Validated View]
For the following example validated view, we give one element of its concretization (i.e., $\inconc[\aviewset]{\cmemvalua}{\aview}$).  Other elements include memories with further unfoldings of the left or right sub-trees of $\cvalua(\aaddr_1)$: there are simply two holes somewhere because of the arbitrary allowable ranges given the symbolic variables $\symu_1$--$\symu_6$.  Observe that if there are some equality constraints on $\symu_1$--$\symu_6$, then that can have an effect on where the holes are allowed in the concretization.
\[
\scalebox{0.9}{\usebox{\SBoxValidatedViewConcretization}}
\]
\end{example}

While not strictly necessary, the remaining forms of views $\aview$ enable us to speak about all the view predicates as a whole applied to the ``main'' store.  The unit $\aview\colon \ltrue$ is the view that abstracts any concrete memory (as in separation logic), while $\aview_1 \ltrueandtrue \aview_2$ is the symmetric version of $\landtrue$.  More precisely, the concretization is
\[ \begin{array}{l}
\inconc{\cmemvalua[\cmem \uplus \cmem_1 \uplus \cmem_2]}{\aview_1 \ltrueandtrue \aview_2} \\
\quad\text{iff}\quad
\text{$\inconc{\cmemvalua[\cmem \uplus \cmem_1]}{\aview_1}$ and $\inconc{\cmemvalua[\cmem \uplus \cmem_2]}{\aview_2}$}
\end{array} \]
where we write $\uplus$ to union maps with disjoint domains, that is, the sub-store with common addresses in $\aview_1$ and $\aview_2$ must be the same $\cmem$.  Or in terms of separation logic, $\aview_1 \ltrueandtrue \aview_2$ corresponds to $(\aview_1 \lsep \ltrue) \land (\aview_2 \lsep \ltrue)$.  Note that we can always weaken any store $\astore$ to apply views at the top-level (i.e., $(\astore_1 \landtrue \aview_1) \lsep (\astore_2 \landtrue \aview_2)$ implies $(\astore_1 \lsep \astore_2) \landtrue (\aview_1 \ltrueandtrue \aview_2)$ as desired).

\paragraph{Introducing and Eliminating Views.}
The key difference between a validated view $\aview$ and a ``main''
store abstraction $\astore$ is how they are treated in static analysis.
While inductive summaries in the main store are unfolded to
materialize fields for strong updates and then reintroduced to
over-approximate loop invariants, a validated view on the store is auxiliary
information that is dropped when potentially invalidated (somewhat
like a predicate abstraction~\cite{DBLP:conf/cav/GrafS97}).  This
distinction is why we can use inductive multi-segments for views even
though we expect them to be very difficult to work with in the ``main
store.''

\newsavebox{\SBoxUnfoldBstExample}
\begin{lrbox}{\SBoxUnfoldBstExample}\small
\(\begin{array}{r@{\;\;}c@{\;\;}l}
\bstcall{\aaddr}{\symu_1}{\symu_2} & \unfold &
(\fldpt{\aaddr}{\fldl}{\aaddr_1} \lsep \fldpt{\aaddr}{\fldr}{\aaddr_2} \lsep \fldpt{\aaddr}{\fldv}{\aval} \\
& & \quad\lsep \bstcall{\aaddr_1}{\symu_1}{\aval} \lsep \bstcall{\aaddr_2}{\aval}{\symu_2} \land \symu_1 \leq \aval < \symu_2)
\\
\bstcall{\aaddr}{\symu_1}{\symu_2} & \unfold & \lemp \land \aaddr = \nullval
\end{array}\)
\end{lrbox}

%
%
%
For an inductive validation checker instance
$\achkreg$ like $\bstcall{\aaddr}{\symu_1}{\symu_2}$, the corresponding inductive definition yields an
unfolding relation $\achkreg \unfold\astore$ for each disjunctive case 
of the inductive
definition while replacing formal parameters with actual arguments
(cf., \figref{bst-unfoldfold} in \secref{static-analysis}).
For example, the following are in the unfolding relation for $\bst$:
\[
\scalebox{0.9}{\usebox{\SBoxUnfoldBstExample}}
\]
Similarly, we assume an unfolding relation for inductive segment instances
$\achkreg' \lreviwand \achkreg \unfold\astore$.  Given these axioms, we can then lift this syntactic unfolding relation to stores $\astore \unfold \astore'$ by non-deterministically selecting a summary to
unfold (\TirName{Unfold-Frame} in \figref{unfolding}).

A view constraint is introduced on unfolding a summary in the store to strengthen the resulting 
state.  We define these strengthenings with the remaining two rules in \figref{unfolding}.  Rule \TirName{Unfold-View} is for unfolding an inductive predicate where the endpoints $\amultichkreg$ are given by the recursive validation checkers in the definition of $\chker$.  For unfolding $\bst$ in the above, the endpoints $\amultichkreg$ would correspond to the two recursive calls $\bstcall{\aaddr_1}{\symu_1}{\aval} \lsep \bstcall{\aaddr_2}{\aval}{\symu_2}$ in the non-$\nullval$ case.  The unfolding of a segment is similarly defined in \TirName{Unfold-SegView}: it can result in a new segment whose begin point is $\achkreg'$ that becomes an endpoint of the view.
%
On an update, our abstract transformer simply and soundly drops any view that constrains the
updated field.  For example, on \reftxt{line}{line:post-swap} in \figref{excisemin}, there are no remaining views
because of the update to \code{p->l} and the dropping of $\aaddr_1$.
From the perspective of the graph
representation, it is straightforward to track the set of views
that constrain each field edge. \PUNCH{Transfer functions.}

\paragraph{Inclusion, Join, and Widen.}

To define the standard abstract domain operations, such as the inclusion
$\sqsubseteq$, join $\AIjoin$, and widen $\AIwiden$ operators, we need a correspondence
between the symbolic variables (i.e., the set of names) used in each abstract element.
 We call such a correspondence
a \emph{valuation transformer}
\[
\avaluatr : \avalset \finitemap \avalset \;.
\]
This map gives an instantiation of
existentials in one element in terms of existentials in the other.

We assume the shape-data abstraction (without
views) comes equipped with an abstract inclusion relation
\[\jinclusion{\avaluatr}{\astore}{\astore'}\] that over-approximates
inclusion under concretization with the variables in
$\astore'$ renamed to the variables in $\astore$ using $\avaluatr$.  Formally, \[ \AIconcfn(\astore) \subseteq \SetST{ \cmemvalua }{
  \cmemvalua[\cmem][\cvalua \circ \avaluatr] \in \AIconcfn(\astore')
} \;. \]
The setup is analogous for the abstract join operation $$\jjoin{\avaluatr_1}{\avaluatr_2}{\astore_1}{\astore_2}{\astore'} \;,$$
except that we need a pair of valuation transformers $\avaluatr_1, \avaluatr_2$ mapping the
variables in the result element to each of the input elements.
\SQUEEZE{%
Specifically, we assume the join in the combined shape-data abstraction over-approximates union under concretization and renaming of the result (i.e., formally,
$\AIconcfn(\astore_1) \subseteq \SetST{ \cmemvalua }{
  \cmemvalua[\cmem][\cvalua \circ \avaluatr_1] \in \AIconcfn(\astore')
}$ and
$\AIconcfn(\astore_2) \subseteq \SetST{ \cmemvalua }{
  \cmemvalua[\cmem][\cvalua \circ \avaluatr_2] \in \AIconcfn(\astore')
}$).  Similarly,
as is standard, widen $\AIwiden$ should be a sound join with the additional convergence property.
\PUNCH{Valuation transformers and join/widen in base domains.}
}%

For inclusion, joins, and widen of validated views, we treat a view predicate as
uninterpreted, so the domain structure is straightforward.  Treating a view $\aview$ as a finite set of view predicates interpreted conjunctively with $\ltrueandtrue$, abstract inclusion is the set inclusion in the reverse direction among the set of view predicates up to renaming with the valuation transformer $\avaluatr$; $\ltrue$ is the top element; join is set intersection up to renaming; widen can be join because it satisfies the ascending chain property.

For the overall abstract domain combining shape-data abstraction with views,
we define these operations point-wise
\begin{mathpar}
\inferrule{
  \jjoin{\avaluatr_1}{\avaluatr_2}{\astore_1}{\astore_2}{\astore'} \\
  \jjoin{\avaluatr_1}{\avaluatr_2}{\aview_1}{\aview_2}{\aview'}
}{
\jjoin{\avaluatr_1}{\avaluatr_2}{\astoreview[\astore_1][\aview_1]}{\astoreview[\astore_2][\aview_2]}{\astoreview[\astore'][\aview']}
}
\end{mathpar}
The key piece is that they share the same valuation transformers: that is, the join of the view domain additionally constrains the valuation transformers, which
is the crucial precision needed
to connect the summaries around a hole (cf., the allowable range for $\bst$ marked with $\spadesuit$ in \figref{excisemin}).


\paragraph{Reductions and Precision.}
The validated view domain is designed to compensate for inductive
imprecision in the base data-value domain by enabling new
reductions~\cite{cousot+1979:systematic-design} (i.e., exchange of
information) between the pointer-shape, data-value, and view
components of an abstract store $\astore$. A view $\achkercall
\lreviwand \amultichkreg$ can be used to fold the shape from $\aaddr$
provided $\amultichkreg$ can be shown to hold by either constraints in
the shape or view components regardless of the data-value constraints.
Also, important equalities connecting
parameters in unmodified, ``adjacent'' regions of the shape graph
(cf., \figref{bst-traverse}) are easily maintained by the view domain.
These equalities are
typically difficult to derive when widening shape abstractions because
the summarization of regions are considered independently in a
separation logic-based analysis.
While the design of the
validated view domain is to compensate for imprecise data-value
domains, on the flip side, reduction can be applied to derive a view
$\aview$ when the pointer-shape and data-value constraints in
$\astore$ can be shown to imply it (i.e., $\astore \sqsubseteq
\aview$).  This could be useful to ``save'' information in a view from
data-value constraints derived in straight-line code that may be lost
on widening.  \PUNCH{Reduction.}

Overall, the validated view domain remembers all
prog\-ramm\-er-asserted inductive validations of the heap that have
not been possibly invalidated by a heap write. 
As we will see (\secref{dynamic-validation}), 
we can synthesize calls to short-circuiting checkers
that use these remembered validated views \emph{as hypotheses}
(cf.~\figref{validation:bstsc}) and thus avoid checking parts
of the heap that have not changed with respect to the data
structure invariant.

\SQUEEZE{
\paragraph{Synthesis Problem: Connecting Static Invariants to Run-Time Structures.}

If we are to use invariants from validated views as hypotheses for short-circuiting
checkers, we must provide access to these invariants \emph{at run time}. That is, we must
be able to connect static shape analysis facts expressed with shape graphs over
symbolic variables with portions of the heap and concrete values at run time. 
We achieve this by 
introducing additional program variables corresponding to the symbolic variables in the
inferred static shape graph and generating program expressions to connect
symbolic variables with the concrete values they represent. We describe
this process in \secref{logic-variable-instrumentation}. \TODO{Revisit to punch up.}
}

\section{Computing Programmatic Valuations}
\label{app:instrumentation-formal}


We express a \emph{programmatic
  valuation} as a finite map $$\pvalua : \avalset \finitemap \exprset \times \plocset
\times \Nat\;.$$
A $\pvalua$ is flow-insensitive (i.e., is the same within in a lexical scope) and
maps from symbolic
variables to a triple of (1) a program expression $e \in \exprset$ involving the program variables $x$ or
symbolic variables $\aval$; (2) a program location $\ell \in \plocset$ for where the SSA-assignment should be instrumented; and (3) a disjunction index $i$.
Because
an abstract memory $\amem$ is a disjunction of stores $\bigvee_i\;\astore_i$, we
assume every store $\astore$ is indexed with a globally unique number
given by $\disj(\astore)$.  At any location $\ell$ where we need to apply instrumentation, if
there is more than one disjunct in the inferred abstract memory $\amem$ at $\ell$ then we generate
generate an \code{if}-\code{else}
on a special instrumentation variable that 
stores the disjunction of interest from $\disj(\astore)$.  
Thus, we can assume we are working with single stores $\astore$.

\begin{figure}\scriptsize\centering
\begin{mathpar}
\begin{array}{@{}l@{\quad}l@{\;}l@{}}
\text{program commands} &
\multicolumn{2}{@{}l@{}}{
c
\hfill
\text{statements}\quad s \bnfdef \ell\colon \pcmd \bnfalt s_1; s_2 \bnfalt \cdots
}
\\[0.5ex]
\text{annotated commands} & \pcmd
\bnfdef &
c
\bnfalt \pcassume{\astore}{ \icall{\varx}{\chker}{\vec{\vary}} }
\bnfalt \pcunfold{\astore}{\achkreg} \\
&& \bnfalt \pcjoin{\astore}{\avaluatr_1, \astore_1}{\avaluatr_2, \astore_2}
\end{array}
\\
\inferrule[]{
  \achkercall \in \astore
  \\
  \aaddr = \astore(\aenv(\varx))
  \\
  \vec{\aval} = \astore(\aenv(\vec{\vary}))
  \\
  \disji = \disj(\astore)
}{
  \jpvaluacmd{
    \pvalua
    [\aaddr \mapsto \tuple{x, \ploc, \disji}]
    [\vec{\aval} \mapsto \tuple{\vec{\vary}, \ploc, \disji}]
  }{
    \ploccmd[\ploc][
      \pcassume{ \astore }{ \icall{\varx}{\chker}{\vec{\vary}} }
    ]
  }
}

\inferrule[]{
  \disji = \disj(\astore)
}{
  \jpvaluacmd{
    \pvalua[
    \aval \mapsto \tuple{\expr, \ploc, \disji}
    \mid
    \expr \in \bind(\achkreg \unfold \astore)
    ]
  }{ 
    \ploccmd[\ploc][
      \pcunfold{\astore}{\achkreg}
    ]
  }
}

\inferrule[]{
  \pvalua = 
    \pvalua'[
    \aval \mapsto \tuple{\pejoin{\aval_1}{\aval_2}, \ploc, \disji}
    \mid
    \aval \mapsto \aval_1 \in \avaluatr_1
    \;\text{and}\;
    \aval \mapsto \aval_2 \in \avaluatr_2
    ]
    \quad
  \disji = \disj(\astore)
}{
  \jpvaluacmd{
    \pvalua
  }{
    \ploccmd[\ploc][
      \pcjoin{\astore}{\avaluatr_1, \astore_1}{\avaluatr_2, \astore_2}
    ] 
  }
}

\inferrule[]{
}{
  \jpvaluacmd{
    \pvalua
  }{
    \ploccmd[\ploc][
      c
    ] 
  }
}

\inferrule[]{
  \jpvaluacmd{
    \pvalua
  }{
      s_1
  }
  \\
    \jpvaluacmd{
    \pvalua
  }{
      s_2
  }
}{
  \jpvaluacmd{
    \pvalua
  }{
      s_1; s_2
  }
}

\cdots 
\end{mathpar}
\caption{Computing the programmatic valuation $\pvalua$.}
\label{fig:logicvar-instrumentation}
\end{figure}

Assuming the static analysis computes an
invariant map from each program location $\ploc$ to an
abstract memory $\amem$ that over-approximates the set of possible
concrete memories at $\ploc$, we can express the analysis proof by annotating the
program with these inferred invariants $\amem$%
\SQUEEZE{ (or individually for each store $\astore$)}.
For
computing the programmatic valuation, we do not need the annotated invariants at all program
locations, but rather only in the cases where the analysis introduced new symbolic variables.
To make these locations explicit,
we augment the command language $c$ from the underlying programming language (e.g., C) with a
few annotated commands $\pcmd$ to express these analysis operations.  The control-flow of the program
remains unchanged
(e.g., the statement language $s$ in \figref{logicvar-instrumentation} has commands
$\pcmd$ labeled with a program location $\ell$, sequencing, etc.).

%
%

In \figref{logicvar-instrumentation}, we give the annotated commands of interest,
which record the static analysis operations that introduce new symbolic
variables (except memory allocation, whose instrumentation is straightforward).
These commands are annotated with the inferred post-state.
The
$\pcassume{\astore}{ \icall{\varx}{\chker}{\vec{\vary}} }$ command
indicates that the analysis assumed a data structure validation
check $\icall{\varx}{\chker}{\vec{\vary}}$
yielding a  post-store
$\astore$ where $\varx, \vec{\vary}$ are original program
variables.
The $\pcunfold{\astore}{\achkreg}$ command makes explicit the
unfolding of an inductive
predicate $\achkreg$;
and the
$\pcjoin{\astore}{\avaluatr_1, \astore_1}{\avaluatr_2, \astore_2}$
command indicates a join of two input stores $\astore_1, \astore_2$ along with
their corresponding valuation transformers $\avaluatr_1, \avaluatr_2$.  

We describe the generation of a $\pvalua$
via the
judgment form $\jpvaluacmd{\pvalua}{\ploccmd}$, which says that
$\pvalua$ is a programmatic valuation for analysis command $\pcmd$ at
program location $\ploc$.
The analogous judgment form $\jpvaluacmd{\pvalua}{s}$ simply walks over the program structure to constrain $\pvalua$.  The definition of this judgment is sketched in the last line of \figref{logicvar-instrumentation} with the one rule for statement sequencing $s_1; s_2$, and we can see these rules as the checking system for a flow-insensitive fixed-point computation starting from the empty map.
\SQUEEZE{(Note that the output of this generation is $\pvalua$.)}%
We assume an implicit, fixed parameter to
this judgment $\aenv$ for an abstract environment mapping
program variables to symbolic variables for the current static scope.
For simplicity, we let the
symbolic variables in the abstract environment denote the address of
the corresponding program variables and let the abstract store hold
the value of the variable.

For $\pcassumekw$, we
``initialize'' the programmatic valuation $\pvalua$ by finding a
correspondence between
the validation check $\icall{\varx}{\chker}{\vec{\vary}}$
and a shape fact $\achkercall$  in terms of program variables.
We add bindings to $\pvalua$
mapping the symbolic
variables $\aaddr$ and $\vec{\aval}$ in the inductive predicate
to the program variable arguments $\varx$ and $\vec{\vary}$, respectively.
The mapping also remembers the program location $\ell$ and the store disjunct $i$.
%
On $\pcunfoldkw$ing an
inductive predicate $\achkercall$, we need to extend the programmatic valuation $\pvalua$ with
bindings for the existentials introduced in the definition of
$\chker$.  Because all inductive definitions used in
our static shape analysis are derived from executable specifications (e.g.,
static verification predicate $\bst$ from dynamic validation checker \code[highlighting]{|bst|}),
program expressions must
exist that witness an instantiation for each existential.
In other words, any existential introduced in an inductive definition
must correspond to the value of an argument or from reading a field.  We write
this instantiation for an unfolding as $\bind(\achkreg \unfold
\astore)$.  For example, the $\bind$ function for the second disjunct
in $\bst$ from \figref{bst-sepchecker} is $[\symv \mapsto
(\symt\texttt{->}\fldv), \syml \mapsto (\symt\texttt{->}\fldl), \symr
\mapsto (\symt\texttt{->}\fldr)]$\SQUEEZE{ (assuming $\symt$ is the actual
argument and $\symv, \syml, \symr$ are the fresh symbolic variables
chosen for the existentials)}.  The instrumentation at \reftxt{line}{line-assumetough} and \reftxt{line}{line:instunfoldguard} in \figref{excisemin} come from applying this rule.

Finally, we consider joins
$\pcjoin{\astore}{\avaluatr_1, \astore_1}{\avaluatr_2, \astore_2}$.
%
%
Our convention is that the set of symbolic variables in $\astore$ are
fresh with respect to $\astore_1$ and $\astore_2$, so we must extend
the programmatic valuation $\pvalua$ with mappings for these new
variables.  In particular, we should map these to the
corresponding symbolic variable from each input store, which is
precisely given by the valuation transformers $\avaluatr_1$
and $\avaluatr_2$.  We express this mapping using the standard
static single assignment $\pejoin{\cdot}{\cdot}$ function.
The key observation here is that the precision of the programmatic
valuation $\pvalua$ after a join point is determined by the valuation
transformers computed by the static analysis.

\section{Shortify Soundness}
\label{app:shortify-soundness}

In this section, we state in more detail the shortify soundness theorem from \secref{shortify} that relates the statically-proven store with the assertion store to synthesize the short-circuiting validation store.

\begin{figure}\footnotesize
\begin{tabularx}{\linewidth}{RcX}
\inconc[\afinstoreset]{\cmemvalua}{ \ltrue }
& &
for all $\cmem, \cvalua$
\\
\inconc[\afinstoreset]{\cmemvalua[{[]}]}{ \lemp }
& &
for all $\cvalua$
\\
\inconc[\afinstoreset]{\cmemvalua[{[\fldpt{\cvalua(\aaddr)}{\fld}{\cvalua(\aval)}]}]}{ \fldpt{\aaddr}{\fld}{\aval} }
& &
for all $\cvalua$
\\[1ex]
\end{tabularx}
\begin{tabularx}{\linewidth}{RcX}
\inconc[\afinstoreset]{\cmemvalua}{ \astore_1 \lsep \astore_2 } 
& iff &
$\cmem = \cmem_1 \uplus \cmem_2$ such that
$\inconc[\afinstoreset]{\cmemvalua[\cmem_1]}{ \astore_1 }$ and
$\inconc[\afinstoreset]{\cmemvalua[\cmem_2]}{ \astore_2 }$ for some $\cmem_1, \cmem_2$
\\[0.5ex]
\inconc[\afinstoreset]{\cmemvalua}{ \astorepure }
& iff &
$\inconc[\afinstoreset]{\cmemvalua}{ \astore }$ and
$\inconc{\cvalua}{ \apure }$
\end{tabularx}\small
\begin{mathpar}
\hfill\fbox{$\inconc{\cmemvalua}{\astore}$}
\\
\inferrule[CStore-Simp]{
  \astore \in \afinstoreset
  \\
  \inconc[\afinstoreset]{\cmemvalua}{ \astore }
}{
  \inconc{\cmemvalua}{ \astore }
}

\inferrule[CStore-Unfold]{
  \astore \unfold \astore'
  \\
  \inconc{\cmemvalua}{ \astore' }
}{
  \inconc{\cmemvalua}{ \astore }
}
\end{mathpar}
\caption{Concretization of abstract stores.}
\label{fig:abs-concretization}
\end{figure}

\paragraph{Concretization.}

In \figref{abs-concretization}, we define the concretization of
abstract stores $\inconc{\cmemvalua}{\astore}$. We distinguish \emph{simple stores} that do not have any inductive summaries and write $\afinstoreset$ for such simple stores.
In the top portion of the figure, we give a standard
concretization function $\AIconcfn_{\afinstoreset} :
\afinstoreset \rightarrow \cmemset \times \cvaluaset$ that goes from
simple store abstractions to concrete memory-valuation pairs 
following separation logic~\cite{reynolds2002:separation-logic:}.

We then assume a
syntactic unfolding relation that materializes fields from summary
constraints, such as $\achkreg$.  For an inductive validation checker
$\achkreg$ like $\bst$,
the unfolding relation $\achkreg \unfold\astore$ unfolds the inductive
definition while replacing formal parameters with actual arguments
(cf., \figref{bst-unfoldfold}). We write $\achkreg_1 \lreviwand
\achkreg_2$ for a segment generalization of inductive predicate
$\achkreg_1$ to a hole at $\achkreg_2$ whose syntactic unfolding can
be similarly defined~\cite{chang+2008:relational-inductive}. The
specific forms of inductive summaries are not crucial and can vary,
provided that they come equipped with a syntactic unfolding relation.
We can then lift this syntactic unfolding relation to stores $\astore
\unfold \astore'$ by non-deterministically selecting a summary to
unfold while framing the rest of the store.

Finally with syntactic unfolding, we define the concretization of a
separating shape abstraction as either the
concretization of a fully-unfolded, simple store abstraction (\TirName{CStore-Simp}) or in
terms of the concretization of an unfolding (\TirName{CStore-Unfold}).
The concretization of abstract stores $\inconc{\cmemvalua}{\astore}$
is the least relation satisfying the given inference rules.

\newsavebox{\SBoxKscFormal}
\begin{lrbox}{\SBoxKscFormal}\small
\(\begin{array}{r@{\;}c@{\;}cl}
\icall{\aaddr}{\ichk{\chker}}{\vec{\aval'}, \vec{\aval}}
& \defeq &      & \lemp \;\land\; \vec{\aval'} = \vec{\aval} \\
&        & \lor & \Lor_i\; \left(\astore_i'' \;\land\; \vec{\aval'} \neq \vec{\aval}\right)
\quad \text{where $\jsubtractsyn{\astore_i}{\icall{\aaddr}{\chker}{\vec{\aval'}}}{\astore_i''}$} \\
& & &
  \qquad\text{for all $i$ s.t.
  $\icall{\aaddr}{\chker}{\vec{\aval}}
  \unfold
  \astore_i$}
\end{array}\)
\end{lrbox}

\paragraph{Short-Circuiting Template.}

We give a more direct version of the short-circuiting checker template \figref{sctemplate} by unfolding the pre-condition and inlining the static shape analysis. Recall that the proof strategy for statically analyzing the template outlined at the end of \secref{dynamic-validation} was by an unfolding of the pre-condition.
\begin{definition}[Short-Circuiting Invariant Checker Synthesis]
For an inductive validation checker $\chker$, a short-circuiting variant is defined as follows:
\[
\scalebox{0.9}{\usebox{\SBoxKscFormal}}
\]
\end{definition}
The first disjunct corresponds to the short-circuiting condition, while the latter disjuncts correspond to each disjunct in the definition of the original inductive validation checker $\chker$.
We restrict the form of inductive checkers $\achkercall$ to be disjoint disjunctions (i.e., if-then-else) of pure facts and separating conjunctions points-to $\fldpt{\aaddr}{\fld}{\aval}$ and inductive summaries $\achkreg$. As a consequence, we assume that inductive checkers are \emph{precise}, that is, for any concrete memory $\cmem$, there is only one sub-memory that is in the concretization of $\achkercall$ (for a given valuation $\cvalua$). And so each disjunct of the original inductive validation checker $\chker$ corresponds to a case of the short-circuiting template $\ichk{\chker}$.

%
%
%
%
%

We now consider shortify soundness (\thmref{shortify-soundness-main}) from \secref{shortify}. To prove this proposition, we first generalize the induction hypothesis to allow for a frame $\astore_0$ in the statically-proven invariant (that is not needed to decide how to synthesize the short-circuiting validation).
\newcommand{\unf}[1]{#1_{\textrm{u}}}
\begin{theorem}[Shortify Soundness]\label{thm:shortify-soundness-generalized}
  If $\inconc{\cmemvalua[\cmem_0 \uplus \cmem][\cvalua]}{\astore_0 \lsep \astore}$
  with $\inconc{\cmemvalua[\cmem][\cvalua]}{\astore}$
  and $\jsubtractsyn{\astore}{\astore'}{\astore''}$ and
  $\inconc{\cmemvalua[\cmem''][\cvalua]}{\astore''}$
  where $\cmem'' \subseteq \cmem$ ,
  then $\inconc{\cmemvalua[\cmem'][\cvalua]}{\astore'}$ for some $\cmem' \subseteq \cmem$.
\end{theorem}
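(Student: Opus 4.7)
The plan is to prove the (generalized) statement by structural induction on the derivation of $\jsubtractsyn{\astore}{\astore'}{\astore''}$. The frame $\astore_0$ in the generalized hypothesis is what makes the \TirName{Shortify-Sep} case go through: when we split the statically-proven store into $\astore_1 \lsep \astore_2$ and split the concrete memory accordingly into $\cmem_1 \uplus \cmem_2$, each sub-derivation needs to see the ``other half'' as a frame, because the dynamically-validated residue $\astore_1'' \ltrueandtrue \astore_2''$ only gives us sub-memories $\cmem_1'' \subseteq \cmem_1 \uplus \cmem_0'$ and $\cmem_2'' \subseteq \cmem_2 \uplus \cmem_0''$ that may overlap on the frame portion. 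I will also reindex the concretization so that a sub-memory of $\cmem$ always remains a sub-memory of the ambient $\cmem_0 \uplus \cmem$, and rely throughout on the assumption that inductive checkers $\chker$ are \emph{precise} in the sense stated in Appendix~\ref{TR-app:shortify-soundness}, so that the sub-memory selected for $\achkercall$ in the premise is uniquely determined by $\cvalua$.

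\textbf{Routine cases.} The degenerate rule \TirName{Shortify-Validate} ($\astore'' = \astore'$) is immediate: the sub-memory witnessing $\astore''$ already witnesses $\astore'$. \TirName{Shortify-Proven} follows from soundness of the abstract inclusion $\sqsubseteq$: since $\astore \sqsubseteq \astore' \lsep \ltrue$ and $\inconc{\cmemvalua[\cmem]}{\astore}$, we extract a sub-memory $\cmem' \subseteq \cmem$ modeling $\astore'$, and the $\ltrue$ residue of $\astore'' = \ltrue$ is vacuously satisfied. \TirName{Shortify-Emp} reduces to the pure-domain entailment $\apure \land \apure'' \sqsubseteq \apure'$, which is assumed sound in the base data-value domain, and the witnessing memory is empty. \TirName{Shortify-Unfold} is a direct use of \TirName{CStore-Unfold}: if the inductive hypothesis gives a sub-memory modeling the unfolded $\astore'''$, then the same sub-memory models $\astore'$ because unfolding is sound by definition of $\inconc{\cdot}{\cdot}$. \TirName{Shortify-Sep} applies the generalized IH twice (once with $\astore_2$ folded into the frame, once with $\astore_1$ folded in), unions the two resulting sub-memories, and uses disjointness of $\cmem_1$ and $\cmem_2$ to assemble a model of $\astore_1' \lsep \astore_2'$.

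\textbf{Key case (\TirName{Shortify-Inductive}).} Here $\astore = \achkercall$ and $\astore'' = \icall{\aaddr}{\ichk{\chker}}{\vec{\aval'}, \vec{\aval}}$, and we want to conclude $\icall{\aaddr}{\chker}{\vec{\aval'}}$ on a sub-memory. Inspecting the definition of $\ichk{\chker}$ from Appendix~\ref{TR-app:shortify-soundness}, the concretization $\inconc{\cmemvalua[\cmem'']}{\astore''}$ falls into one of two disjuncts. In the short-circuit disjunct, $\cmem'' = [\,]$ and $\cvalua(\vec{\aval'}) = \cvalua(\vec{\aval})$; then $\icall{\aaddr}{\chker}{\vec{\aval'}}$ and $\achkercall$ concretize to the same set, so we take $\cmem'$ to be the sub-memory of $\cmem$ already witnessing $\achkercall$. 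In the other disjunct, $\cmem''$ witnesses some $\astore_i''$ arising from a recursive shortify derivation $\jsubtractsyn{\astore_i}{\icall{\aaddr}{\chker}{\vec{\aval'}}}{\astore_i''}$ for an unfolding $\achkercall \unfold \astore_i$. By soundness of unfolding the premise gives $\inconc{\cmemvalua[\cmem]}{\astore_i}$, and we apply the IH to this recursive derivation to extract a sub-memory $\cmem' \subseteq \cmem$ modeling the corresponding unfolding of $\icall{\aaddr}{\chker}{\vec{\aval'}}$; another application of \TirName{CStore-Unfold} (in the forward direction) folds this back into $\icall{\aaddr}{\chker}{\vec{\aval'}}$.

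\textbf{Main obstacle.} The inductive case is the real difficulty because the recursive definition of $\ichk{\chker}$ refers to $\shortify$ (through the $\jsubtractsyn{\astore_i}{\cdot}{\astore_i''}$ derivations used to build the body), so a naïve structural induction on the shortify derivation is not obviously well-founded at this point. I would address this by strengthening the induction metric to the lexicographic pair (size of $\cmem$, height of the shortify derivation): the \TirName{Shortify-Inductive} case strictly decreases the memory by one node upon unfolding $\achkercall$ into its points-to skeleton plus smaller recursive $\chker$-instances, while all other rules strictly decrease the derivation height with $\cmem$ unchanged. Preciseness of $\chker$ is crucial here to guarantee that the sub-memory chosen for $\achkercall$ (and hence each recursive $\chker$-instance after unfolding) is uniquely determined by $\cvalua$, so that the IH fires on the intended sub-memory rather than on some spurious alternative.
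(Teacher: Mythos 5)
Your overall architecture matches the paper's: generalize with a frame $\astore_0$, case-analyze the \TirName{Shortify} rules, and split \TirName{Shortify-Inductive} into the short-circuit disjunct and the recursive disjunct. You also correctly identify the crux: in the \TirName{Shortify-Inductive} case the derivation $\jsubtractsyn{\astore_i}{\icall{\aaddr}{\chker}{\vec{\aval'}}}{\astore_i''}$ baked into the definition of $\ichk{\chker}$ is \emph{not} a subderivation of the axiom instance you are analyzing, so plain structural induction on the shortify derivation breaks down there.

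However, your proposed repair --- lexicographic induction on $(|\cmem|,\ \text{height of the shortify derivation})$ --- does not close the gap. When you unfold $\achkercall \unfold \astore_i$, the unfolded store $\astore_i$ (points-to skeleton \emph{plus} the recursive $\chker$-instances) still concretizes to the \emph{entire} memory $\cmem$; nothing has been peeled off yet. So at the exact point where you invoke the induction hypothesis on the recursive derivation, the first component of your measure is unchanged, and the second component goes from a height-$1$ axiom to a derivation of arbitrary height. The measure does not decrease, and the claim that ``the memory strictly decreases by one node upon unfolding'' is false at that step (it only becomes true two or three rule applications deeper, inside the \TirName{Shortify-Sep} split of $\astore_i$, and exploiting that would require inlining the synthesized derivation's structure rather than appealing to a generic IH). The paper's fix is different: it inducts lexicographically on the \emph{concretization derivation} of the statically-proven store first, and the shortify derivation second. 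Since $\achkercall$ is a summary, its concretization must end in \TirName{CStore-Unfold}, whose premise $\inconc{\cmemvalua[\cmem][\cvalua]}{\astore_i}$ is a strictly smaller concretization derivation over the same $\cmem$; that strict decrease in the first component absorbs the arbitrarily large recursive shortify derivation in the second. In all other cases the concretization derivation does not grow and the shortify derivation strictly shrinks. You should replace your measure with this one (or some equivalent well-founded measure on the concretization derivation); as written, the ``main obstacle'' you flag remains open.
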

\begin{proof}
By induction on the derivation $\fsP$ of $\inconc{\cmemvalua[\cmem_0 \uplus \cmem][\cvalua]}{\astore_0 \lsep \astore}$ and the derivation $\fsS$ of $\jsubtractsyn{\astore}{\astore'}{\astore''}$ using the lexicographic order of $\fsP$ followed by $\fsS$.

Assuming $\fsP$, we consider the cases of the shortify derivation $\fsS$. The particularly interesting case is for when $\fsS$ follows from \TirName{Shortify-Inductive}.
\small\begin{case}\(
\fsS =
\inferrule*[Right=Shortify-Validate]{
}{
  \jsubtractsyn{\astore}{\astore'}{\astore'}
}\) \\
Immediate.
\end{case}
\begin{case}\(
\fsS =
\inferrule*[Right=Shortify-Proven]{
  \astore \sqsubseteq \astore' \lsep \ltrue
}{
  \jsubtractsyn{\astore}{\astore'}{\ltrue}
}
\) \\
Given $\astore \sqsubseteq \astore' \lsep \ltrue$, we have that $\AIconcfn(\astore) \subseteq \AIconcfn(\astore' \lsep \ltrue)$, so $\inconc{\cmemvalua[\cmem][\cvalua]}{\astore' \lsep \ltrue}$. By the concretization of $\lsep$, we have that $\cmem = \cmem' \uplus \cmem_1$ for some $\cmem', \cmem_1$ such that $\inconc{\cmemvalua[\cmem'][\cvalua]}{\astore'}$
	and $\inconc{\cmemvalua[\cmem_1][\cvalua]}{\ltrue}$.
\end{case}
\begin{case}\(
\fsS =
\inferrule*[Right=Shortify-Emp]{
  \apure \land \apure'' \sqsubseteq \apure'
}{
  \jsubtractsyn{\lemp \land \apure}{\lemp \land \apure'}{\lemp \land \apure''}
}\) \\
Given $\apure \land \apure'' \sqsubseteq \apure'$, we have that $\AIconcfn(\apure) \intersect \AIconcfn(\apure'') \subseteq \AIconcfn(\apure')$. From the hypotheses, we see that $\cvalua \in \AIconcfn(\apure) \intersect \AIconcfn(\apure'')$, so $\inconc{\cvalua}{\apure'}$ and thus $\inconc{\cmemvalua[{[]}][\cvalua]}{\lemp \land \apure'}$. Trivially $[] \subseteq []$.
\end{case}
\begin{case}\(
\fsS =
\inferrule*[Right=Shortify-Unfold]{
  \deriv{\fsU}{\astore' \unfold \astore'''}
  \\
  \deriv{\fsS_1}{\jsubtractsyn{\astore}{\astore'''}{\astore''}}
}{
  \jsubtractsyn{\astore}{\astore'}{\astore''}
}\) \\
By the i.h. on $\fsP$ and $\fsS_1$, we have that $\inconc{\cmemvalua[\cmem'][\cvalua]}{\astore'''}$ for some $\cmem' \subseteq \cmem$. Thus, we can construct the derivation
\[
\inferrule*[right=CStore-Unfold]{
  \deriv{\fsU}{\astore' \unfold \astore'''}
  \\
  \inconc{\cmemvalua[\cmem'][\cvalua]}{ \astore''' }
}{
  \inconc{\cmemvalua[\cmem'][\cvalua]}{ \astore' }
}
\]
\end{case}
\begin{case}\(
\fsS =
\inferrule*[Right=Shortify-Sep]{
  \deriv{\fsS_1}{\jsubtractsyn{\astore_1}{\astore_1'}{\astore_1''}}
  \\
  \deriv{\fsS_2}{\jsubtractsyn{\astore_2}{\astore_2'}{\astore_2''}}
}{
  \jsubtractsyn{\astore_1 \lsep \astore_2}{\astore_1' \lsep \astore_2'}{\astore_1'' \ltrueandtrue \astore_2''}
}\) \\
Given $\inconc{\cmemvalua[\cmem''][\cvalua]}{\astore_1'' \ltrueandtrue \astore_2''}$, we have that $\cmem'' = \cmem_1'' \union \cmem_2''$ for some $\cmem_1'', \cmem_2''$ (that need not be disjoint)
such that
$\inconc{\cmemvalua[\cmem_1''][\cvalua]}{\astore_1''}$ and
$\inconc{\cmemvalua[\cmem_2''][\cvalua]}{\astore_2''}$.
Given $\inconc{\cmemvalua[\cmem][\cvalua]}{\astore_1 \lsep \astore_2}$, we have that $\cmem = \cmem_1 \uplus \cmem_2$ for some $\cmem_1, \cmem_2$ such that $\inconc{\cmemvalua[\cmem_1][\cvalua]}{\astore_1}$ and $\inconc{\cmemvalua[\cmem_1][\cvalua]}{\astore_1}$.
By the i.h. on $\fsP$ and $\fsS_1$ with $\inconc{\cmemvalua[\cmem_1''][\cvalua]}{\astore_1''}$, we have that $\inconc{\cmemvalua[\cmem_1'][\cvalua]}{\astore_1'}$ for some $\cmem_1' \subseteq \cmem_1$.
Analogously by the i.h. on $\fsP$ and $\fsS_2$, we have that $\inconc{\cmemvalua[\cmem_2'][\cvalua]}{\astore_2'}$ for some $\cmem_2' \subseteq \cmem_2$. Since $\cmem_1$ and $\cmem_2$ have disjoint domains, $\cmem_1'$ and $\cmem_2'$ must have disjoint domains. Thus $\inconc{\cmemvalua[\cmem_1' \uplus \cmem_2'][\cvalua]}{\astore_1' \lsep \astore_2'}$ and $\cmem_1' \uplus \cmem_2' \subseteq \cmem$.
\end{case}
\begin{case}
\(\fsS =\)
\begin{center}
\(\inferrule*[right=Shortify-Inductive]{
}{
  \jsubtractsyn{ \achkercall }{ \icall{\aaddr}{\chker}{\vec{\aval'}} }{
    \icall{\aaddr}{\ichk{\chker}}{\vec{\aval'}, \vec{\aval}}
  }
}\)
\end{center}
Because we construct $\ichk{\chker}$ so that $\icall{\aaddr}{\ichk{\chker}}{\vec{\aval'}, \vec{\aval}}$ is precise, we have that
\begin{mathpar}
\fsA =
\inferrule{
  \icall{\aaddr}{\ichk{\chker}}{\vec{\aval'}, \vec{\aval}} \unfold \vec{\aval'} = \vec{\aval}
  \quad
  \deriv{\fsA_1}{\inconc{\cmemvalua[\cmem'']}{ \vec{\aval'} = \vec{\aval} }}
}{
  \inconc{\cmemvalua[\cmem'']}{ \icall{\aaddr}{\ichk{\chker}}{\vec{\aval'}, \vec{\aval}} }
}
\\
\text{or}
\\
\fsA =
\inferrule{
  \icall{\aaddr}{\ichk{\chker}}{\vec{\aval'}, \vec{\aval}} \unfold \unf\astore'' \land \vec{\aval'} \neq \vec{\aval}
  \quad
  \deriv{\fsA_1'}{\inconc{\cmemvalua[\cmem'']}{ \astore \lsep \unf\astore'' \land \vec{\aval'} \neq \vec{\aval} }}
}{
  \inconc{\cmemvalua[\cmem'']}{ \icall{\aaddr}{\ichk{\chker}}{\vec{\aval'}, \vec{\aval}} }
}
\end{mathpar}
for some $\unf\astore''$.

In the first subcase, from $\inconc{\cmemvalua[\cmem'']}{ \vec{\aval'} = \vec{\aval} }$ and the hypothesis $\inconc{\cmemvalua[\cmem][\cvalua]}{\achkercall}$, we have that $\inconc{\cmemvalua[\cmem][\cvalua]}{\icall{\aaddr}{\chker}{\vec{\aval'}}}$.

In the second subcase, we now consider the concretization of $\achkercall$. The \TirName{CStore-Simp} cannot apply because $\achkercall$ is an inductive summary. Thus, it must be case that
\[
\fsP =
\inferrule*[right=CStore-Unfold]{
  \achkercall \unfold \unf\astore
  \\
  \deriv{\fsP_1}{\inconc{\cmemvalua}{ \unf\astore }}
}{
  \inconc{\cmemvalua}{ \achkercall }
}
\]
such that 
$\fsS'::\;\jsubtractsyn{ \unf\astore }{ \icall{\aaddr}{\chker}{\vec{\aval'}} }{ \unf\astore'' }$.
By the i.h. on $\fsP_1$ with $\fsS'$  and $\lemp$ for $\astore_0$, we have that
$\inconc{\cmemvalua[\cmem'][\cvalua]}{\icall{\aaddr}{\chker}{\vec{\aval'}}}$ for some $\cmem' \subseteq \cmem$.
\end{case}
\end{proof}

\section{Hash Trie Benchmark}
\label{app:hashtrie}

In this section, we describe some additional details about the hash trie benchmark.
The hash trie data structure maps keys to values. To store this mapping, a key
is hashed, and the bit blocks of the result are used to traverse down the
trie to the storing node. Hash collisions are resolved by rehashing at certain
levels to provide infinite bit blocks. We consider this data structure as a
significant challenge for static verification, as it involves arbitrary hash functions,
non-trivial rehashing, and bit-wise concatenation of the rehashed keys.
\Tool~demonstrates the synergy with dynamic validation for these constraints.

\newsavebox{\SBoxHTCh}
\begin{lrbox}{\SBoxHTCh}\scriptsize\lstnonum
\begin{lstlisting}[language=C,alsolanguage=Spec,alsolanguage=SqueezeOps]
val hashblocks(val key, val level) {
  val blocks = 0, hkpart = 0, hk = hash(key, 0);
  for (val l = 0; l < level; l++, hkpart += 2) {
    if (hkpart >= 2*12) hk = hash(key, l), hkpart = 0;
    blocks = (blocks << 2) | ((hk >> hkpart) & 0x3);
  }
  return blocks;
}

bool hashtrie(ht t, val level, val path) {
  val k, dv lv;
  return t == NULL ? true :
   (k = t->k, v = t->v, lv = level + 1, p00 = path << 2,
    hashblocks(k, level) == path &&
    hashtrie(t->blk00, lv, p00) &&
    hashtrie(t->blk01, lv, p00 + 0x1) &&
    hashtrie(t->blk10, lv, p00 + 0x2) &&
    hashtrie(t->blk11, lv, p00 + 0x3));
}
\end{lstlisting}
\end{lrbox}
\setlength{\fboxsep}{\saveboxsep}

\newsavebox{\SBoxSepHTCheckerExample}
\begin{lrbox}{\SBoxSepHTCheckerExample}\footnotesize
\( \begin{array}{@{}l@{}}
 \icall{\symt}{\hashtrie}{\sym{level}, \sym{path}}
    \defiff \lemp \spland \symt = \nullval\\[0.5ex]
\;\;\lor\; \exists \sym{k},\sym{d}, \sym{lv}, \sym{blk00}, \sym{blk01}, \sym{blk10}, \sym{blk11}, \sym{p00}, \sym{p01}, \sym{p10}, \sym{p11}. \\
      \qquad
      \fldpt{ \symt }{ \fmtfld{k} }{ \sym{k} }
      \splsep
      \fldpt{ \symt }{ \fldv }{ \symv }
      \splsep \\
      \qquad
      \fldpt{ \symt }{ \fmtfld{blk00} }{ \sym{blk00} }
      \splsep
      \fldpt{ \symt }{ \fmtfld{blk01} }{ \sym{blk01} }
      \splsep \\
      \qquad
      \fldpt{ \symt }{ \fmtfld{blk10} }{ \sym{blk10} }
      \splsep
      \fldpt{ \symt }{ \fmtfld{blk11} }{ \sym{blk11} }
      \splsep \\
      \qquad
      \icall{\sym{blk00}}{\hashtrie}{{\sym{lv},\,\sym{p00}}}
      \splsep
      \icall{\sym{blk01}}{\hashtrie}{{\sym{lv},\,\sym{p01}}}
      \splsep \\
      \qquad
      \icall{\sym{blk10}}{\hashtrie}{{\sym{lv},\,\sym{p10}}}
      \splsep
      \icall{\sym{blk11}}{\hashtrie}{{\sym{lv},\,\sym{p11}}}
      \\ \qquad
      \spland
      \symt \neq \nullval
      \spland \sym{lv} = \sym{level} + 1
      \spland \sym{p00} = \fmtidef{lshift}(\sym{path}, 2)
      \\ \qquad
      \spland \mathsf{hashblocks}(\sym{k}, \sym{level}) = \sym{path}
      \\ \qquad
      \spland \sym{p01} = \sym{p00} + 1
      \spland \sym{p10} = \sym{p00} + 2
      \spland \sym{p11} = \sym{p00} + 3

\end{array}\)
\end{lrbox}

\newsavebox{\SBoxHTEx}
\begin{lrbox}{\SBoxHTEx}\scriptsize\lstnonum
\begin{lstlisting}[language=C,alsolanguage=Spec,alsolanguage=SqueezeOps]
HT *hashtrie_write(HT t, val key, val val)
{
  assume(#$\icall{\vart}{\fmtidef{hashtrie}}{0, 0}$#);
  if (t == NULL) MKLEAF(t, key, val);
  else {
    HT p; val l, hk = hash(k, 0), hkpart = 0;
    for (p = t, l = 0; ; l++, hkpart+=2) {
      if (key == p->k) { UPDATE(p, key, val); break; }
      if (hkpart >= 2*12) hk = hash(k, l), hkpart = 0;
      block = (hk >> hkpart) & 0x3;
      if (block == 0x0) {
        if (p->blk00 == NULL) { MKLEAF(p->blk00, key, val); break; }
        else p = p->blk00;
      }
      else if (block == 0x1) {...}
      else if (block == 0x2) {...}
      else {...}
    }
  }
  assert(#$\icall{\vart}{\fmtidef{hashtrie}}{0, 0}$#); return t;
}
\end{lstlisting}
\end{lrbox}
\setlength{\fboxsep}{\saveboxsep}

\begin{figure}[t]
\subfloat[An inductive definition for hash tries.]{\label{fig:htpred}\ovalbox{\usebox{\SBoxSepHTCheckerExample}}}
\hfill
\subfloat[Corresponding C-style invariant checker.]{\label{fig:htch}\ovalbox{\usebox{\SBoxHTCh}}}
\hfill
\subfloat[The hash trie write.]{\label{fig:htex}\ovalbox{\usebox{\SBoxHTEx}}}
\caption{The write operation for hash tries.}
\label{fig:htexample}
\end{figure}

The inductive predicate $\fmtidef{hashtrie}$ shown in
\figref{htpred} constrains \sym{path} with an uninterpreted function
$\mathsf{hashblocks}$ that corresponds to a complex pure constraint. These uninterpreted functions enables the elimination of some
infeasible states in the analysis, but constraints involving them are left to
dynamic validation. In \figref{htch}, we give the corresponding C-style invariant checker that validates the \code{path} from the root to every
stored key equivalent to the bit blocks of the hashed key. The blocks are
computed by \code{hashblocks} which potentially rehashes the key up to the
current \code{level}. This function includes bit operations, a loop, and
invocations to a hash function.
We fixed the example to blocks of two bits and rehashing at every 12 levels.


The write operation of the hash trie is shown in \figref{htex}. The main loop
traverses down the trie to locate the storing node for the \code{key}. In an
iteration, bit operations extract two bits to determine the direction, and
rehashing is invoked at every 12 levels. Note that the rehashing function
\code{hash} is different from \code{hashblocks} in invariant validation.
For a static verifier to prove the assertion at the end of \code{hashtrie_write}, it must be able to reason about how the code in this function maintains the \code{hashblocks} property.
For \Tool, it does not need to be able to relate \code{hash} and \code{hashblocks} to synthesize short-circuiting dynamic validation. 
The only static reasoning that it needs to show that 
the trie invariant is maintained
across traversal iterations, which is done so by the validated view abstract domain.

After the traversal locates a node to update or reaches an empty sub-trie, the
write operation modifies the trie and asserts the invariant before returning.
This modification essentially stores the \code{key} in the trie. While
\Tool~cannot verify the constraints over it, the synthesized incremental validation code simply invokes \code{hashblocks} on the modified node while short-circuiting the
validation of the rest of the hash trie.

\fi 

\ifOldAppendix

\section{Formal Stuff with No Home :-(}

\subsection{Validated Views}
\label{app:validated-views-concretization}

\newcommand{\IffLabel}[1]{\multicolumn{1}{@{}l@{}}{\textsc{#1}}}

\begin{figure}\small
\subfloat[]{
\( \begin{array}{@{}r@{\;\;}c@{\;\;}l@{}}
\IffLabel{Fin} \\
  \inconc{\cmemvalua}{ \astore } &
  \IFF &
  \text{$\astore \in \afinstoreset$ and $\inconc[\afinstoreset]{\cmemvalua}{ \astore }$}
\\
\IffLabel{Unfold} \\
  \inconc{\cmemvalua}{ \astore } &
  \text{if} &
  \text{$\astore \unfold \astore'$ and $\inconc{\cmemvalua}{ \astore' }$}
\\
\IffLabel{Restrict} \\
  \inconc{\cmemvalua}{ \astoreview } &
  \IFF &
  \text{$\inconc{\cmemvalua}{ \astore }$ and $\inconc{\cmemvalua[\cmem']}{ \aview }$ and $\cmem' \subseteq \cmem$}
\\
\IffLabel{Init} \\
  \inconc{\cmemvalua}{ \achkreg \lreviwand \amultichkreg } &
  \IFF &
  \inconc{\cmemvalua}{ \astoremulti[\achkreg][\amultichkreg] }
\\
\IffLabel{Hyp} \\
  \quad\inconc{\cmemvalua}{ \astoremulti[\astore][\lemp] } &
  \IFF &
  \inconc{\cmemvalua}{ \astore } 
\\
\IffLabel{HypUnfold} \\
  \inconc{\cmemvalua}{ \astoremulti } &
  \text{if} &
  \text{$(\astoremulti) \unfold (\astoremulti[\astore'][\amultichkreg'])$ and
  $\inconc{\cmemvalua}{ \astoremulti[\astore'][\amultichkreg'] }$}
\end{array}  \)
}
\par\noindent\subfloat[Syntactic unfolding to find holes.]{
\begin{mathpar}
\inferrule[HypUnfold-Hyp]{
}{
  (\astoremulti[\astore \lsep \achkreg][\amultichkreg \lsep \achkreg])
  \unfold
  (\astoremulti)
}
\quad
\inferrule[HypUnfold-Unfold]{
  \Domain(\astore) \intersect \Domain(\amultichkreg) = \emptyset
  \\
  \astore \unfold \astore'
}{
  \astoremulti
  \unfold
  \astoremulti[\astore'][\amultichkreg]
}
\end{mathpar}
}
\caption{}
\end{figure}

\begin{figure}
\subfloat[Concretization of abstract stores with validated
views.]{\label{fig:abs-concretization}
\begin{mathpar}
\inferrule[CStore-Fin]{
  \astore \in \afinstoreset
  \\
  \inconc[\afinstoreset]{\cmemvalua}{ \astore }
}{
  \inconc{\cmemvalua}{ \astore }
}

\inferrule[CStore-Unfold]{
  \astore \unfold \astore'
  \\
  \inconc{\cmemvalua}{ \astore' }
}{
  \inconc{\cmemvalua}{ \astore }
}

\inferrule[CStore-ViewRestrict]{
  \inconc{\cmemvalua}{ \astore }
  \\
  \inconc{\cmemvalua[\cmem']}{ \aview }
  \\
  \cmem' \subseteq \cmem
}{
  \inconc{\cmemvalua}{ \astoreview }
}

\inferrule[CView-Init]{
  \inconc{\cmemvalua}{ \astoremulti[\achkreg][\amultichkreg] }
}{
  \inconc{\cmemvalua}{ \achkreg \lreviwand \amultichkreg }
}

\inferrule[CView-Validated]{
  \inconc{\cmemvalua}{ \astore }
}{
  \inconc{\cmemvalua}{ \astoremulti[\astore][\lemp] }
}

\inferrule[CView-Unfold]{
  \astoremulti
  \unfold
  \astoremulti[\astore'][\amultichkreg']
  \\
  \inconc{\cmemvalua}{ \astoremulti[\astore'][\amultichkreg'] }
}{
  \inconc{\cmemvalua}{ \astoremulti }
}

\inferrule[ViewUnfold-Hyp]{
}{
  \astoremulti[\astore \lsep \achkreg][\amultichkreg \lsep \achkreg]
  \unfold
  \astoremulti
}

\inferrule[ViewUnfold-Unfold]{
  \Domain(\astore) \intersect \Domain(\amultichkreg) = \emptyset
  \\
  \astore \unfold \astore'
}{
  \astoremulti
  \unfold
  \astoremulti[\astore'][\amultichkreg]
}
\end{mathpar}
}
\end{figure}

In \figref{abs-concretization}, we define the concretization of
abstract stores with validated views $\inconc{\cmemvalua}{\astore}$ as
the least relation satisfying the given inference rules, and we
explain them here informally.  For this definition, we assume a
syntactic unfolding relation that materializes fields from summary
constraints, such as $\achkreg$.  For an inductive validation checker
$\achkreg$ like $\bst$,
the unfolding relation $\achkreg \unfold\astore$ unfolds the inductive
definition while replacing formal parameters with actual arguments
(cf., \figref{bst-unfoldfold}). We write $\achkreg_1 \lreviwand
\achkreg_2$ for a segment generalization of inductive predicate
$\achkreg_1$ to a hole at $\achkreg_2$ whose syntactic unfolding can
be similarly defined~\cite{chang+2008:relational-inductive}. The
specific forms of inductive summaries are not crucial and can vary,
provided that they come equipped with a syntactic unfolding relation.
We can then lift this syntactic unfolding relation to stores $\astore
\unfold \astore'$ by non-deterministically selecting a summary to
unfold. With syntactic unfolding, we define the concretization of a
separating shape abstraction (without views) as either the
concretization of an exact abstraction (\TirName{CStore-Fin}) or in
terms of the concretization of an unfolding (\TirName{CStore-Unfold}).
We assume a concretization function $\AIconcfn_{\afinstoreset} :
\afinstoreset \rightarrow \cmemset \times \cvaluaset$ that goes from
exact store abstractions to concrete memory-valuation pairs (i.e.,
following separation logic~\cite{reynolds2002:separation-logic:}). A
view $\aview$, generically, is a constraining or restriction of a part
of the abstract store $\astore$; in other words, view $\aview$
describes a sub-store of the ``main'' store $\astore$
(\TirName{CStore-ViewRestrict}). \PUNCH{Concretization of stores.}


A validated view $\achkreg \lreviwand \amultichkreg$ is a validated
region with checker $\achkreg$ up to endpoints
$\amultichkreg$.
The concretization of a validated view is described via an unfolding
relation $\astoremulti \unfold \astoremulti[\astore'][\amultichkreg']$
similar to the unfolding relations considered previously.  This
judgment form states that store $\astore$ up to pending endpoints
$\amultichkreg$ unfolds to store $\astore'$ with endpoints
$\amultichkreg'$ and is defined in the bottom line of
\figref{abs-concretization}. Intuitively, the endpoints specify
hypotheticals that are required to be reached eventually via
unfolding.  So either an endpoint has been reached
(\TirName{ViewUnfold-Hyp}) in which case we record the match (by
dropping $\achkreg$) or we unfold the store $\astore$
(\TirName{ViewUnfold-Unfold}).  Like a store $\astore$, we also treat
endpoints $\amultichkreg$ as a finite map from their base symbolic
address to constraints, so the restriction $\Domain(\astore)
\intersect \Domain(\amultichkreg) = \emptyset$ in
\TirName{ViewUnfold-Unfold} means we only unfold the store when no
endpoints have been reached.  The concretization of a store with
endpoints $\astoremulti$ is defined as either the concretization of
the store component when there are no pending endpoints
(\TirName{CView-Validated}) or by unfolding the pair
(\TirName{CView-Unfold}).  Finally, the concretization of validated
view $\achkreg \lreviwand \amultichkreg$ is obtained by
considering a store described by the validation checker
$\astoremulti[\achkreg]$ (\TirName{CView-Init}).
As an abstraction of stores, a validated view $\achkreg \lreviwand
\amultichkreg$ is a generalization of inductive segments $\achkreg_1
\lreviwand \achkreg_2$ to inductive multi-segments. A multi-segment is
akin to separating implication with multiple separately-conjoined
hypotheses (as seen in form of $\amultichkreg$).
\PUNCH{Concretization of validated views.}


\section{OLD Overview}


In this section, we discuss an example in detail to make concrete the
challenges discussed earlier and illustrate our incremental
verification-validation approach. For explanatory purposes, we
consider binary search trees here, as the invariant is well-known and
a relatively simple example of an intertwined shape-data invariant. We
first describe how inductive shape analyzers summarize shape-data
invariants to lead into how \emph{inductive imprecision} can creep
into such analyses and how such imprecision easily becomes
catastrophic. Then, we preview how the \emph{validated view} abstract
domain is a solution to inductive imprecision, preventing catastrophic
loss of precision needed for incrementalization. Finally, we describe the
insertion of incrementalized dynamic validation checks given a way to
instrument the program to assign concrete values to the symbolic,
existentially-quantified logic variables in the statically-derived
facts.

The key idea underlying our incremental verification-validation
technique is a decoupling of concerns: if a static shape analysis
verifies that the binary search tree invariant holds at a node for
\emph{some} data-value bounds but not necessarily for the required
bounds, then our technique instruments the program to check
dynamically whether or not the observed dynamic instance are in the
required bounds. If so, then we have validated that the dynamic binary
search tree instance in question is in the concretization of the
static binary search tree invariant derived by the shape analysis. If
not, then the instrumentation continues dynamic validation while
continuing to look for opportunities to ``stop early.''

\paragraph{Preliminaries: Inductive Shape Analysis.}
In \figref{example-checker}, we
describe the different specifications for binary search
trees that arise in dynamic validation and static shape analysis to
highlight how they can connect to achieve incremental
verification-validation. We will argue to unify the dynamic and static
specifications around a single semantics.

We give a C type declaration for a binary tree \code{bt}, which is a
pointer to a binary tree node (\code[C]{struct btn}) in
\figref{bst-ctype}.  A binary tree node consists of three fields:
\code{v} that contains a data value, as well as \code{l} and \code{r}
that contain pointers to left and right subtrees.  We leave
unspecified the type of data values (\code{val}) stored in binary
trees, though, if desired, one can assume type \code[C]{int} for
concreteness.  The C type specifies only that a binary tree node
consists of three fields.  The search tree invariant can be validated
dynamically with a call to a procedure like \code[highlighting]{|bst|}
shown in \figref{bst-dynchecker}, typically as an assert such as
\codex{language=Spec,alsolanguage=highlighting}{assert(|bst|(x,NEGINF,POSINF))}
for a binary tree pointed-to by variable \code{x} and where
\code{NEGINF} and \code{POSINF} are compile-time constants
corresponding to the minimum and maximum allowable values,
respectively.  The \code[highlighting]{|bst|} procedure traverses a
binary tree to ensure that the data elements in the sub-tree rooted at
\code{t} is between \code{min} and \code{max} with the shaded checks
to enforce the standard search tree invariant.  However, observe that
this checker procedure implicitly assumes that the reachable heap from \code{t}
is structurally a tree.  In particular, if the reachable heap from
\code{t} includes a cycle, then this call to
\code[highlighting]{|bst|} will not terminate under C program
semantics.

In \figref{bst-sepchecker}, we give an inductive definition $\bst$ in
separation logic corresponding to the \code[highlighting]{|bst|}
dynamic validation checker.  Except for changes in notation, we have
intentionally made these two specifications appear as similar as
possible.  As a convention, we use hatted letters like $\symt$ for
symbolic logic variables used in static analysis and write inductive
definitions like $\bst$ with a distinguished recursion parameter as in
$\icall{\symt}{\bst}{\ldots}$.  The additional restriction that is
imposed by the $\bst$ specification (as compared to the
\code[highlighting]{|bst|} procedure) is that the reachable heap from the root
$\symt$ must in fact have a tree shape due to the separation
constraints $\lsep$.  The data-value property for search trees is
described in the same way as in the dynamic validation checker
\code[highlighting]{|bst|} (shaded).  Note that if we drop the
data-value property and the min/max parameters, then we simply get the
inductive definition describing binary trees.
We unify these two styles of specification by 
interpreting the \code[highlighting]{|bst|} checker procedure as implicitly
having the separation constraints\citeverb{chang+2007:shape-analysis}
that are made explicit in the $\bst$ inductive definition.  In
particular, we interpret an \code[Spec]{assert} of a data structure
validation check
as a request to prove the stronger combined shape-data property. We
thus need only one definition that is used both as a static verification
specification and a dynamic validation procedure.

Inductive shape
analysis\citeverbelse{distefano+2006:local-shape,magill+2006:inferring-invariants,chang+2007:shape-analysis}{distefano+2006:local-shape,chang+2008:relational-inductive}
uses inductive predicates like $\bst$ to statically summarize
unbounded memory regions. A memory region that satisfies $\bst$
includes the pointer-shape property (i.e., is a binary tree) but also
the data-value property (i.e., that it satisfies the search
invariant). Thus, we can see such a summarized $\bst$ memory region
is statically known to be validatable by the
\code[highlighting]{|bst|} run-time procedure.
On the left side of \figref{bst-unfoldfold}, we
express a shape-data invariant that a particular memory region from a
root $\symt$ satisfies $\bst$ with minimum value $\symmin$ and maximum
value $\symmax$ as a separating shape
graph\citeverb{laviron+2010:separating-shape} where the nodes represent
pointers (i.e., memory addresses), and the edges abstract memory
regions.  This thick arrow can be read as knowing
statically that the memory region from $\symt$ satisfies the
validation check described by $\bst$.

In inductive shape analysis, the key operations are (1) materializing
points-to constraints (i.e., abstractions of single memory cells) from
summarized regions by unfolding inductive predicates (left-to-right in
\figref{bst-unfoldfold}) and (2) summarizing memory regions by folding
into inductive predicates, using any number of unary abstraction or
binary widening operators (right-to-left).  To visually distinguish
non-pointer values in our graphs, we draw them as nodes without the
circle (e.g., the contents of $\fldv$ field from pointer $\symt$ is
value $\symv$ drawn without a circle).
Consider statically analyzing an iteration traversing into the middle
of a binary search tree with a cursor pointer $\varp$: in
\figref{bst-traverse}, we show a precise loop invariant for this
traversal at the point where the fields of the node pointed-to by
$\varp$ have been materialized but $\varp$ has yet to be advanced.
Here, we indicate that the program variable $\varp$ contains the
address $\aaddr_{\varc}$ by annotating the program variable below the
node representing $\aaddr_{\varc}$.  We adopt the naming convention
that $\aaddr_{\var}$ is the symbolic address of the binary search tree
node pointed-to by program variable $\var$, and $\aval_{\var}$ is the
symbolic data value of that node (i.e., $\var$\texttt{->v}).  The
memory region between addresses $\symt$ and $\aaddr_{\varc}$ is
described by a $\bst$ segment~\cite{chang+2008:relational-inductive},
a form of separating implication, that intuitively corresponds to a
validation tree with a hole~\cite{DBLP:conf/popl/Minamide98} at $\aaddr_{\varc}$.
The thick arrow between nodes $\symt$ and $\aaddr_{\varc}$ can be read
as knowing statically that the memory region from $\symt$ satisfies
the validation check $\bstcall{\symt}{\symmin}{\symmax}$ up to
checking $\bstcall{\aaddr_{\varc}}{\symumin}{\symumax}$. The pair of
symbolic values $\symumin$ and $\symumax$ correspond to the lower and
upper bounds needed for a binary search tree rooted at
$\aaddr_{\varc}$ in order for the tree rooted at $\symt$ to satisfy
the validation check $\bstcall{\symt}{\symmin}{\symmax}$.  This $\bst$
segment summarizes the path that pointer $\varp$ has already
traversed and is derived by folding points-to constraints materialized
on the previous iteration.  While shape analysis tools vary in how
they represent segments, having some mechanism to do so is fundamental.

In this paper, we assume the
invariant for any particular data structure is specified in a single
inductive definition (though multiple definitions for different kinds
of data structures are permitted); supporting multiple definitions
that specify different constraints over the same memory region is a
matter of enriching the underlying shape analysis
algorithm~\cite{DBLP:conf/vmcai/ToubhansCR13, DBLP:conf/cav/LeeYP11}.

\paragraph{Problem: Inductive Imprecision.}
The key observation to make in the rules shown
in \figref{bst-unfoldfold} is that folding (i.e., going
right-to-left) requires verifying a pure, non-memory constraint
(shaded) just to summarize a memory region into an intertwined
shape-data predicate like $\bst$ or $\bst$-segment.  In a shape-data
analyzer, such constraints are tracked by a base data-value abstract
domain, which is some combination of domains that derive facts in
particular logical theories\SQUEEZE{ (e.g., classically, linear
arithmetic~\cite{DBLP:conf/popl/CousotH78}%
)}.  Consider again the shape-data invariant in \figref{bst-traverse}:
the total ordering constraints on symbolic data values (shaded) must
be tracked by the base data-value domain in order to derive that the
whole memory region from $\symt$ still satisfies
$\bstcall{\symt}{\symmin}{\symmax}$. The two middle constraints
$\symumin' \leq \aval_{\varc} < \symumax'$ are simply from unfolding
at the current binary search tree node $\aaddr_{\varc}$; the two outer
equality constraints (underlined) are crucially important to connect
the regions before and after $\aaddr_{\varc}$.  These constraints must
be kept by joins and widens in the base data-value domain.  If any
imprecision creeps into the base domain and causes these constraints
to be dropped, then a sound analyzer cannot derive even the $\bst$
segment between $\symt$ and $\aaddr_{\varc}$ in the loop invariant.
Note that these inequality constraints are simply
placeholders for any data-value property of interest.  While precise
static derivation of inequality constraints may be quite feasible, it
is not difficult to imagine data-value properties that go beyond
today's solvers (e.g., strings, hashing).

\paragraph{Solution: Validated Views---A Static Shadow Heap.}
Inductive imprecision is problematic not only for static verification
but also for our incremental verification-validation approach.  In
particular, to soundly remove any dynamic checks, we at least need to
be able to derive statically the data structure regions that (1) have
been previously dynamically validated with a checker and (2) have not been
modified since their last validation, regardless of the sophistication
of the pure, data-value constraints.

\newsavebox{\SBoxBstExcisePre}
\begin{lrbox}{\SBoxBstExcisePre}
\begin{tikzpicture}[grow=right]
  \tikzstyle{level 1}=[level distance=11mm, sibling distance=10mm]
  \tikzstyle{level 2}=[levelchk]
  \tikzstyle{level 3}=[level distance=11mm, sibling distance=14mm]
  \tikzstyle{level 4}=[level distance=11mm, sibling distance=8mm]
  \tikzstyle{level 5}=[level distance=15mm]
  \node[nd] (t) {$\aaddr_{\vart}$}
    child  {
      node[nd] {$\aaddr_1$}
      child {
        coordinate edge from parent[chk]
        node[ebelowleft]{$\bst(\symvninf,\aval_{\vart})$}
      }
    }
    child {
      node[nd] {$\aaddr_4$}
      child[level distance=34mm] {
        node[nd] (p) {$\aaddr_{\varp}$} edge from parent[chk]
        child[fld] {
          node[nd] (c) {$\aaddr_{\varc}$}
          child[fld] {
            node {$\nullval$}
          }
          child[fld] {
            node[nd] {$\aaddr_2$}
            child {
              coordinate edge from parent[chk]
              node[ebelowleft]{$\bst(\aval_{\varc},\aval_{\varp})$}
            }
          }
        }
        child[fld] {
          node[nd] {$\aaddr_3$}
          child[level distance=15mm] {
            coordinate edge from parent[chk]
            node[ebelowleft]{$\bst(\aval_{\varp},\symu_2)$}
          }
        }
        node[ebelowleft]{$\bst(\aval_{\vart},\symvpinf)$}
        node[ebelowright]{$\bst(\symu_1,\symu_2)$}
      }
    }
    ;

    \node[nabove] at (t) {$\bst(\symvninf, \symvpinf)$};
    \node[nbelow] at (t) {$\vart$};
    \node[nabove] at (p) {$\bst(\symu_1, \symu_2)$};
    \node[nbelow] at (p) {$\varp$};
    \node[nabove] at (c) {$\bst(\symu_1, \aval_{\varp})$};
    \node[nbelow] at (c) {$\varc$};
\end{tikzpicture}
\end{lrbox}

\newsavebox{\SBoxBstExcisePost}
\begin{lrbox}{\SBoxBstExcisePost}
\begin{tikzpicture}[grow=right]
  \tikzstyle{level 1}=[level distance=11mm, sibling distance=10mm]
  \tikzstyle{level 2}=[levelchk]
  \tikzstyle{level 3}=[level distance=11mm, sibling distance=8mm]
  \tikzstyle{level 4}=[levelchk]
  \node[nd] (t) {$\aaddr_{\vart}$}
    child {
      node[nd] (a4) {$\aaddr_4$}
      child[level distance=34mm] {
        node[nd] (p) {$\aaddr_{\varp}$} edge from parent[chk]
        child[fld] {
          node[nd] {$\aaddr_2$}
          child[level distance=15mm] {
            coordinate edge from parent[chk]
            node[ebelowleft]{$\bst(\aval_{\varc},\aval_{\varp})$}
          }
        }
        child[fld] {
          node[nd] {$\aaddr_3$}
          child[level distance=15mm] {
            coordinate edge from parent[chk]
            node[ebelowleft]{$\bst(\aval_{\varp},\symu_2)$}
          }
        }
        node[ebelowleft]{$\bst(\aval_{\vart},\symvpinf)$}
        node[ebelowright]{$\bst(\symu_1,\symu_2)$}
      }
    }
    ;

    \draw ++(0,-1.3) node[nd] (c) {$\aaddr_{\varc}$}
    child  {
      node[nd] (a1) {$\aaddr_1$}
      child {
        coordinate edge from parent[chk]
        node[ebelowleft]{$\bst(\symvninf,\aval_{\vart})$}
      }
    }
    ;

    \node[nabove] at (t) {$\bst(\symvninf, \symvpinf)$};
    \node[nbelow] at (t) {$\vart$};
    \node[nabove] at (p) {$\top$};
    \node[nbelow] at (p) {$\varp$};
    \node[nabove] at (c) {$\top$};
    \node[nbelow] at (c) {$\varc$};

    \draw[->] (c) -- (a4);
    \draw[->] (t) -- (a1);
\end{tikzpicture}
\end{lrbox}

\newsavebox{\SBoxCodeExample}
\begin{lrbox}{\SBoxCodeExample}\small\lstnonum
\begin{lstlisting}[language=C,alsolanguage=Spec,alsolanguage=highlighting,alsolanguage=SqueezeC,style=number]
bt bst_exciseroot(bt t) {#\lstbeginn#
  assume( |bst|(t, NEGINF, POSINF) );#\label{line-assumebst}#
  assume( t#\,#!=#\,#NULL && t->r#\,#!=#\,#NULL && t->r->l#\,#!=#\,#NULL );#\label{line-assumetough}#
  // Let $\texttt{c}$ be the minimum node in $\texttt{t->r}$ and $\texttt{p}$ be $\mathtt{c}$'s parent.#\CodeSpacer\label{pt-prefindmin}\lststopn#
  |gg:p:=t->r; c:=p->l; while#\,#(c->l != NULL)#\,#{#\,#p:=c; c:=c->l;#\,#}|#\lststartn#
#\InvBox{\scalebox{0.9}{\usebox{\SBoxBstExcisePre}}}\label{pt-postfindmin}\lststopn#
  // Make $\texttt{c}$ the new root.#\CodeSpacer#
  p->l:=c->r; c->l:=t->l; c->r:=t->r;#\lststartn#
#\InvBox{\scalebox{0.9}{\usebox{\SBoxBstExcisePost}}}\label{pt-postswap}#
  #\makebox[0pt][l]{\raisebox{0.5ex}{\colordullx{6}\rule{16em}{0.5pt}}}#assert( |bst|(c, NEGINF, POSINF) );#\label{line-origassert}\lstbeginsub#
  value u1__:=$\symu_1$; value u2__:=$\symu_2$;#\label{line-newassert-begin}\label{line-alloc-segparam}#
  assert( NEGINF <= c->v && c->v < POSINF#\label{line-c-check}#
      && |_bst_ichk|(c->l, |6:NEGINF|, c->v, $\colordullx{6}\symvninf$, $\aval_\vart$)#\label{line-cl-ichk}#
      && |_bstbst_iseg|(c->r, c->v, |6:POSINF|, $\aval_\vart$, $\colordullx{6}\symvpinf$,#\label{line-cr-iseg}\lststopn#
                      $\aaddr_\varp$, &u1__, &u2__)#\lststartsub#
      && u1__ <= $\aaddr_{\varp}$->v && $\aaddr_{\varp}$->v < u2__#\label{line-p-check}#
      && |_bst_ichk|($\aaddr_{\varp}$->l, u1__, |6:$\colordullx{6}\aaddr_{\varp}$->v|, $\aval_{\varc}$, |6:$\colordullx{6}\aval_{\varp}$|)#\label{line-pl-ichk}#
      && |6:_bst_ichk($\colordullx{6}\aaddr_{\varp}$->r, $\colordullx{6}\aaddr_{\varp}$->v, u2__, $\colordullx{6}\aval_{\varp}$, $\colordullx{6}\symu_2$)| );#\label{line-pr-ichk}\label{line-newassert-end}\lstendsub#
  return c;#\lststopn#
}
bool |_bst_ichk|(bt n, val min, val max, val _min,
               val _max) {#\CodeSpacer\lststartn#
  if (|6:min == _min && max == _max|) { |6:return true|; }#\label{line-ichk-same}\lststopn#
  else {#\lststartn#
    // Logic-variable instrumentation#\label{line-ichk-instrument}\lststopn#
    if (n == NULL) { _disj:=0; }
    else { _disj:=1; _v:=n->v;
      _lmin:=_min; _lmax:=_v; _rmin:=_v; _rmax:=_max; }#\lststartn#
    // Do the check#\CodeSpacer\label{line-ichk-check}\lststopn#
    if (_disj == 0) { return true; }
    else#\;#if (_disj == 1) { v:=n->v;
      return min <= v && v < max &&
	|_bst_ichk|(n->l, min, v, _lmin, _lmax) &&
	|_bst_ichk|(n->r, v, max, _rmin, _rmax); } }
}
bool |_bstbst_iseg|(bt n, val min, val max, val _min,
                  val _max#\CodeSpacer#
  bt _e, val* emin, val* emax) {#\lststartn#
  if (|6:min == _min && max == _max|) { |6:return|#\,#|6:true|; }#\label{line-iseg-same}#
  else#\;#if (n == _e) { |2:*emin:=min;|#\,#|2:*emax:=max;|#\,#return#\,#true; }#\label{line-iseg-end}#
  else { #\sl $\ldots$ Logic-variable instrumentation and do the check $\ldots$# }#\label{line-iseg-docheck}\lststopn#
}
\end{lstlisting}
\end{lrbox}
\begin{figure}
\hfill\scalebox{0.93}{\usebox{\SBoxCodeExample}}
\caption{Static incrementalization of the
  binary search tree dynamic validation after excising the root node.}
\label{fig:code-example}
\end{figure}

In \figref{code-example}, we show an example of how we statically
incrementalize the dynamic validation of the binary search tree. We
describe the code in more detail below, but for the moment, let us
focus on the invariant shown at \reftxt{point}{pt-postfindmin}.
With the shape-data predicate $\bst$ and its additional data-value
parameters,
the $\bst$ segment to $\varp$ may fail to be
derived if there is any imprecision in maintaining the data invariant
corresponding to the shaded part shown in \figref{bst-traverse}. To
address this issue, we introduce the validated view abstract domain. A
validated view domain element is a partial map from symbolic addresses
to inductive predicates and is diagrammed as labels above the nodes in
the shape graph.  Roughly, the meaning of a view label
$\achkreg$ on a node $\aaddr$ is that a region of memory from $\aaddr$
is also described by inductive predicate $\achkreg$ up to some other
labeled nodes. Nodes whose outgoing edges include an inductive summary
edge are implicitly labeled by that inductive predicate. For example
at \reftxt{point}{pt-postfindmin}, the label $\bst(\symvninf,
\symvpinf)$ on $\aaddr_{\vart}$ says that
$\bstcall{\aaddr_{\vart}}{\symvninf}{\symvpinf}$ holds for the
materialized region from $\aaddr_{\vart}$ up to the labels on
$\aaddr_1$ and $\aaddr_4$.

From the static analysis perspective, we remember with the view label
that the materialized region from $\aaddr_{\vart}$ resulted from an
unfolding of $\bstcall{\aaddr_{\vart}}{\symvninf}{\symvpinf}$ and no
operations have occurred that would invalidate the inductively-assumed
data-value constraints.  Thus, regardless of any imprecision in the
base pure, data-value domain, we can use this information for folding.
Now contrast the labels on nodes $\aaddr_{\varp}$ and $\aaddr_{\varc}$
at points~\ref{pt-postfindmin} and~\ref{pt-postswap}: observe that the
$\bst$ instances have been dropped because of the pointer updates
between the two program points (made explicit by labeling with
$\top$).  Intuitively, this weakening means that the
inductively-assumed constraints may have been broken and thus any
folding requires verification using the base data-value domain.

Validated views can be seen as a combination ideas drawing from
separation logic~\cite{reynolds2002:separation-logic:} and predicate
abstraction~\cite{DBLP:conf/cav/GrafS97} where a set of
inductive-predicate labels further constrains the heap.  From the
separation logic perspective, validated view labels are a normalized
form of a formula involving separating implication $\lwand$ and
non-separating conjunction $\land$.  The validated view domain
provides arbitrary views of the same memory in terms of the set of
inductive checkers (see \secref{validated-views}).

\paragraph{Application: Static Incrementalization of Dynamic Validation Checks.}
Our static incrementalizer takes two inputs:
\begin{inparaenum}[(1)]
\item a dynamic validation check to incrementalize, such as
  \codex{language=Spec,alsolanguage=highlighting}{assert(|bst|(c,}
  \codex{language=Spec,alsolanguage=highlighting}{NEGINF,POSINF))}
    on \reftxt{line}{line-origassert} in \figref{code-example}; and
\item the static shape graph at the assertion point (shown on
  \reftxt{line}{pt-postswap}).
\end{inparaenum}
The output is the incrementalized \code[Spec]{assert} shown on
\reftxt{lines}{line-newassert-begin}--\ref{line-newassert-end} that we
explain below. 

The \code{bst_exciseroot} function excises the root of a
binary search tree \code{t} by swapping the root node with the next
node in the tree's in-order traversal order.  This function is a
standard helper function for removing a data item from a binary search
tree.  For presentation, we consider only the most complex case where
the next node is down the left spine of the right sub-tree of \code{t}
(as expressed with the pre-condition on
\reftxt{line}{line-assumetough}).  The code between program
points~\ref{pt-prefindmin} and~\ref{pt-postfindmin} simply walks down
the left spine of \code{n->r} so that \code{c} points to the smallest
node in that sub-tree and \code{p} points to \code{c}'s parent.
At \reftxt{point}{pt-postfindmin}, we show an invariant that is readily
derivable by an inductive shape analysis. To minimize clutter in the
diagram, we do not show the data value $\fldv$ field and rely on the
naming convention established previously (e.g., $\aval_{\vart}$ is the
value of the $\fldv$ field of the node at program variable $\vart$).  We also leave off the
$\fldl$ and $\fldr$ labels for the left and right sub-tree pointers,
using the convention that $\fldl$ is below $\fldr$ (as in
\figref{example-checker}).  The shape graph at \reftxt{point}{pt-postfindmin}
shows that \code{c} is the leftmost node in \code{n->r}.  The code
from \reftxt{point}{pt-postfindmin} to~\ref{pt-postswap} makes
\code{c} the new root with pointer-updates (though interestingly
\code{t} remains connected).  These updates are reflected in the shape
graph at \reftxt{point}{pt-postswap} by simply swinging a few
points-to edges.

To perform static incrementalization, we require an instrumentation
that binds new program variables to appropriate concrete values for
all symbolic, logic variables in the given static shape graph. The
instrumentation process is analysis-directed, meaning the
logic-variable instrumentation is derived by augmenting the abstract
semantics of the shape analysis. While this instrumentation is
crucial, let us simply assume here that the appropriate
instrumentation exists---writing the symbolic, logic variables
directly in the code in \figref{code-example}. The instrumentation
process is described in \secref{logic-variable-instrumentation}.
Note that this instrumentation is the only memory overhead of our
approach, and because they are for the statically-known number of symbolic
variables appearing in the static shape graphs, they can be stack
allocated.

The incrementalized \code[Spec]{assert} shown on
\reftxt{lines}{line-newassert-begin}--\ref{line-newassert-end} is
derived by unfolding the \code[highlighting]{|bst|} definition to match
the static shape graph shown on \reftxt{line}{pt-postswap}, except the
inductive summaries have been replaced with calls to incrementalized
checker and segment versions: \code[highlighting]{|_bst_ichk|} and
\code[highlighting]{|_bstbst_iseg|}, respectively.  Let us first consider
\code[highlighting]{|_bst_ichk|}, the incrementalized version of the
\code[highlighting]{|bst|} validation check, which takes the same
parameters as \code[highlighting]{|bst|} plus two additional
parameters \code{_min} and \code{_max}.
As a convention for presentation, we prefix formal parameters with an
underscore \code{_} (like \code{_min} and \code{_max}) to those that
should be bound to actual argument from logic-variable
instrumentation.
We instrument a call to
\code[highlighting]{|_bst_ichk|} when
\begin{inparaenum}[(1)]
\item we are \emph{obliged to validate dynamically} that the parameter
  \code{n} is a binary search tree satisfying
  \code[highlighting]{|bst|} for some bounds given by the \code{min}
  and \code{max} parameters, and
\item we \emph{know statically} that \code{n} is a binary search tree
  satisfying \code[highlighting]{|bst|} for some potentially different
  bounds given by the \code{_min} and \code{_max} parameters.
\end{inparaenum}
This potential difference is what enables incremental
verification-validation; it is the small ``escape hatch'' for static
verification.  Observe that on \reftxt{line}{line-ichk-same} in
\code[highlighting]{|_bst_ichk|}, we check whether the first set of
parameters \code{min}, \code{max} are equal to their counterpart in
the second set \code{_min}, \code{_max}.  Because the first set of
parameters correspond to obligations and the second set correspond to
statically proven information, if they are equal, we have dynamically
validated the small remaining proof obligation left from static
verification and can stop the recursive checking. We
have validated that the dynamic binary search tree instance in
question is in the concretization of the static binary search tree
invariant derived by the shape analysis.  If the equality check is
false, we continue with the recursive validation.
Between lines~\ref{line-ichk-instrument} and~\ref{line-ichk-check}, we
have the instrumentation to bind new variables to be used as arguments
for the \code{_min}, \code{_max} parameters in the recursive
invocations of \code[highlighting]{|_bst_ichk|}.  Since the static
analysis state is a finite disjunction of shape graphs, \code{_disj}
is a special instrumentation variable that selects among them.  Below
\reftxt{line}{line-ichk-check}, we simply have the validation code
according to the \code[highlighting]{|bst|} definition modulo
replacing its recursive calls with the appropriate recursive calls to
\code[highlighting]{|_bst_ichk|}.

The segment version \code[highlighting]{|_bstbst_iseg|} follows the same
principle but adds an additional complication because of its endpoint.
We instrument a call to \code[highlighting]{|_bstbst_iseg|} when
\begin{inparaenum}[(1)]
\item we are \emph{obliged to validate dynamically} that the parameter
  \code{n} satisfies \code[highlighting]{|bst|} for bounds \code{min}
  and \code{max} up to the endpoint given by \code{_e}, and
\item we \emph{know statically} that \code{n} is a
  \code[highlighting]{|bst|} segment up to \code{_e} for potentially
  different bounds \code{_min} and \code{_max}.
\end{inparaenum}
The final two parameters \code{emin}, \code{emax} are used to return
the \code{min}, \code{max} values (in C-style) when the checking has
reached the endpoint as shown on \reftxt{line}{line-iseg-end}.  These
obligation values are required by the caller to continue checking
after the segment.  For example, observe that the result locations
\code{u1__} and \code{u2__} passed to the
\code[highlighting]{|_bstbst_iseg|} call on \reftxt{line}{line-cr-iseg}
are then used for the search tree validation check on
\reftxt{line}{line-p-check} for the data value at \code{p}.  These
result locations are allocated on \reftxt{line}{line-alloc-segparam}
and initialized to the values at the end of the segment in the static
shape graph
(i.e., $\symu_1$ and $\symu_2$).
This initialization
appropriately captures the situation when the dynamic validation
soundly stops early
before reaching the endpoint
$\aaddr_{\varp}$
In this case, we know that the \code{min}, \code{max}
parameters at $\aaddr_{\varp}$ must correspond to what is derived statically
from the shape graph
in $\symu_1$ and $\symu_2$.
The rest of
\code[highlighting]{|_bstbst_iseg|} on \reftxt{line}{line-iseg-docheck}
continues the checking and is exactly like in
\code[highlighting]{|_bst_ichk|} except with recursive calls to
\code[highlighting]{|_bstbst_iseg|}.  Intuitively, we can view
\code[highlighting]{|_bstbst_iseg|} as searching for the endpoint
\code{_e} that is statically known to be in \code{n}'s reachable heap.
Assuming no early return, we know from the static analysis that one of
the recursive calls will terminate at \code{_e} while the other will
check until reaching $\nullval$s.

Now let us consider how this instrumentation yields incremental
dynamic validation of the data structure invariant check at
\reftxt{line}{line-origassert} in \code{excise_root}.  While the
swapping of \code{c} for \code{t} does not affect the binary search
tree invariant for all nodes, it is still drastic---affecting the bound
constraints on $O(\lg n)$ nodes.  We want our incrementalized
validation check to touch only those $O(\lg n)$ nodes.  The check on
\reftxt{line}{line-c-check} validates the ordering invariant at
node \code{c} that its data value is between \code{NEGINF} and
\code{POSINF} as required by the \code[Spec]{assert}. The left child
\code{c->l} is now $\syma_1$ where we must validate that it is a
binary search tree upper-bounded by \code{c->v}.  What we know
statically is only that it is upper-bounded by $\symv_{\vart}$ (i.e.,
\code{t->v}), which is translated directly to the
\code[highlighting]{|_bst_ichk|} on \reftxt{line}{line-cl-ichk}.  This
upper bound change affects just the right spine from $\syma_1$, so the
\code[highlighting]{|_bst_ichk|} recursive calls that go left will
immediately return hitting the early termination condition.  The
\code[highlighting]{|_bstbst_iseg|} call \reftxt{line}{line-cr-iseg} for
\code{c->r} will follow the left spine to \code{p}.  Then, we
check \code{p} on \reftxt{line}{line-p-check} as described above.  Now
for the \code{p->l} sub-tree, just the lower bound has changed causing
a validation walk down the left spine on \reftxt{line}{line-pl-ichk}.
The context for \code{p->r} sub-tree is unchanged, so the call on
\reftxt{line}{line-pr-ichk} returns immediately.  The incremental
validation walks $O(\lg n)$ nodes as desired---two paths from the
new root \code{c}.

Observe that our incremental verification-validation approach is much
more fine-grained then the standard optimization recipe: ``Try to
statically prove the assert. If successful, then remove it. Otherwise,
leave it in for dynamic validation.'' With data structure validation,
failing to prove an inductive predicate means leaving in a full walk
over the entire data structure. Also note that while manually writing
the incrementalized versions of any validation checker (e.g.,
\code[highlighting]{|bst_ichk|}) is mechanical, static analysis is
crucially needed to (1) determine how to call these incremental
checkers and (2) obtain the logic variable instrumentation to do so.




\section{Yi-Fan's Overview}

In this section, we demonstrate our techniques with an example of excising the
root of a binary search tree. For the purpose of clarity, we intentionally
splits this operation into two parts, deleting the minimal value from the right
sub-tree and then setting the value of the root to the deleted value. We
explain these two parts respectively on how to achieve
\emph{static incrementalization} and how this is enabled by analysis-directed
instrumentation.

\newsavebox{\SBoxCodeExampleCR}
\begin{lrbox}{\SBoxCodeExampleCR}\small\lstnonum
\begin{lstlisting}[language=C,alsolanguage=Spec,alsolanguage=highlighting,alsolanguage=SqueezeC,style=number]
bool |bst|(bt n, val min, val max) {#\lstbeginn#
#\phantom{\hibox{\texttt{\&}}}#
  if (n == NULL) return true;
  else {
   v = n->data;
    return min <= v && v < max
      && bst(n->l, min, v) && bst(n->r, v, max);
  }#\lststopn#
}
#\CodeSpacer#void bst_setroot(bt t, val v) {#\lststartn#

  t->data = v;                      #\label{line:naive-set}#
  assert( bst(c, NEGINF, POSINF) ); #\label{line:naive-invocation}\lststopn#
}
\end{lstlisting}
\end{lrbox}

\newsavebox{\SBoxCodeExampleCRIS}
\begin{lrbox}{\SBoxCodeExampleCRIS}\small\lstnonum
\begin{lstlisting}[language=C,alsolanguage=Spec,alsolanguage=highlighting,alsolanguage=SqueezeC,style=number]
bool |1:bst_sc|(bt n, val min, val max, val hypmin, val hypmax) {  #\lstbeginn##\label{line:inc-formal}#
  if ($\hibox{\texttt{hypmin <= min \&\& max <= hypmax}}$) return true;           #\label{line:inc-sc}#
  if (n == NULL) return true;
  else {
    v = n->data;
    return min <= v && v < max 
      && |1:bst_sc|(n->l, min, v, hypmin, _v)
      && |1:bst_sc|(n->r, v, max, _v, _max);#\label{line:inc-rec-left}##\label{line:inc-rec-right}#
  }#\lststopn#
}
#\CodeSpacer#void bst_setroot(bt t, val v) {#\lststartn#
  old_d = t->data;
  t->data = v;
  #\lstbeginsub#assert( NEGINF<= t->data && t->data < POSINF    #\label{line:inc-invocation-yi-fan}#
       && bst_sc(t->left, NEGINF, v, NEGINF, old_d)           #\label{line:inc-invocation-left-yi-fan}#
       && bst_sc(t->right, v, POSINF, old_d, POSINF));        #\label{line:inc-invocation-right-yi-fan}# #\lstendsub\lststopn#
}
\end{lstlisting}
\end{lrbox}

\begin{figure*}
\subfloat[Naive dynamic validation]{\label{fig:setroot-naive}
\rule{1em}{0pt}\scalebox{0.9}{\fbox{\usebox{\SBoxCodeExampleCR}}}
}
\hfill
\subfloat[Static incrementalized validation]{\label{fig:setroot-inc}
\scalebox{0.9}{\fbox{\usebox{\SBoxCodeExampleCRIS}}}
}
\caption{Static incrementalization of the binary search tree dynamic
validation}
\label{fig:setroot}
\end{figure*}

%
%
%
%

Recall that the combination of these two parts is equivalent to excising the root
because of a binary search tree invariant: values in the binary search tree are
sorted in their in-order traversal. This invariant means that values in the
right sub-tree are larger than the value at the root and the root value is
larger than values in the left sub-tree. Thus, moving the minimal value of the
right sub-tree to the root effectively excises the old value of the root and
still preserves the binary search tree. Using the same invariant, developers can
validate all the values within their \emph{allowable ranges}.

The validation procedure utilize this invariant is shown in Figure
\ref{fig:setroot-naive}. We first give a C type declaration for a binary
tree \code{bt}, which is a pointer to a binary tree node (\code[C]{struct btn}).
A binary tree node consists of three fields: \code{v} that contains a data
value, as well as \code{l} and \code{r} that contain pointers to left and right
subtrees.  We leave unspecified the type of data values (\code{val}) stored in
binary trees, though, if desired, one can assume type \code[C]{int} for
concreteness. The C type specifies only that a binary tree node consists of
three fields. The search tree invariant can be validated dynamically with a
call to a procedure like \code[highlighting]{|bst|} shown, typically invoked
by an assert such as
\codex{language=Spec,alsolanguage=highlighting}{assert(|bst|(x,NEGINF,POSINF))}
for a binary tree pointed-to by variable \code{x} and where
\code{NEGINF} and \code{POSINF} are compile-time constants
corresponding to the minimum and maximum allowable values,
respectively.  The \code[highlighting]{|bst|} procedure traverses a
binary tree to ensure that the data elements in the sub-tree rooted at
\code{t} is between \code{min} and \code{max}.
However, observe that this checker procedure implicitly assumes that the
reachable heap from \code{t} is structurally a tree. In particular, if the
reachable heap from \code{t} includes a cycle, then this call to
\code[highlighting]{|bst|} will not terminate under C program semantics.

\subsection{Static Incrementalization}

\newsavebox{\SBoxIncExecution}
\begin{lrbox}{\SBoxIncExecution}\small
\begin{tikzpicture}[grow=right]
\tikzstyle{level 1}=[level distance=15mm, sibling distance=22mm]
\tikzstyle{level 2}=[level distance=20mm, sibling distance=16mm]
\tikzstyle{level 3}=[level distance=20mm, sibling distance=12mm]
\tikzstyle{level 3}=[level distance=20mm, sibling distance=10mm]
\node[vnd] {$8 \rightarrow 9$}
    child {
        node[vnd,label=below:{$(-\infty, 9, \hibox{-\infty, 8})$}] {$2$}
        child[fld] {
            node[tnd,label=right:{$(-\infty, 2, \hibox{-\infty, 2})$}] {}
        }
        child {
            node[vnd,label=below:{$(2, 9, \hibox{2, 8})$}] {$4$}
            child[] {
                node[tnd,label=below:{$(2, 4, \hibox{2, 4})$}] {}
            }
            child {
                node[vnd,label=above:{$(4, 9, \hibox{4, 8})$}] {$6$}
                child[] {
                    node[tnd,label=below:{$(4, 6, \hibox{4, 6})$}] {}
                }
                child[dashed,thick] {
                    node[] {...}
                }
            }
        }
    }
    child[] {
        node[vnd,label=below:{$(9, \infty, \hibox{8, \infty})$}] {$10$}
        child[dashed] {
            node[] {...}
        }
        child[fld] {
            node[tnd,label=below:{$(10, \infty, \hibox{10, \infty})$}] {}
        }
    }
    ;
\end{tikzpicture}
\end{lrbox}

\begin{figure}
\scalebox{0.9}{\usebox{\SBoxIncExecution}}
\caption{An instance of incremental validation.
The root of the binary search tree is changed from 8 to 9. Its sub-trees are
validated with the allowable range and its known range. The recursive validation
short-circuits when both of them are equal, and this is shown as a triangle in
the diagram. (Note I didn't write the tree pointers in the arguments.)}
\label{fig:validation-instance}
\end{figure}

In this section, we focus on the second part, procedure \code{bst_setroot} in
Figure \ref{fig:setroot-naive}, and how to \emph{statically incrementalize} it.
Procedure \code{bst_setroot} simply changes the value of the root
(line \ref{line:naive-set}) without modifying any other parts of the binary search
tree. Figure \ref{fig:validation-instance} is a result of invoking
\code{bst_setroot} to change the root value from $8$ to $9$. This operation
however does not always preserve the binary search tree invariant. Invoking it
to set the root with a value larger than its right child, such as $11$ in this
example, immediately breaks the invariant. To detect this kind of violations,
developers can invoke the validation procedure \code{bst} after the modification
(line \ref{line:naive-invocation} in Figure \ref{fig:setroot-naive}).

Invoking the validation procedure \code{bst} naively is unfortunately
inefficient. Procedure \code{bst} checks all values being within their
allowable range even though \code{bst_setroot} doesn't change the tree except
its root. The fact that most of the tree are unchanged implies most of the
allowable ranges for sub-trees are also unchanged. Figure
\ref{fig:validation-instance} shows an instance of the binary search tree whose
value at the root is changed from $8$ to $9$. Observe that values in the left
sub-tree of node with value $6$ are allowed to be within $(2, 4)$ before the
change. This allowable range does not change while no modification to their
ancestors whose values are $2$ and $4$. Similarly, the allowable range for the
left sub-tree of the node with value $6$ is also unchanged. Procedure \code{bst}
still walk through these sub-trees thoroughly.

Knowing the property that most of the allowable range is unchanged, developers
can improve the efficiency of validation by \emph{incrementalization} to only
check the sub-trees whose allowable ranges have changed. This incrementalization
can be achieved \emph{statically} via code transformation. Figure
\ref{fig:setroot-inc} shows the incrementalized validation procedure
\code{bst_ichk}. The procedure is augmented with extra formal parameters,
underscored variable \texttt{\_min, \_max} at line \ref{line:inc-formal}, to
pass in the known allowable range of the sub-tree. With these formals, the
validation can \emph{short-circuit} (line \ref{line:inc-sc}) further execution
for the entire sub-tree when its known allowable range is the same as the range
to validate.

The body of the incrementalized \code{bst_ichk} is almost identical to
\code{bst} except it needs to pass extra arguments to the incrementalized
recursive validations. These extra arguments represent the known allowable range for
the sub-trees. These ranges can be derived by the binary search tree invariant
because both the range for the whole tree (underscored variables) and the value
at the root (line \ref{line:inc-bound}) are known.

Observe that the statically incrementalized \code{bst_ichk} executes efficiently
to validate the tree after its root is changed. The same root value is used for
both validating the invariant and representing the known allowable range, so
only the initial arguments at line \ref{line:inc-invocation} to the incremental
validation may contain different values for the validating range and the known
range. For the invocation to validate the unmodified left sub-tree (line
\ref{line:inc-invocation-left}), the new value of the root is only passed along
the path of right links, and \code{bst_ichk} short-circuits on all other
paths. Figure \ref{fig:validation-instance} shows that
\code{bst_ichk} validates values $2, 4, 6, ...$ along the right links.
The validation on the unmodified right sub-tree works similarly, so the
statically incrementalized \code{bst_ichk} runs in $O(\lg n)$ to the tree size
$n$.

This incrementalized \code{bst_ichk} however only works on the tree whose
allowable range is \emph{known}. Developers know the allowable range of the sub-trees
in \code{bst_setroot} because it only changes the value at the root. After the
change, the new root value should be explicitly checked (line
        \ref{line:inc-invocation}) and sub-trees could be incrementally
validated while their known allowable ranges are expressed with the old root
value. However, if the developer doesn't invoke the incrementalized
\code{bst_ichk} with the known allowable range but with bogus values, such as
\code{bst_ichk(t->left, NEGINF, v, NEGINF, v)}, the validation always
short-circuits and may misses to capture potential violations.


\subsection{Analysis-directed Instrumentation}

While invoking the incrementalized \code{bst_ichk} with incorrect values could
miss the potential invariant violations, reasoning what is the known allowable
range is non-trial when heap manipulation in the caller is complicated. An
example is deleting the minimal value of a binary search tree. This deletion
changes some middle parts of the tree, and the incrementalized validations
should be invoked on the surrounding unmodified portions of it. These
invocations with associated unknown allowable range could be tedious and
error-prone for human beings. To solve this problem, we propose to use shape
analysis to trace the unmodified allowable ranges over portions of the heap and
let it to direct the generation of the incremental validation. In this section,
we first discuss the inductive shape analysis and the \emph{validated view} to
conquer its challenge to trace the unmodified relation. Then, we demonstrate our
technique with the example of deleting the minimal value of the tree.

\paragraph{Preliminaries: Inductive Shape Analysis.}
(remove Figure \ref{fig:bst-ctype} and \ref{fig:bst-dynchecker})
In \figref{bst-sepchecker}, we give an inductive definition $\bst$ in
separation logic corresponding to the \code[highlighting]{|bst|}
dynamic validation checker.  Except for changes in notation, we have
intentionally made these two specifications appear as similar as
possible.  As a convention, we use hatted letters like $\symt$ for
symbolic logic variables used in static analysis and write inductive
definitions like $\bst$ with a distinguished recursion parameter as in
$\icall{\symt}{\bst}{\ldots}$.  The additional restriction that is
imposed by the $\bst$ specification (as compared to the
\code[highlighting]{|bst|} procedure) is that the reachable heap from the root
$\symt$ must in fact have a tree shape due to the separation
constraints $\lsep$.  The data-value property for search trees is
described in the same way as in the dynamic validation checker
\code[highlighting]{|bst|} (shaded).  Note that if we drop the
data-value property and the min/max parameters, then we simply get the
inductive definition describing binary trees.
We unify these two styles of specification by 
interpreting the \code[highlighting]{|bst|} checker procedure as implicitly
having the separation constraints\citeverb{chang+2007:shape-analysis}
that are made explicit in the $\bst$ inductive definition.  In
particular, we interpret an \code[Spec]{assert} of a data structure
validation check
as a request to prove the stronger combined shape-data property. We
thus need only one definition that is used both as a static verification
specification and a dynamic validation procedure.

Inductive shape
analysis\citeverbelse{distefano+2006:local-shape,magill+2006:inferring-invariants,chang+2007:shape-analysis}{distefano+2006:local-shape,chang+2008:relational-inductive}
uses inductive predicates like $\bst$ to statically summarize
unbounded memory regions. A memory region that satisfies $\bst$
includes the pointer-shape property (i.e., is a binary tree) but also
the data-value property (i.e., that it satisfies the search
invariant). Thus, we can see such a summarized $\bst$ memory region
is statically known to be validatable by the
\code[highlighting]{|bst|} run-time procedure.
On the left side of \figref{bst-unfoldfold}, we
express a shape-data invariant that a particular memory region from a
root $\symt$ satisfies $\bst$ with minimum value $\symmin$ and maximum
value $\symmax$ as a separating shape
graph\citeverb{laviron+2010:separating-shape} where the nodes represent
pointers (i.e., memory addresses), and the edges abstract memory
regions.  This thick arrow can be read as knowing
statically that the memory region from $\symt$ satisfies the
validation check described by $\bst$.

In inductive shape analysis, the key operations are (1) materializing
points-to constraints (i.e., abstractions of single memory cells) from
summarized regions by unfolding inductive predicates (left-to-right in
\figref{bst-unfoldfold}) and (2) summarizing memory regions by folding
into inductive predicates, using any number of unary abstraction or
binary widening operators (right-to-left).  To visually distinguish
non-pointer values in our graphs, we draw them as nodes without the
circle (e.g., the contents of $\fldv$ field from pointer $\symt$ is
value $\symv$ drawn without a circle).
Consider statically analyzing an iteration traversing into the middle
of a binary search tree with a cursor pointer \code{c}: in
\figref{bst-traverse}, we show a precise loop invariant for this
traversal at the point where the fields of the node pointed-to by
\code{c} have been materialized but \code{c} has yet to be advanced.
Here, we indicate that the program variable \code{c} contains the
address $\aaddr_{\varc}$ by annotating the program variable below the
node representing $\aaddr_{\varc}$.  We adopt the naming convention
that $\aaddr_{\var}$ is the symbolic address of the binary search tree
node pointed-to by program variable $\var$, and $\aval_{\var}$ is the
symbolic data value of that node (i.e., $\var$\texttt{->v}).  The
memory region between addresses $\symt$ and $\aaddr_{\varc}$ is
described by a $\bst$ segment~\cite{chang+2008:relational-inductive},
a form of separating implication, that intuitively corresponds to a
validation tree with a hole~\cite{DBLP:conf/popl/Minamide98} at $\aaddr_{\varc}$.
The thick arrow between nodes $\symt$ and $\aaddr_{\varc}$ can be read
as knowing statically that the memory region from $\symt$ satisfies
the validation check $\bstcall{\symt}{\symmin}{\symmax}$ up to
checking $\bstcall{\aaddr_{\varc}}{\symumin}{\symumax}$. The pair of
symbolic values $\symumin$ and $\symumax$ correspond to the lower and
upper bounds needed for a binary search tree rooted at
$\aaddr_{\varc}$ in order for the tree rooted at $\symt$ to satisfy
the validation check $\bstcall{\symt}{\symmin}{\symmax}$.  This $\bst$
segment summarizes the path that pointer \code{c} has already
traversed and is derived by folding points-to constraints materialized
on the previous iteration.  While shape analysis tools vary in how
they represent segments, having some mechanism to do so is fundamental.

In this paper, we assume the
invariant for any particular data structure is specified in a single
inductive definition (though multiple definitions for different kinds
of data structures are permitted); supporting multiple definitions
that specify different constraints over the same memory region is a
matter of enriching the underlying shape analysis
algorithm~\cite{DBLP:conf/vmcai/ToubhansCR13, DBLP:conf/cav/LeeYP11}.

\paragraph{Problem: Inductive Imprecision.}
The key observation to make in the rules shown
in \figref{bst-unfoldfold} is that folding (i.e., going
right-to-left) requires verifying a pure, non-memory constraint
(shaded) just to summarize a memory region into an intertwined
shape-data predicate like $\bst$ or $\bst$-segment.  In a shape-data
analyzer, such constraints are tracked by a base data-value abstract
domain, which is some combination of domains that derive facts in
particular logical theories\SQUEEZE{ (e.g., classically, linear
arithmetic~\cite{DBLP:conf/popl/CousotH78}%
)}.  Consider again the shape-data invariant in \figref{bst-traverse}:
the total ordering constraints on symbolic data values (shaded) must
be tracked by the base data-value domain in order to derive that the
whole memory region from $\symt$ still satisfies
$\bstcall{\symt}{\symmin}{\symmax}$. The two middle constraints
$\symumin' \leq \aval_{\varc} < \symumax'$ are simply from unfolding
at the current binary search tree node $\aaddr_{\varc}$; the two outer
equality constraints (underlined) are crucially important to connect
the regions before and after $\aaddr_{\varc}$.  These constraints must
be kept by joins and widens in the base data-value domain.  If any
imprecision creeps into the base domain and causes these constraints
to be dropped, then a sound analyzer cannot derive even the $\bst$
segment between $\symt$ and $\aaddr_{\varc}$ in the loop invariant.
Note that these inequality constraints are simply
placeholders for any data-value property of interest.  While precise
static derivation of inequality constraints may be quite feasible, it
is not difficult to imagine data-value properties that go beyond
today's solvers (e.g., strings, hashing).

\paragraph{Solution: Validated Views---A Static Shadow Heap.}
Inductive imprecision is problematic not only for static verification
but also for our incremental verification-validation approach.  In
particular, to soundly remove any dynamic checks, we at least need to
be able to derive statically the data structure regions that (1) have
been previously dynamically validated with a checker and (2) have not been
modified since their last validation, regardless of the sophistication
of the pure, data-value constraints.

In Figure \ref{fig:delmin-naive}, we show an example of how we statically
incrementalize the dynamic validation of the binary search tree. We
describe the code in more detail below, but for the moment, let us
focus on the invariant shown at line \ref{line:post-findmin}.
With the shape-data predicate $\bst$ and its additional data-value
parameters,
the $\bst$ segment to $\varp$ may fail to be
derived if there is any imprecision in maintaining the data invariant
corresponding to the shaded part shown in \figref{bst-traverse}. To
address this issue, we introduce the validated view abstract domain. A
validated view domain element is a partial map from symbolic addresses
to inductive predicates and is diagrammed as labels above the nodes in
the shape graph.  Roughly, the meaning of a view label
$\achkreg$ on a node $\aaddr_\varp$ is that a region of memory from
$\aaddr_\varp$
is also described by inductive predicate $\achkreg$ up to some other
labeled nodes. Nodes whose outgoing edges include an inductive summary
edge are implicitly labeled by that inductive predicate. For example
at \ref{line:post-findmin}, the label $\bst(\symu_1,
\symu_2)$ on $\aaddr_{\varp}$ says that
$\bstcall{\aaddr_{\varp}}{\symu_1}{\symu_2}$ holds for the
materialized region from $\aaddr_{\varp}$ up to the labels on
$\aaddr_1$ and $\aaddr_3$.

From the static analysis perspective, we remember with the view label
that the materialized region from $\aaddr_{\varp}$ resulted from an
unfolding of $\bstcall{\aaddr_{\varp}}{\symu_1}{\symu_2}$ and no
operations have occurred that would invalidate the inductively-assumed
data-value constraints.  Thus, regardless of any imprecision in the
base pure, data-value domain, we can use this information for folding.
Now contrast the labels on nodes $\aaddr_{\varp}$
at points~\ref{line:post-findmin} and~\ref{line:post-swap}: observe that the
$\bst$ instances have been dropped because of the pointer updates
between the two program points (made explicit by labeling with
$\top$).  Intuitively, this weakening means that the
inductively-assumed constraints may have been broken and thus any
folding requires verification using the base data-value domain.

Validated views can be seen as a combination ideas drawing from
separation logic~\cite{reynolds2002:separation-logic:} and predicate
abstraction~\cite{DBLP:conf/cav/GrafS97} where a set of
inductive-predicate labels further constrains the heap.  From the
separation logic perspective, validated view labels are a normalized
form of a formula involving separating implication $\lwand$ and
non-separating conjunction $\land$.  The validated view domain
provides arbitrary views of the same memory in terms of the set of
inductive checkers (see \secref{validated-views}).

\paragraph{Application: Static Incrementalization of Dynamic Validation Checks.}
Our static incrementalizer takes two inputs:
\begin{inparaenum}[(1)]
\item a dynamic validation check to incrementalize, such as
  \codex{language=Spec,alsolanguage=highlighting}{assert(|bst|(c,}
  \codex{language=Spec,alsolanguage=highlighting}{NEGINF,POSINF))}
    on \ref{line:origassert} in \ref{fig:delmin-naive}; and
\item the static shape graph at the assertion point (shown on
  \ref{line:post-swap}).
\end{inparaenum}
The output is the incrementalized \code[Spec]{assert} shown on
\ref{line:newassert-begin}--\ref{line:newassert-end} that we
explain below. 

(Note I didn't explicitly free the deleted node. This paragraph needs more
 revise.)
Procedure \code{bst_delmin} in Figure \ref{fig:delmin-naive} traverse the binary
search tree along the left links to its left-most node. Recall that the binary
search tree invariant implies this node contains the minimal value. This node
is dropped by the changing its parent points to its right child (while its left
child is an empty tree).
The developer can invoke \code{bst} to validate the invariant after the change,
but it is again inefficient because most of the tree is unmodified.
There are three major parts of the tree unmodified: the path from the root to the parent of the left-most node, the sub-trees below the updated node.
The allowable data range for these parts are relating to the traversed path.
Bookkeeping these portions of the modified heap and the relations between
unmodified portions could be tedious to the developers.

To perform static incrementalization, we require an instrumentation
that binds new program variables to appropriate concrete values for
all symbolic, logic variables in the given static shape graph. The
instrumentation process is analysis-directed, meaning the
logic-variable instrumentation is derived by augmenting the abstract
semantics of the shape analysis. For example, line \ref{line:inst-mapping}
updates the values of the instrumentation variables while \code{p} advances to
its left child in the next iteration.
The instrumentation process is described in \secref{logic-variable-instrumentation}.
Note that this instrumentation is the only memory overhead of our
approach, and because they are for the statically-known number of symbolic
variables appearing in the static shape graphs, they can be stack
allocated.

The incrementalized \code[Spec]{assert} shown on
\ref{line:newassert-begin}--\ref{line:newassert-end} is
derived by unfolding the \code[highlighting]{|bst|} definition to match
the static shape graph shown on \ref{line:post-swap}, except the
inductive summaries have been replaced with calls to incrementalized
checker and segment versions: \code[highlighting]{|_bst_ichk|} and
\code[highlighting]{|_bstbst_iseg|}, respectively.  Let us first consider
\code[highlighting]{|_bst_ichk|}, the incrementalized version of the
\code[highlighting]{|bst|} validation check, which takes the same
parameters as \code[highlighting]{|bst|} plus two additional
parameters \code{_min} and \code{_max}.
As a convention for presentation, we prefix formal parameters with an
underscore \code{_} (like \code{_min} and \code{_max}) to those that
should be bound to actual argument from logic-variable
instrumentation.
We instrument a call to
\code[highlighting]{|_bst_ichk|} when
\begin{inparaenum}[(1)]
\item we are \emph{obliged to validate dynamically} that the parameter
  \code{n} is a binary search tree satisfying
  \code[highlighting]{|bst|} for some bounds given by the \code{min}
  and \code{max} parameters, and
\item we \emph{know statically} that \code{n} is a binary search tree
  satisfying \code[highlighting]{|bst|} for some potentially different
  bounds given by the \code{_min} and \code{_max} parameters.
\end{inparaenum}
This potential difference is what enables incremental
verification-validation; it is the small ``escape hatch'' for static
verification. 

The segment version \code[highlighting]{|_bstbst_iseg|} follows the same
principle but adds an additional complication because of its endpoint.
We instrument a call to \code[highlighting]{|_bstbst_iseg|} when
\begin{inparaenum}[(1)]
\item we are \emph{obliged to validate dynamically} that the parameter
  \code{n} satisfies \code[highlighting]{|bst|} for bounds \code{min}
  and \code{max} up to the endpoint given by \code{_e}, and
\item we \emph{know statically} that \code{n} is a
  \code[highlighting]{|bst|} segment up to \code{_e} for potentially
  different bounds \code{_min} and \code{_max}.
\end{inparaenum}
The final two parameters \code{emin}, \code{emax} are used to return
the \code{min}, \code{max} values (in C-style) when the checking has
reached the endpoint as shown on \ref{line:iseg-end}.  These
obligation values are required by the caller to continue checking
after the segment.  For example, observe that the result locations
\code{u1__} and \code{u2__} passed to the
\code[highlighting]{|_bstbst_iseg|} call on \ref{line:newassert-begin}
are then used for the search tree validation check on
\ref{line:check-p} for the data value at \code{p}.  These
result locations are allocated on \ref{line:alloc-segparam}
and initialized to the values at the end of the segment in the static
shape graph
(i.e., $\symu_1$ and $\symu_2$).
This initialization
appropriately captures the situation when the dynamic validation
soundly stops early
before reaching the endpoint
$\aaddr_{\varp}$
In this case, we know that the \code{min}, \code{max}
parameters at $\aaddr_{\varp}$ must correspond to what is derived statically
from the shape graph
in $\symu_1$ and $\symu_2$.
The rest of
\code[highlighting]{|_bstbst_iseg|} on \ref{line:iseg-docheck}
continues the checking and is exactly like in
\code[highlighting]{|_bst_ichk|} except with recursive calls to
\code[highlighting]{|_bstbst_iseg|}.  Intuitively, we can view
\code[highlighting]{|_bstbst_iseg|} as searching for the endpoint
\code{_e} that is statically known to be in \code{n}'s reachable heap.
Assuming no early return, we know from the static analysis that one of
the recursive calls will terminate at \code{_e} while the other will
check until reaching $\nullval$s.

[to remove] Now let us consider how this
instrumentation yields incremental
dynamic validation of the data structure invariant check at
\reftxt{line}{line-origassert} in \code{excise_root}.  While the
swapping of \code{c} for \code{t} does not affect the binary search
tree invariant for all nodes, it is still drastic---affecting the bound
constraints on $O(\lg n)$ nodes.  We want our incrementalized
validation check to touch only those $O(\lg n)$ nodes.  The check on
\reftxt{line}{line-c-check} validates the ordering invariant at
node \code{c} that its data value is between \code{NEGINF} and
\code{POSINF} as required by the \code[Spec]{assert}. The left child
\code{c->l} is now $\syma_1$ where we must validate that it is a
binary search tree upper-bounded by \code{c->v}.  What we know
statically is only that it is upper-bounded by $\symv_{\vart}$ (i.e.,
\code{t->v}), which is translated directly to the
\code[highlighting]{|_bst_ichk|} on \reftxt{line}{line-cl-ichk}.  This
upper bound change affects just the right spine from $\syma_1$, so the
\code[highlighting]{|_bst_ichk|} recursive calls that go left will
immediately return hitting the early termination condition.  The
\code[highlighting]{|_bstbst_iseg|} call \reftxt{line}{line-cr-iseg} for
\code{c->r} will follow the left spine to \code{p}.  Then, we
check \code{p} on \reftxt{line}{line-p-check} as described above.  Now
for the \code{p->l} sub-tree, just the lower bound has changed causing
a validation walk down the left spine on \reftxt{line}{line-pl-ichk}.
The context for \code{p->r} sub-tree is unchanged, so the call on
\reftxt{line}{line-pr-ichk} returns immediately.  The incremental
validation walks $O(\lg n)$ nodes as desired---two paths from the
new root \code{c}.

Observe that our incremental verification-validation approach is much
more fine-grained then the standard optimization recipe: ``Try to
statically prove the assert. If successful, then remove it. Otherwise,
leave it in for dynamic validation.'' With data structure validation,
failing to prove an inductive predicate means leaving in a full walk
over the entire data structure. Also note that while manually writing
the incrementalized versions of any validation checker (e.g.,
\code[highlighting]{|bst_ichk|}) is mechanical, static analysis is
crucially needed to (1) determine how to call these incremental
checkers and (2) obtain the logic variable instrumentation to do so.

As demonstrated in this section, it is important for our techniques to trace the
unmodified heap regions. The validation view is our improvement in the static
analysis to compensate the arbitrary precision loss. This advantage is critical
for invariants with complex data-value constraints. For example, data structures
like hash trie involves bit shifting and masking, such as
\code{(key >> keypartbits) & 0x3}, to extract certain bits from the hashed key.
In addition, rehashing, like \code{rehash(key, level)}, is used to resolve the
collision of hashed keys. These operations make reasoning the path from the root
to the hashed key challenging for data-value reasoning. The validation view
keeps these relationship even when there is arbitrary precision loss in the
data-value domain.

\fi 

\end{document}